\documentclass[runningheads,orivec]{llncs}
\usepackage[T1]{fontenc}

\newif\iflong

\usepackage{xspace}
\usepackage{orcidlink}
\usepackage{amssymb}
\usepackage{mathtools}
\usepackage{nicefrac}
\usepackage{multirow,makecell}
\usepackage{multibib}

\usepackage{graphicx}
\usepackage{tikz}
\usetikzlibrary{positioning, arrows.meta, calc, decorations.pathreplacing, fit, backgrounds}
\usepackage{xcolor}

\usepackage{hyperref}
\hypersetup{
    pdfencoding=auto, 
    psdextra,
    colorlinks=true,
    citecolor=green!40!black,
    linkcolor=blue!90!black,
    urlcolor=blue!80!black
}

\usepackage{cleveref}
\usepackage{cite}
\usepackage{paralist}

\usepackage[inline]{enumitem}

\usepackage{tabularx}
\usepackage{booktabs}
\newcolumntype{Y}{>{\centering\arraybackslash}X}

\usepackage{comment}
\usepackage{todonotes}

\newcommand{\probName}[1]{\textsc{#1}\xspace}
\def\IRSCCfo{\probName{IR-FO-SCC}}
\def\IRFSCfo{\probName{IR-FO-FSC}}
\def\IRBSCfo{\probName{IR-FO-BSC}}
\def\IRkfo{\probName{IR-FO-$k$}}
\def\IRSCCeo{\probName{IR-EO-SCC}}

\def\IRBSCeo{\probName{IR-EO-BSC}}
\def\IRkeo{\probName{IR-EO-$k$}}

\newcommand{\agents}{A}
\newcommand{\numAgents}{n}

\newcommand{\friends}{\mathcal{F}}
\newcommand{\enemies}{\mathcal{E}}
\newcommand{\neutrals}{\mathcal{N}}

\newcommand{\numConflicting}{\xi}

\DeclareMathOperator{\cFn}{c}

\newcommand{\numCoals}{k}
\DeclareMathOperator{\ub}{ub}
\DeclareMathOperator{\lb}{lb}

\newcommand{\pttn}{\pi}
\newcommand{\wt}[1]{\widetilde{#1}}
\newcommand{\T}{\mathcal{T}}
\def\ind{\mathsf{ind}}

\DeclareMathOperator{\tw}{tw}
\DeclareMathOperator{\vc}{vc}
\DeclareMathOperator{\cw}{cw}

\newcommand{\true}{\texttt{true}\xspace}
\newcommand{\false}{\texttt{false}\xspace}
\newcommand{\No}{\emph{no}\xspace}
\newcommand{\Yes}{\emph{yes}\xspace}
\newcommand{\DP}{\operatorname{DP}}

\newcommand{\Oh}[1]{{\mathcal{O}\mathopen{}\left(#1\right)}}

\NewDocumentCommand{\cc}{ O{} O{} m }{\mbox{%
    \expandafter\ifx\expandafter\relax\detokenize{#2}\relax\else{#2-}\fi%
    \textsf{#3}%
    \expandafter\ifx\expandafter\relax\detokenize{#1}\relax\else{-#1}\fi%
    }\xspace}
\newcommand{\NP}{\cc{NP}}
\newcommand{\NPh}{\cc[hard]{NP}}
\newcommand{\NPhness}{\cc[hardness]{NP}}
\newcommand{\NPc}{\cc[complete]{NP}}
\newcommand{\NPcness}{\cc[completeness]{NP}}
\newcommand{\FPT}{\cc{FPT}}
\newcommand{\XP}{\cc{XP}}
\newcommand{\W}[1][1]{\cc{W[#1]}}
\newcommand{\Wh}[1][1]{\cc[hard]{W[#1]}}
\newcommand{\Whness}[1][1]{\cc[hardness]{W[#1]}}

\newcommand{\pNPh}{\cc[hard][para]{NP}}

\spnewtheorem{observation}{Observation}{\bfseries}{\itshape}
\Crefname{observation}{Observation}{Observations}

\spnewtheorem{claim}{Claim}{\bfseries}{\itshape}
\Crefname{claim}{Claim}{Claims}
\let\doendproof\endproof
\renewcommand\endproof{~\hfill$\square$\doendproof}

\usepackage{pict2e}
\newcommand{\opentriangle}{%
  \raisebox{0.2pt}{\makebox[0.77778em]{%
    \setlength{\unitlength}{0.6em}%
    \linethickness{0.4pt}%
    \begin{picture}(1,1)
    \polygon(0,0)(1,0)(1,1)
    \end{picture}%
  }}%
}
\newenvironment{proofsketch}[1]{\emph{Proof sketch.}\hspace{0.15cm}#1}{\hfill$\opentriangle$\medskip}

\newenvironment{claimproof}[1]{\emph{Proof.}\hspace{0.15cm}#1}{\hfill$\blacktriangleleft$\medskip}

\newcommand{\proofsubparagraph}[1]{\smallskip\noindent{\textbf{#1.}}\hspace{0.25cm}}

\newcommand{\proofcase}[3][Case]{\smallskip{\emph{#1 #2: #3}}\hspace{0.15cm}}

\usepackage[title]{appendix}
\usepackage{thm-restate}
\newcommand{\linkproof}[1]{%
    \hyperref[app:proof:#1]{$\star$}%
}

\newcommand{\toappendix}[1]{%
  \gappto{\appendixtext}{
    {#1}
   }
}

\newcommand{\prooftoappendix}[3]{%
  \gappto{\appendixtext}{
    \subsection{Proof of \Cref{#1}}\label{app:proof:#1}
    #2
    \begin{proof}
    #3\end{proof}
  }
}

\newcommand{\appendixsection}[1]{%
  \gappto{\appendixtext}{
    \section{Additional material for \Cref{#1}}
    \label{app:#1}
  }
}

\newcites{app}{References for the Appendices}

\definecolor{cbBlue}{HTML}{332288}
\definecolor{cbOrange}{HTML}{DDCC77}
\definecolor{cbGreen}{HTML}{44AA99}
\definecolor{cbRed}{HTML}{882255}

\begin{document}

\title{Individual Rationality in Constrained Hedonic Games: Friends, Enemies, and Neutrals}
\titlerunning{Individual Rationality in Constrained Hedonic Games}

\author{%
    Šimon Schierreich\inst{1,2}\orcidlink{0000-0001-8901-1942} \and
    Ildikó Schlotter\inst{3,4}\orcidlink{0000-0002-0114-8280}
}
\authorrunning{Šimon Schierreich and Ildikó Schlotter}
\institute{%
    AGH University of Krakow, Poland \and
    Czech Technical University in Prague, Czechia \and
    ELTE Centre for Economic and Regional Studies, Hungary \and
    Budapest University of Technology and Economics, Hungary\\
    \email{schiesim@fit.cvut.cz}, \email{schlotter.ildiko@krtk.elte.hu}%
}

\maketitle

\begin{abstract}
    We study constrained coalition formation in games induced by \emph{friends}, \emph{enemies}, and \emph{neutrals}, under the two standard refinements of additively separable preferences: \emph{friend-oriented} and \emph{enemy-oriented}. We ask for partitions that are \emph{individually rational} (IR), while additionally requiring exactly~$\numCoals$ non-empty coalitions, each satisfying a prescribed lower and upper bound on its size. Although IR alone is trivial to satisfy for any hedonic game, the size constraints make it computationally intractable to decide whether a feasible partition exists.
    
    The two models tell strikingly different stories. Under enemy-oriented preferences, the problem collapses to size-constrained graph coloring, and its complexity follows accordingly. Under friend-oriented preferences, however, the picture is far more intricate, and is governed by the \emph{enmity structure} rather than the friendships. The complexity is further shaped by two factors: how strict the imposed size requirements are, and whether relationships are \emph{symmetric} or \emph{asymmetric}, with several cases turning out tractable in the symmetric setting but intractable once asymmetry is allowed. Charting this boundary in terms of both classical and parameterized complexity, we provide a complete understanding of which properties of the friend/enemy structure are responsible for hardness.

    \keywords{Coalition Formation \and Hedonic Games %
    \and Friends and Enemies \and Individual Rationality \and Parameterized Complexity.}
\end{abstract}

\section{Introduction}

Imagine that you are running a summer camp and must assign the arriving children to the camp's cabins for the week. The children are far from indifferent about who they bunk with: some are close friends who would love to stay together, a few pairs genuinely dislike each other and would make life miserable for their whole cabin if put together, and most are simply indifferent to the rest. The assignment is also bound by hard practical limits. Each cabin has a maximum capacity; no cabin should be opened with only a child or two inside, and the number of cabins available for the week is fixed in advance.

This situation, and many others like it, captures the essence of the broader problem of \emph{coalition formation}. The task is to partition a set of agents into \emph{coalitions} so that the resulting configuration is \emph{stable}\footnote{Other desirable criteria are also studied; e.g., various notions of \emph{efficiency}~\cite{Bullinger2020,GanianHKRSS2023,BullingerCS2025}.}---that is, no agent would prefer to deviate from their current coalition. If we further assume that the agents are \emph{self-interested} and care only about the composition of their own coalition, we obtain the well-known model of \emph{hedonic games}~\cite{DrezeG1980,AzizS2016}. In this work, we study hedonic games induced by \emph{friends}, \emph{enemies}, and \emph{neutrals}~\cite{DimitrovBHS2006,OhtaBISY2017,ChenCRS2023,RotheSS2018}: each agent partitions the others into those they like, those they dislike, and those they are indifferent to. This class is a special case of a more general model of \emph{additively separable hedonic games} (ASHGs)~\cite{BogomolnaiaJ2002,BanerjeeKS2001}. Following the literature, we consider the two canonical refinements of such preferences, \emph{friend-oriented} and \emph{enemy-oriented}, which differ in whether an agent primarily seeks the company of their friends or primarily seeks to avoid their enemies.

The notion of stability we adopt is that of \emph{individual rationality} (IR)~\cite{Roth1977,AshlagiR2011,DeligkasEKS2024b}. %
Intuitively, IR requires that no agent prefers leaving their current coalition and forming a coalition on their own. Traditionally, any coalition structure is acceptable as long as it meets the chosen stability criterion. In many real-world scenarios, however---from assigning children to camp cabins to forming student project teams or assigning researchers to working groups---we face additional structural constraints on feasible outcomes:
\begin{enumerate}
    \item We require that the resulting coalition structure consists of exactly~$\numCoals$ non-empty coalitions. This constraint was introduced by Sless \emph{et al.}~\cite{SlessHKW2018} and subsequently studied by Waxman \emph{et al.}~\cite{WaxmanKH2021} and Barr \emph{et al.}~\cite{BarrTKRH24}.
    \item The size of each of the $k$ coalitions must be within its own lower and upper bound.
\end{enumerate}
It is easy to see that the first constraint is a special case of the second, where the lower bound of each coalition is one, and the upper bound equals the number of agents. The general model combining these constraints with IR was introduced by Fioravantes \emph{et al.}~\cite{FioravantesGMS2026a} for ASHGs, building on the fixed-size coalitions of Bil\'o \emph{et al.}~\cite{BiloMM2022} and a line of related work on fixed- and equal-size coalitions~\cite{LiMNS2023,DeligkasEIKS2025,AgarwalARN2025} (see also the recent model of Bullinger \emph{et al.}~\cite{BullingerDEG2025} with global size constraints). The particular structural restrictions we study here were subsequently investigated by Fioravantes \emph{et al.}~\cite{FioravantesGMS2026b} for anonymous and diversity preferences~\cite{BredereckEI2019,GanianHKSS2023}.%

\begin{table}[bt!]
    \caption{An overview of our complexity results for friend-oriented preferences.
    Here, 
    $k$ is the number of desired coalitions, 
    $\ub=\max_{j \in [k]} \ub(j)$, 
    $r$ denotes the number of recluses (i.e., agents~$a$ with $\friends_a=\emptyset$ and $\enemies_a \neq \emptyset$),
    $\vc$ (and $\vc^\enemies$), $\tw$, and $\cw$ the vertex cover number of the relationship graph (the enmity graph, resp.), its treewidth, and its clique-width,
    and $\numConflicting$ the number of conflicting agents (i.e., agents~$a$ with~$\enemies_a \neq \emptyset$);
    in the rows labeled using~$|\enemies_a|$, the stated bound on %
    $|\enemies_a|$
    applies to all agents ${a\in\agents}$.
    Entries of the form ``X\,/\,Y'' display the complexity for symmetric instances (left of the slash) and for asymmetric instances (right); a single entry applies to both settings.
    \NP-h (\W-h) means para-\NPhness (\Whness) for the given parameter value.}
    \label{tab:results}
    \centering\small
    \renewcommand{\arraystretch}{1.2}
    \newcommand{\poly}{\textsf{P}}
    \renewcommand{\NPc}{\NP-h}
    \renewcommand{\Wh}[1][1]{\W[#1]-h}
    \begin{tabularx}{\linewidth}{>{\centering}p{1.3cm}Y@{}Y@{}Y@{}Y@{}}
            \toprule
            & \IRkfo
            & \IRBSCfo
            & \IRFSCfo
            & \IRSCCfo
            \\\midrule
        $\numCoals = 1$ 
            & \poly\,{\tiny[O\ref{thm:FO:SCC:grandCoal:poly}]}
            & \poly\,{\tiny[O\ref{thm:FO:SCC:grandCoal:poly}]}
            & \poly\,{\tiny[O\ref{thm:FO:SCC:grandCoal:poly}]}
            & \poly\,{\tiny[O\ref{thm:FO:SCC:grandCoal:poly}]}
            \\
        $\numCoals \geq 2$
            & \NPc\,{\tiny[T\ref{thm:FO:k:twoCoals:NPh}]} 
            & \NPc\,{\tiny[T\ref{thm:FO:BSC:twoCoals:NPh}]}
            & \NPc\,{\tiny[T\ref{thm:FO:BSC:twoCoals:NPh}]}
            & \NPc\,{\tiny[T\ref{thm:FO:BSC:twoCoals:NPh}]} 
            \\\midrule
        $\ub\leq2$
            & $\nicefrac{\text{N}}{\text{A}}$ 
            & \poly\,{\tiny[T\ref{thm:FO:SCC:ubAtMostTwo:poly}]}
            & \poly\,{\tiny[T\ref{thm:FO:SCC:ubAtMostTwo:poly}]}
            & \poly\,{\tiny[T\ref{thm:FO:SCC:ubAtMostTwo:poly}]}
            \\
        $\ub\geq3$
            & $\nicefrac{\text{N}}{\text{A}}$ 
            & \NPc\,{\tiny[T\ref{thm:FO:BSC:ubAtMostThree:NPc}]}
            & \NPc\,{\tiny[T\ref{thm:FO:BSC:ubAtMostThree:NPc}]}
            & \NPc\,{\tiny[T\ref{thm:FO:BSC:ubAtMostThree:NPc}]}
            \\\midrule
        $r = 0$
            & \poly\,{\tiny[T\ref{thm:FO:k:symmetric:atLeastOneFriend:poly}]}/\NPc\,{\tiny[T\ref{thm:FO:k:asymmetric:atLeastOneFriend:NPh}]} 
            & \NPc\,{\tiny[T\ref{thm:FO:BSC:ubAtMostThree:NPc}]}
            & \NPc\,{\tiny[T\ref{thm:FO:BSC:ubAtMostThree:NPc}]}
            & \NPc\,{\tiny[T\ref{thm:FO:BSC:ubAtMostThree:NPc}]}
            \\
        $r = 1 $
            & ?/\NPc\,{\tiny[T\ref{thm:FO:k:asymmetric:atLeastOneFriend:NPh}]}
            & \NPc\,{\tiny[T\ref{thm:FO:BSC:ubAtMostThree:NPc}]}
            & \NPc\,{\tiny[T\ref{thm:FO:BSC:ubAtMostThree:NPc}]}
            & \NPc\,{\tiny[T\ref{thm:FO:BSC:ubAtMostThree:NPc}]}
            \\
        $r \geq 2$
            & \NPc\,{\tiny[T\ref{thm:FO:k:twoCoals:NPh}]} 
            & \NPc\,{\tiny[T\ref{thm:FO:BSC:ubAtMostThree:NPc}]}
            & \NPc\,{\tiny[T\ref{thm:FO:BSC:ubAtMostThree:NPc}]}
            & \NPc\,{\tiny[T\ref{thm:FO:BSC:ubAtMostThree:NPc}]}
            \\\midrule
        $\vc$
            & \FPT\,{\tiny[T\ref{thm:FO:k:treewidth:FPT}]}
            & \FPT\,{\tiny[T\ref{thm:FO:FSC:vc:FPT}]}
            & \FPT\,{\tiny[T\ref{thm:FO:FSC:vc:FPT}]} 
            & ? 
            \\
        $\tw+k$
            & \FPT\,{\tiny[T\ref{thm:FO:k:treewidth:FPT}]}
            & \Wh, \XP\,{\tiny[T\ref{thm:FO:SCC:treewidth:XP},T\ref{thm:FO:BSC:treedepth:Wh:collapsed}]}
            & \Wh, \XP\,{\tiny[T\ref{thm:FO:SCC:treewidth:XP},T\ref{thm:FO:BSC:treedepth:Wh:collapsed}]}
            & \Wh, \XP\,{\tiny[T\ref{thm:FO:SCC:treewidth:XP},T\ref{thm:FO:BSC:treedepth:Wh:collapsed}]}
            \\
        $\tw$
            & \FPT\,{\tiny[T\ref{thm:FO:k:treewidth:FPT}]}
            & \Wh\,{\tiny[T\ref{thm:FO:BSC:treedepth:Wh:collapsed}]}
            & \Wh \,{\tiny[T\ref{thm:FO:BSC:treedepth:Wh:collapsed}]}& \Wh \,{\tiny[T\ref{thm:FO:BSC:treedepth:Wh:collapsed}]}
            \\
        $\operatorname{cw}$
            & \Wh\,{\tiny[T\ref{thm:FO:k:cliquewidth:Wh:collapsed}]}
            & \NPc\,{\tiny[T\ref{thm:FO:k:cliquewidth:Wh:collapsed}]}
            & \NPc\,{\tiny[T\ref{thm:FO:k:cliquewidth:Wh:collapsed}]}
            & \NPc\,{\tiny[T\ref{thm:FO:k:cliquewidth:Wh:collapsed}]}
            \\\midrule
        $\vc^\enemies=1$
            & \poly\,{\tiny[T\ref{thm:FO:k:enmityVcAtMostOne:poly}]}
            & \poly\,{\tiny[T\ref{thm:SSC:k:symmetric:enmityVcAtMostOne:poly}]}/\NPc\,{\tiny[T\ref{thm:BSC:k:enmityVcAtMostOne:NPh}]}
            & \poly\,{\tiny[T\ref{thm:SSC:k:symmetric:enmityVcAtMostOne:poly}]}/\NPc\,{\tiny[T\ref{thm:BSC:k:enmityVcAtMostOne:NPh}]}
            & \poly\,{\tiny[T\ref{thm:SSC:k:symmetric:enmityVcAtMostOne:poly}]}/\NPc\,{\tiny[T\ref{thm:BSC:k:enmityVcAtMostOne:NPh}]}
            \\
        $\vc^\enemies=2$
            & \poly\,{\tiny[T\ref{thm:FO:k:symmetric:enmityVcAtMostThree:poly}]}/\NPc\,{\tiny[T\ref{thm:FO:k:enmityVcAtLeastTwo:NPh:collapsed}]}
            & ?/\NPc\,{\tiny[T\ref{thm:FO:k:enmityVcAtLeastTwo:NPh:collapsed}]}
            & ?/\NPc\,{\tiny[T\ref{thm:FO:k:enmityVcAtLeastTwo:NPh:collapsed}]}
            & ?/\NPc\,{\tiny[T\ref{thm:FO:k:enmityVcAtLeastTwo:NPh:collapsed}]}
            \\
        $\vc^\enemies=3$
            & \poly\,{\tiny[T\ref{thm:FO:k:symmetric:enmityVcAtMostThree:poly}]}/\NPc\,{\tiny[T\ref{thm:FO:k:enmityVcAtLeastTwo:NPh:collapsed}]}
            & \NPc\,{\tiny[T\ref{thm:FO:BSC:symmetric:enmityVcAtLeastThree:NPh:collapsed}]}
            & \NPc\,{\tiny[T\ref{thm:FO:BSC:symmetric:enmityVcAtLeastThree:NPh:collapsed}]}
            & \NPc\,{\tiny[T\ref{thm:FO:BSC:symmetric:enmityVcAtLeastThree:NPh:collapsed}]}
            \\
        $\vc^\enemies\geq 4$
            & \NPc\,{\tiny[T\ref{thm:FO:k:enmityVcAtLeastFour:NPh:collapsed}]}
            & \NPc\,{\tiny[T\ref{thm:FO:k:enmityVcAtLeastFour:NPh:collapsed}]}
            & \NPc\,{\tiny[T\ref{thm:FO:k:enmityVcAtLeastFour:NPh:collapsed}]}
            & \NPc\,{\tiny[T\ref{thm:FO:k:enmityVcAtLeastFour:NPh:collapsed}]}
            \\\midrule
        $\numConflicting$
            & \FPT\,{\tiny[T\ref{thm:FO:k:numConflicting:FPT}]}
            & \FPT\,{\tiny[T\ref{thm:FO:k:numConflicting:FPT}]}
            & \FPT\,{\tiny[T\ref{thm:FO:k:numConflicting:FPT}]}
            & ?
            \\\midrule
        $|\enemies_a| = 0$
            & \poly\,{\tiny[O\ref{thm:FO:SCC:noEnemies:poly}]}
            & \poly\,{\tiny[O\ref{thm:FO:SCC:noEnemies:poly}]}
            & \poly\,{\tiny[O\ref{thm:FO:SCC:noEnemies:poly}]}
            & \poly\,{\tiny[O\ref{thm:FO:SCC:noEnemies:poly}]}
            \\
        $|\enemies_a| = 1$
            & \poly\,{\tiny[T\ref{thm:FO:k:oneEnemy:poly}]}/%
            \NPc\,{\tiny[T\ref{thm:FO:k:enmityVcAtLeastTwo:NPh:collapsed}]}
            & \poly\,{\tiny[T\ref{thm:FO:BSC:oneEnemy:poly}]}/%
            \NPc\,{\tiny[T\ref{thm:BSC:k:enmityVcAtMostOne:NPh}]}
            & ?/\NPc\,{\tiny[T\ref{thm:BSC:k:enmityVcAtMostOne:NPh}]}  
            & ?/\NPc\,{\tiny[T\ref{thm:BSC:k:enmityVcAtMostOne:NPh}]} \\
        $|\enemies_a| = 2$
            & ?/\NPc\,{\tiny[T\ref{thm:FO:k:enmityVcAtLeastTwo:NPh:collapsed}]} & ?/\NPc\,{\tiny[T\ref{thm:BSC:k:enmityVcAtMostOne:NPh}]} & ?/\NPc\,{\tiny[T\ref{thm:BSC:k:enmityVcAtMostOne:NPh}]} & ?/\NPc\,{\tiny[T\ref{thm:BSC:k:enmityVcAtMostOne:NPh}]} \\
        $|\enemies_a| = 3$
            & ?/\NPc\,{\tiny[T\ref{thm:FO:k:enmityVcAtLeastTwo:NPh:collapsed}]} 
            & ?/\NPc\,{\tiny[T\ref{thm:BSC:k:enmityVcAtMostOne:NPh}]} 
            & \NPc\,{\tiny[T\ref{thm:FO:FSC:threeEnemies:NPh}]} 
            & \NPc\,{\tiny[T\ref{thm:FO:FSC:threeEnemies:NPh}]} 
            \\
        $|\enemies_a| \geq 4$
            & \NPc\,{\tiny[T\ref{thm:FO:k:fourEnemies:NPh}]} 
            & \NPc\,{\tiny[T\ref{thm:FO:k:fourEnemies:NPh}]} 
            & \NPc\,{\tiny[T\ref{thm:FO:k:fourEnemies:NPh}]} 
            & \NPc\,{\tiny[T\ref{thm:FO:k:fourEnemies:NPh}]}
            \\\bottomrule
    \end{tabularx}%
\end{table}

\subsection{Our Contribution}
We provide a detailed study of individual rationality under the two structural constraints described above, considering four variants that differ in how the coalition sizes are restricted. In increasing order of generality, these are: the \emph{$\numCoals$-coalition} variant, which only fixes the number of non-empty coalitions; the \emph{balanced-size} (BSC) variant, where any two coalition sizes may differ by at most one; the \emph{fixed-size} (FSC) variant, where each coalition must have a prescribed size; and the \emph{size-constrained} (SCC) variant, where each coalition is assigned its own lower and upper bounds. For each of these, we chart the complexity of deciding whether an individually rational outcome exists, as summarized in \Cref{tab:results}. We approach the problem from several dimensions: the number of desired coalitions and the permitted coalition sizes; whether relationships are symmetric or asymmetric;   the maximum number of enemies per agent; the number of conflicting agents (those having enemies); the number of recluses (agents having enemies but no friends); and structural parameters of the underlying relationship and enmity graphs, namely their vertex cover number, treewidth, and clique-width.

Our first finding is that the two preference orientations could hardly behave more differently. Under enemy-oriented preferences, friendships provably play no role: all four variants collapse to size-constrained versions of the classic \probName{$k$-Coloring} problem (\Cref{sec:enemy-oriented}), and their complexity is thus inherited from the graph-coloring literature. Under friend-oriented preferences, no such collapse occurs---the complexity is governed by the enmity structure in a far subtler way, and charting this landscape constitutes the technical core of the paper.

Since friend-oriented (and enemy-oriented) preferences form a special case of additively separable hedonic games, our results directly extend those of Fioravantes \emph{et al.}~\cite{FioravantesGMS2026a} to this prominent preference domain. Our techniques, however, are quite different. Whereas Fioravantes \emph{et al.} build their algorithms largely on $N$-fold integer formulations, we draw on a broad range of tools from the parameterized complexity toolbox---including color coding, bounded search trees, and dynamic programming over nice tree decompositions. On the negative side, our hardness reductions start from a variety of source problems and, owing to the nuanced picture we paint, require careful, technically involved constructions.

\subsection{Related Work}
\emph{Friends-and-enemies games} were introduced by Dimitrov \emph{et al.}~\cite{DimitrovBHS2006}. Ohta \emph{et al.}~\cite{OhtaBISY2017} later extended the model with neutral agents. The existence and verification of stable outcomes in this domain have since been studied for a range of stability notions~\cite{LangRRSS2015,RotheSS2018,ChenCRS2023}. Importantly, Chen \emph{et al.}~\cite{ChenCRS2023} studied parameterized complexity of deciding the existence of stable solutions for the friends--enemies--neutrals model. None of these works, however, considers individual rationality and/or structural constraints on the outcome.

Friends-and-enemies games form a subclass of \emph{additively separable hedonic games} (ASHGs)~\cite{BogomolnaiaJ2002} which, together with \emph{fractional hedonic games}, (FHGs)~\cite{AzizBBHOP2019} are among the most studied classes of hedonic games; see, e.g., \cite{PetersE2015,BrandtBT2024,FioravantesGM2025}. As IR is trivially satisfiable in any such game without further restrictions, these works focus on more demanding notions of stability. An exception is the work of Fioravantes \emph{et al.}~\cite{FioravantesGMS2026a}, who studied the same model as we do, albeit for the more general class of ASHGs. Their results, however, do not pertain to our setting: their hardness results do not transfer to a restricted preference domain, and since expressing friend- or enemy-oriented preferences by additive valuations requires weights linear in the number of agents, their algorithmic results---which achieve fixed-parameter tractability only when the maximum weight is taken as an additional parameter---yield at best \XP{} algorithms in our setting. A similar model was also studied by Fioravantes \emph{et al.}~\cite{FioravantesGMS2026b}; however, for a different preference model incomparable with ours.

More broadly, across the many subclasses of hedonic games, individual rationality is rarely studied as a solution concept: either no coalition is worse than the singleton, so IR is trivially achievable even under our constraints, or it was simply not explored. The one exception we are aware of is the work of Deligkas \emph{et al.}~\cite{DeligkasEKS2024b}, who study IR in \emph{topological distance games}~\cite{BullingerS2023}; that model is incomparable to ours, and their (mostly negative) results do not carry over.

Restrictions on the admissible coalition structures other than ours have also been considered. Beyond the $\numCoals$-coalition and fixed-size models discussed above, Wright and Vorobeychik~\cite{WrightV2015} introduced a bounded maximum coalition size, studied further in several follow-ups~\cite{LevingerAH2023,GanianHKSS2023,FioravantesGM2025}. Even under this restriction, however, IR remains trivial, as one may always form arbitrarily many singletons.

\section{Preliminaries}
\label{sec:preliminaries}
\appendixsection{sec:preliminaries}

We assume the reader is familiar with the basics of graph theory~\cite{diestel-book} and both classical and parameterized complexity~\cite{cyg-fom-kow-lok-mar-pil-pil-sau:b:parameterized-algorithms}\iflong{}; in \Cref{sec:param_complexity}, we give a brief overview of the parameterized-complexity notions we use in this work.\else{}.\fi{}

We consider a set $\agents$ of $\numAgents$ \emph{agents}. Each agent $a\in\agents$ is associated with a set of \emph{friends} $\friends_a\subseteq \agents\setminus\{a\}$ and a set of \emph{enemies} $\enemies_a\subseteq \agents\setminus\{a\}$ so that $\friends_a \cap \enemies_a = \emptyset$. Note that $(\friends_a,\enemies_a)$ is not necessarily a partition of $\agents\setminus\{a\}$:
agents in $\agents\setminus ( \friends_a \cup \enemies_a \cup \{a\} )$ are \emph{neutrals} for~$a$; we denote their set 
as~$\neutrals_a$. We can encode the friends--enemies--neutral relations using a directed edge-colored \emph{relationship graph} $G=(\agents,E,\cFn)$, where for each $a$ and every $b\in \friends_a \cup \enemies_a$, $E$ contains an arc~$(a,b)$. This arc is colored \emph{blue} if $b$ is a friend of $a$ and \emph{red} if $b$ is~$a$'s enemy.
We use $G^\friends$ to denote the \emph{friendship graph}, i.e., the graph $G$ reduced to the blue arcs, and we use $G^\enemies$ to denote the \emph{enmity graph} constructed from~$G$ by leaving only the red arcs.
Agents' relations are \emph{symmetric} if $b \in \friends_a \iff a \in\friends_b$ and $b \in \enemies_a \iff a \in \enemies_b$ hold for every 
$a,b\in\agents$. In this case, the graph $G$ is undirected.  

Any subset $C \subseteq A$ of agents $\agents$ is called a \emph{coalition}. We say that $C\subseteq \agents$ is a \emph{singleton}  if $|C| = 1$, and  $C$ is the \emph{grand coalition} if $C=\agents$. By $\mathcal{C}_a$ we denote the set of all coalitions that contain agent $a\in\agents$. The preferences of an agent $a\in\agents$ is a weak ordering $\succeq_a$ over $\mathcal{C}_a$. For two coalitions $C,D \in \mathcal{C}_a$, we use $C \succ_a D$ to represent that agent $a$ strictly prefers coalition $C$ over $D$, i.e., $C \succeq_a D$ and $D \not\succeq_a C$. Similarly, we use $C \sim D$ if agent~$a$ is indifferent between~$C$ and~$D$, that is, $C \succeq_a D$ and $D \succeq_a C$. A list $\mathcal{P} = (\succeq_a)_{a\in\agents}$ is called the \emph{preference profile}. Throughout the paper, we consider the following two restrictions of agents' preferences based on the sets of friends and enemies:

\begin{itemize}
    \item A preference profile $\mathcal{P}$ is \emph{friend-oriented} if for every $a\in\agents$ and each pair of~$C,D\in\mathcal{C}_a$ it holds that
    $C \succeq_a D$ if and only if (1) $|C \cap \friends_a| > |D \cap \friends_a|$ or (2)~$|C \cap \friends_a| = |D \cap \friends_a|$ and $|C \cap \enemies_a| \leq  |D \cap \enemies_a|$.
    \item A preference profile $\mathcal{P}$ is \emph{enemy-oriented} if for every $a\in\agents$ and each pair of~$C,D\in\mathcal{C}_a$ it holds that
    $C \succeq_a D$ if and only if (1) $|C \cap \enemies_a| < |D \cap \enemies_a|$ or (2)~$|C \cap \enemies_a| = |D \cap \enemies_a|$ and $|C \cap \friends_a| \geq  |D \cap \friends_a|$.
\end{itemize}

A \emph{coalition structure}~$\pttn = (\pttn_1,\ldots,\pttn_\numCoals)$ is a partition of agents into coalitions, that is, $\bigcup_{i=1}^\numCoals \pttn_i = \agents$ and $\pttn_i \cap \pttn_{j} = \emptyset$ for each pair of distinct $i,j\in[\numCoals]$. For an agent $a\in\agents$ and a coalition structure $\pttn$, we use $\pttn(a)$ to denote the coalition that agent $a$ is part of. We extend the preferences from coalitions to coalition structures by setting $\pttn \succeq_a \pttn'$ whenever $\pttn(a) \succeq_a \pttn'(a)$.

The goal of a \emph{hedonic game} $\Gamma = (\agents,\mathcal{P})$ is to find a \emph{desirable} coalition structure $\pttn$, where \emph{desirable} is some well-defined requirement on the solution coalition structure $\pttn$. 
In this paper, the desirable coalition structures are those that meet certain \emph{stability} criteria. The notion of stability that we adopt in this work is that of \emph{individual rationality} (IR), which requires that no agent strictly prefers to be in a singleton coalition over staying in their current coalition.

\begin{definition}
    Let $\Gamma$ be a hedonic game, and $\pttn$ a coalition structure for~$\Gamma$. We say that $\pttn$ is \emph{individually rational} (IR) if $\pttn(a) \succeq_a \{a\}$ for every agent~$a\in\agents$.
\end{definition}

For our two preference restrictions, individual rationality admits a simple combinatorial characterization, which we use throughout the paper in place of the preference relations themselves.

\begin{observation}\label{obs:IR:characterization}
    Let $\Gamma=(A,\mathcal{P})$ be a hedonic game and $\pttn$ a coalition structure.%
    \begin{enumerate}[label=(\alph*)]
        \item If $\mathcal{P}$ is friend-oriented, then $\pttn$ is IR if and only if for each agent $a \in A$ we have  that $\pttn(a)$  contains a friend of~$a$ or no enemy of~$a$. 
        \item If $\mathcal{P}$ is  enemy-oriented, then $\pttn$ is IR if and only if for each agent $a \in A$ we have  that  $\pttn(a)$ contains no enemy of~$a$.
    \end{enumerate}
\end{observation}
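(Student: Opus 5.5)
The plan is to unfold the definition of individual rationality: $\pttn$ is IR if and only if $\pttn(a) \succeq_a \{a\}$ for every agent~$a$. We then compute, for each orientation, exactly when $C \succeq_a \{a\}$ holds for a coalition $C \in \mathcal{C}_a$. The key fact is that the singleton contains neither friends nor enemies of~$a$, since $\friends_a,\enemies_a \subseteq \agents\setminus\{a\}$. Hence $|\{a\}\cap\friends_a| = |\{a\}\cap\enemies_a| = 0$.

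For part (a), substitute $D=\{a\}$ into the friend-oriented definition. We get $C \succeq_a \{a\}$ if and only if either (1) $|C\cap\friends_a| > 0$, or (2) $|C\cap\friends_a| = 0$ and $|C\cap\enemies_a| \le 0$. Condition (1) says $C$ contains a friend of~$a$. Condition (2) says $C$ contains no friend and no enemy of~$a$. The disjunction of (1) and (2) is equivalent to ``$C$ contains a friend of~$a$, or $C$ contains no enemy of~$a$.'' To check this, note that if $C$ contains no enemy, then either it contains a friend, so (1) holds, or it contains none, so (2) holds. Conversely, (1) gives a friend and (2) gives no enemy. Applying this with $C=\pttn(a)$ for every agent yields (a).

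For part (b), substitute $D=\{a\}$ into the enemy-oriented definition. We get $C\succeq_a\{a\}$ if and only if either $|C\cap\enemies_a|<0$, which is impossible, or $|C\cap\enemies_a|=0$ and $|C\cap\friends_a|\ge 0$. The last inequality always holds. So the condition reduces to $C\cap\enemies_a=\emptyset$, and applying it with $C=\pttn(a)$ gives (b).

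There is no real obstacle here. The only point needing care is the small Boolean simplification in part (a), which is checked above.
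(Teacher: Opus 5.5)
Your proof is correct and matches the paper's approach: the paper states this as an observation without a written proof, and its intended justification is exactly to substitute $D=\{a\}$ into the two lexicographic definitions, using that $\{a\}$ meets neither $\friends_a$ nor $\enemies_a$. You handle both the Boolean simplification in part (a) and the vacuous strict inequality in part (b) correctly.
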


Nevertheless, it is trivial to find an individually rational coalition structure for any hedonic game---simply place each agent into a coalition by themselves. Therefore, we consider the following four natural \emph{constraints}:%

\begin{enumerate}
    \item A \emph{hedonic game with size-constrained coalition} (HG-SCC) is a tuple $(\agents,\mathcal{P},\numCoals,\lb,\ub)$ where $\numCoals \in [\numAgents]$ is the desired number of coalitions, and $\lb$ and $\ub$ are functions from $[\numCoals]$ to~$\{0,1,\dots,\numAgents\}$ setting a lower and upper bound on the size of each coalition. The solution structure~$\pttn$ of such a game must satisfy
    $\pttn = (\pttn_1,\ldots,\pttn_\numCoals)$   where $ \lb(j) \leq |\pttn_j| \leq \ub(j)$ for each $j \in [\numCoals]$.

    \item A \emph{hedonic game with $\numCoals$ coalitions} ($\numCoals$-HG) is HG-SCC where $\lb(j) = 1$ and $\ub(j) = \numAgents$ for every $j \in [\numCoals]$.

    \item A \emph{hedonic game with fixed-size coalitions} (HG-FSC) is HG-SCC with $\lb(j) = \ub(j)$ for every $j\in[\numCoals]$. %

    \item A \emph{hedonic game with balanced-size coalitions} (HG-BSC) is HG-FSC such that $\ub(j) \in \{ \lfloor \nicefrac{\numAgents}{\numCoals} \rfloor, \lceil \nicefrac{\numAgents}{\numCoals} \rceil \}$ for every $j\in[\numCoals]$.
\end{enumerate}
\toappendix{
\begin{figure}
    \centering
    \begin{tikzpicture}
        \node[draw,rounded corners] (SCC) at (0,0) {HG-SCC};
        \node[draw,rounded corners] (k) at (-1,-2) {$\numCoals$-HG};
        \node[draw,rounded corners] (FSC) at (1,-1) {HG-FSC};
        \node[draw,rounded corners] (BSC) at (1,-2) {HG-BSC};

        \draw[->] (k) -- (SCC);
        \draw[->] (BSC) -- (FSC);
        \draw[->] (FSC) -- (SCC);
    \end{tikzpicture}
    \caption{A relation between constraints we consider in our work. An arrow from constraint $A$ to constraint $B$ represents that any solution for $A$ is also a solution for $B$. Consequently, any hardness result for $A$ implies the same hardness result for $B$ and any algorithmic result for $B$ implies the same algorithmic result for $A$.}
    \label{fig:constraintsRelationships}
\end{figure}

In this section, we provide \Cref{fig:constraintsRelationships}, illustrating the relationship between the problem variants we study, and then we give some background on the notions we use from the parameterized complexity framework. 

}
The relationships between the constraints are as follows: HG-BSC is a special case of HG-FSC, which in turn is a special case of HG-SCC. $\numCoals$-HG is a special case of HG-SCC and is incomparable with any other restriction. Naturally, a hardness result for a special case implies hardness for a more general model, and algorithmic upper bounds are implied in the opposite direction.

Given an HG-SCC~$\Gamma$ with friend-oriented preferences, the problem of deciding whether $\Gamma$ admits an IR coalition structure is denoted \IRSCCfo; the problems \IRFSCfo, \IRBSCfo, and \IRkfo, as well as their enemy-oriented variants, are defined analogously. For fixed-size (or balanced-size) coalitions, we will assume that the given sizes sum up to~$\numAgents$. 
Since we can check efficiently whether a coalition structure is IR, all studied problems are in~\NP.

\toappendix{
\subsection{Parameterized Complexity}\label{sec:param_complexity}

We apply the framework of parameterized complexity where each input instance~$I$ of a parameterized problem~$Q$ comes with an integer parameter~$q$, and the running time of an algorithm solving~$Q$ is not only measured as a function of the input length~$|I|$ but rather as a function of both~$|I|$ and~$q$. When dealing with a problem that is intractable in the classical sense, the usual aim is to restrict the combinatorial explosion to the parameter and, accordingly, design an algorithm with running time $f(q) \cdot |I|^{\Oh{1}}$ for some computable function~$f$; such an algorithm is called \emph{fixed-parameter tractable}, and the corresponding complexity class is \FPT. By contrast, an algorithm whose running time is polynomial for each fixed constant value of the parameter is said to be an \XP algorithm; note that here the exponent of the polynomial is not necessarily a constant but may depend on~$q$.
A problem is \emph{\pNPh} if it is \NPh for some constant value of the parameter; clearly, we cannot expect such a problem to admit an \XP algorithm. 
As an analog of \NPhness, parameterized complexity theory uses the notion of \Whness[1], based on so-called \emph{parameterized reductions} and the complexity class \W[1]. 
Hence, showing the \Whness[1] of a parameterized problem~$Q$ is strong evidence that $Q \notin \FPT$.  
We refer the reader to \cite{cyg-fom-kow-lok-mar-pil-pil-sau:b:parameterized-algorithms} for a more comprehensive introduction to the topic.%
}

\toappendix{
\paragraph{Structural parameters.}

When studying the structure of the relationship graph~$G$ of a hedonic game, we will focus on some of the most widely used structural parameters in graph theory. Arguably, the most prominent of such parameters is the treewidth of~$G$, denoted by~$\tw(G)$, which measures the tree-likeness of~$G$. A \emph{tree-decomposition} for~$G$ is a tree~$\T$ and a function~$\beta:V(\T) \rightarrow 2^{V(G)}$ that assigns a \emph{bag}~$\beta(x)$ to each \emph{node}~$x \in V(\T)$ such that 
\begin{itemize}
    \item for each $v \in V(G)$ there is a node~$x \in V(T)$ with $v \in  \beta(x)$, 
    \item for each $\{u,v\} \in E(G)$ there is a node~$x \in V(\T)$ with $\{u,v\} \subseteq \beta(x)$, and 
    \item for each $v \in V(G)$, the nodes in $\{x: v \in \beta(x)\}$ induce a subtree of~$\T$.
\end{itemize}
The \emph{width} of~$\T$ is $\max_{x \in V(\T)} |\beta(x)|-1$, and the \emph{treewidth} of~$G$ is the minimal width of a tree-decomposition for~$G$. 
A graph parameter that is stronger than treewidth is the \emph{vertex cover number} of~$G$, denoted by~$\vc(G)$, which is the size of a minimum vertex cover of~$G$, where a \emph{vertex cover} of~$G$ is a set~$U$ of vertices such that every edge has an endpoint in~$U$.
It is easy to observe that $\tw(G) \leq \vc(G)$: taking a minimum vertex cover~$C$ of~$G$ and creating one bag~$C \cup \{v\}$ for each vertex $v \notin C$ yields a (path) decomposition of~$G$ of width~$|C|$.
Although some of our results concern the structural parameter \emph{clique-width}, the precise definition will not be needed. However, note that $\operatorname{cw}(G) \leq 3\cdot 2^{\tw(G)-1 }$.
}

\section{Enemy-Oriented Preferences}
\label{sec:enemy-oriented}
\appendixsection{sec:enemy-oriented}
\toappendix{This section contains all proofs omitted from \Cref{sec:enemy-oriented}.}

Our first result settles the enemy-oriented case completely, by showing that it collapses to a classic coloring problem. Indeed, if preferences are enemy-oriented, then friendships play no role in whether a given coalition structure is IR: being alone is \emph{not} preferred by an agent~$a$ to being in a coalition~$C$ if and only if $a$ has no enemies in~$C$. Therefore, individual rationality of a coalition structure~$\pi$ is equivalent to each coalition forming an independent set in the enmity graph~$G^\enemies$; 
notice that enmities can be assumed to be symmetric. That is, our problems are equivalent to the size-constrained variants of the classic \probName{$k$-Coloring} problem. 
Since \probName{Equitable $3$-Coloring} is \NPc already for $4$-regular graphs, this immediately implies the following:%
    
\begin{restatable}[\linkproof{obs:EO:kthree:NPh}]{observation}{obsEOkthreeNPh}
\label{obs:EO:kthree:NPh}
    \IRkeo and \IRBSCeo are \NPc
    even if ${k=3}$ and $|\enemies_a|=4$ for every $a \in \agents$. 
\end{restatable}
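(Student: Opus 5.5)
The plan is a direct polynomial-time reduction from \probName{Equitable $3$-Coloring} restricted to $4$-regular graphs, which is \NPc. Let $H=(V,E_H)$ be a $4$-regular graph on $n$ vertices. We build a game with $\agents=V$, $\friends_a=\emptyset$ for every agent $a$, and $\enemies_a=N_H(a)$. This gives symmetric relations with $|\enemies_a|=4$ for every $a\in\agents$, and it sets $\numCoals=3$. For \IRBSCeo we take the balanced bounds $\lb(j)=\ub(j)\in\{\lfloor n/3\rfloor,\lceil n/3\rceil\}$, chosen so that they sum to $n$. For \IRkeo the bounds are the trivial ones, $1$ and $n$.

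Correctness rests on \Cref{obs:IR:characterization}(b): under enemy-oriented preferences, $\pttn$ is IR exactly when every coalition $\pttn_j$ is an independent set of $G^\enemies=H$. So an IR balanced-size partition into three coalitions is precisely an equitable $3$-coloring of $H$ whose color classes have sizes $\lfloor n/3\rfloor$ or $\lceil n/3\rceil$. Since the class sizes of any partition into three parts differing by at most one are forced up to permutation, the fixed multiset of sizes loses no generality. This settles \IRBSCeo.

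\IRkeo needs more care, because it only asks for three non-empty independent sets, that is, a proper $3$-coloring using all three colors. This alone does not follow from the equitable result, so I would argue it separately. One option is to reduce from \probName{$3$-Coloring} on $4$-regular graphs, which is known to be \NPc. A proper $3$-coloring of a $4$-regular graph necessarily uses all three colors: there is an edge, so at least two colors appear, and a bipartite coloring can be recolored by moving one vertex to the third color, which stays proper. Hence the coalitions can be made non-empty, and the number of vertices ensures at least three of them.

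Finally, both problems are in \NP, as noted in the preliminaries. The step I expect to be the main obstacle is identifying the exact hardness source for each variant. For the balanced case that source is equitable $3$-coloring on $4$-regular graphs; for the $k$-coalition case it is plain $3$-coloring on $4$-regular graphs, which is classical (Garey--Johnson--Stockmeyer style results for bounded degree). With these in hand, the remaining steps are routine.
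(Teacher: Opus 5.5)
Your proposal is correct and is essentially the paper's argument. The paper observes that the reduction of \Cref{thm:FO:k:fourEnemies:NPh} produces friendless instances, on which friend- and enemy-oriented preferences coincide; that reduction is from \probName{$3$-Coloring} and \probName{Equitable $3$-Coloring} on $4$-regular graphs, with every edge an enmity. You instead argue directly via \Cref{obs:IR:characterization}(b).

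There is one small difference. The paper guarantees three non-empty coalitions by assuming w.l.o.g.\ that $H$ is not $2$-colorable, whereas you recolor one vertex taken from a class of size at least two, which achieves the same. Your claim that a proper $3$-coloring ``necessarily'' uses all three colors should instead say that it ``can be modified to'' use all three.
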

\prooftoappendix{obs:EO:kthree:NPh}{\obsEOkthreeNPh*}{
    Notice that the reduction presented in \Cref{thm:FO:k:fourEnemies:NPh} to prove the \NPcness of \IRkfo and \IRBSCfo for $\numCoals=3$ results in an instance without friendships; hence, friend- and enemy-oriented preferences coincide. This implies our result.
}

By contrast, it is not hard to see that deciding whether a graph~$H$ can be $2$-colored so that the color classes respect some size constraint can be done in polynomial time, using a simple dynamic programming approach based on the observation that each connected component of~$H$ has a unique $2$-coloring up to the reversal of the colors.
This yields the following result.

\begin{restatable}[\linkproof{thm:EO:SCC:kTwo:poly}]{theorem}{thmEOSCCkTwopoly}
\label{thm:EO:SCC:kTwo:poly}
    \IRSCCeo for $k=2$ is polynomial-time solvable. 
\end{restatable}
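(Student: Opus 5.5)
By Observation~\ref{obs:IR:characterization}(b), a coalition structure $\pttn=(\pttn_1,\pttn_2)$ is IR under enemy-oriented preferences if and only if no agent has an enemy in its own coalition. So the plan is to drop friendships entirely and work with the undirected underlying graph $H$ of $G^\enemies$; an arc in either direction forbids the two agents from sharing a coalition. The instance is then a yes-instance exactly when $H$ has a proper $2$-coloring whose two classes $\pttn_1,\pttn_2$ satisfy $\lb(j)\le|\pttn_j|\le\ub(j)$ for $j\in\{1,2\}$. Empty coalitions are allowed whenever $\lb(j)=0$, and this needs no special treatment.

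First, compute the connected components $H_1,\dots,H_m$ of $H$ (isolated vertices included) and test each for bipartiteness by BFS. If some component is not bipartite, answer \No. Otherwise, each $H_i$ has a bipartition $(X_i,Y_i)$ that is unique up to swapping sides, because a connected bipartite graph has exactly two proper $2$-colorings. Every proper $2$-coloring of $H$ therefore arises from independently choosing, for each $i$, whether $X_i$ or $Y_i$ goes to $\pttn_1$. Consequently $|\pttn_1|$ takes exactly the values $\sum_i s_i$ with $s_i\in\{|X_i|,|Y_i|\}$, and $|\pttn_2|=\numAgents-|\pttn_1|$.

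Next, run a subset-sum-style dynamic program. Let $R_i\subseteq\{0,\dots,\numAgents\}$ be the set of sizes of $\pttn_1$ achievable using $H_1,\dots,H_i$. Start with $R_0=\{0\}$ and set $R_i=\{r+|X_i| : r\in R_{i-1}\}\cup\{r+|Y_i| : r\in R_{i-1}\}$. Each step takes $\Oh{\numAgents}$ time and there are at most $\numAgents$ components, so the total is $\Oh{\numAgents^2}$. Finally, accept if and only if some $s\in R_m$ satisfies $\lb(1)\le s\le\ub(1)$ and $\lb(2)\le\numAgents-s\le\ub(2)$. Storing a predecessor choice per entry reconstructs the partition.

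Correctness needs two directions, both immediate from the characterization. An IR partition is a proper $2$-coloring, so it fixes one orientation per component and its size $|\pttn_1|$ lies in $R_m$. Conversely, any $s\in R_m$ yields a proper $2$-coloring with those class sizes, which is IR and meets the bounds. The only subtle point is the reduction of asymmetric enmity to symmetric constraints and the fact that friendships are irrelevant; both follow directly from Observation~\ref{obs:IR:characterization}(b). I therefore expect no real obstacle beyond stating the DP carefully.
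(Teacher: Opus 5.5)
Your proposal is correct and takes the same route as the paper. Both drop friendships, treat IR as a proper $2$-coloring of the undirected enmity graph, use that each component's bipartition is unique up to swapping, and run a subset-sum-style DP over achievable class sizes. Two of your steps are minor refinements of the paper's version: you handle the bounds $\lb(j),\ub(j)$ directly (the paper phrases the DP for exact sizes $s_1,s_2$), and you reject non-bipartite components explicitly. Your $\Oh{\numAgents^2}$ bound for the DP is also more accurate than the paper's claim of linear time.
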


\prooftoappendix{thm:EO:SCC:kTwo:poly}{\thmEOSCCkTwopoly*}{
    Notice that the problem can be reduced to the following: given a graph~$H$ and integers~$s_1$ and~$s_2$, decide if $G$ admits a proper $2$-coloring where the color classes have size~$s_1$ and~$s_2$ (with $s_1+s_2=|V(H)|$).  
    
    Let $H$ consist of connected components $H_1,\dots,H_\ell$, then for each $i=1,\dots,\ell$ we store  
    the set~$S_i$ of integers~$s$ for which $\bigcup_{j \in [i]} H_1$ can be $2$-colored with one color class having size exactly~$s$. Note that for each $H_i$, there is only one way to $2$-color it up to switching the two colors. We  compute such a coloring in linear time; let $t_i$ and~$t'_i=|V(H_i)|-t_i$ denote the sizes of the two color classes we obtain.
    Then for $i=1$ we know $S_1=\{t_1,t'_1\}$.  
    For $i>1$, we can compute $S_i$ using $S_i=\{s+t \colon s \in S_{i-1}, t \in \{t_i,t'_i\}\}$. This way, we can decide if $H$ can be $2$-colored with color classes of sizes~$s_1$ and~$s_2$ by checking if $s_1 \in S_\ell$. The presented algorithm runs in linear time.
}

The polynomial equivalence with size-constrained coloring settles the enemy-oriented case: any complexity result for these coloring problems, classical or parameterized, translates directly to our setting, and vice versa. No such collapse is possible under friend-oriented preferences, where friendships can compensate for the presence of enemies; the resulting---much richer---complexity landscape is the subject of the rest of the paper.

\section{Friend-Oriented Preferences}
\label{sec:FO}
\appendixsection{sec:FO}
\toappendix{This section contains all proofs omitted from \Cref{sec:FO}.}

Directing our attention to friend-oriented preferences, we investigate how different natural parameters influence the computational complexity of our problems.

\subsection{Few or Small Coalitions}
We show that our problems are already hard if we aim only for two coalitions, yielding a strict dichotomy in terms of the number of desired coalitions.

\begin{restatable}[\linkproof{thm:FO:k:twoCoals:NPh}]{theorem}{thmFOktwoCoalsNPh}
\label{thm:FO:k:twoCoals:NPh}\label{thm:FO:BSC:twoCoals:NPh}\label{thm:FO:SCC:grandCoal:poly}
    If $\numCoals = 1$, \IRSCCfo  can be decided in linear time.
    \IRkfo and \IRBSCfo are \NPc for $\numCoals=2$, even if preferences are symmetric, each agent has at most $8$ friends, and there are only two agents with no friends. 
\end{restatable}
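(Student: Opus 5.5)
For $\numCoals=1$ the only candidate is the grand coalition $\pttn_1=\agents$, so I would check two things. First, that $\lb(1)\le \numAgents\le \ub(1)$. Second, by \Cref{obs:IR:characterization}(a), that every agent $a$ with $\enemies_a\neq\emptyset$ satisfies $\friends_a\neq\emptyset$, since $\agents$ contains all of $a$'s friends and enemies. Both checks take one pass over the relationship graph, so linear time. Membership in \NP was noted in the preliminaries. For hardness it suffices to treat \IRkfo and \IRBSCfo at $\numCoals=2$ with one reduction: I would make the construction balanced-size compatible by padding with isolated agents (no friends, no enemies). Such agents are always IR and can fill either side, so any IR $2$-partition extends to a balanced one as long as enough padding is available, and conversely.

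I would reduce from \probName{Not-All-Equal 3-SAT} restricted to instances in which every variable occurs a bounded number of times (this variant is still \NP-hard). The two friendless agents $x,y$ are made mutual enemies. Since neither has a friend, IR forces them into different coalitions, and these coalitions play the roles of ``true'' and ``false''. Every other agent will have at least one friend, which matches the statement.

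\emph{Variable gadget.} For a variable $v$, I would take two literal agents $v^+,v^-$ that are mutual enemies. Each of them gets a small private friend cluster (say a triangle of helper agents) such that the helpers always have each other. The intended effect is that $v^+$ and $v^-$ are either separated or jointly ``rescued'' by friends. I would then tune this gadget so that placing them together is never beneficial. The cleanest way to do so is to make each literal agent an enemy of $x$ or $y$ in a pattern that leaves its friend-support only on the side consistent with the other literal.

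\emph{Clause gadget.} For each clause, I would use an agent whose only friends are its three literal agents and which has enemies $x$ and $y$. Whichever side it lands on, it then contains an enemy, so it needs a literal friend on that side. Its three literals must therefore not all sit on the opposite side; using both orientations of this check yields the NAE condition. Counting degrees, a literal agent receives at most (occurrences $+$ gadget friends) friends, so bounding occurrences by about $5$ keeps every agent at no more than $8$ friends.

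The main obstacle is the symmetry requirement combined with $x,y$ being friendless. Every enemy we attach to $x$ (for example the clause agents) becomes an enemy that $x$ itself cannot tolerate, because $x$ has no friend to compensate. Hence $x$ and $y$ can only be enemies of each other. The side-anchoring must therefore come not from direct enmity with $x,y$ but from each clause or literal agent being an enemy of a \emph{pair} of agents that are forced to be split. My first attempt would replace ``enemy of $x$ and $y$'' by enmity with forced-split pairs built from $x$ and $y$ via friend chains. I would then verify carefully, in both directions, that no unintended partition (for instance, a literal cluster migrating entirely to one side) is IR.
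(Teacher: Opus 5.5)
Your treatment of $k=1$ and of membership in \NP is fine. Padding with neutral agents is also what the paper does for \IRBSCfo. For the converse direction you need that the two friendless guards are forced apart, so that both coalitions of a padded solution still contain original agents.

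\textbf{The gap.} The hardness reduction itself has a genuine gap: you never fix a construction, and the gadgets you do describe fail. In a symmetric instance, a clause agent that is an enemy of both friendless agents $x$ and $y$ can share a coalition with neither of them. So for $k=2$ your instance never admits an IR partition. If instead a clause is represented by a single agent whose side is not fixed, the condition ``it has a literal friend in its own coalition'' is always satisfiable: just put the agent next to any of its literals. Then nothing like not-all-equal is enforced. Your variable gadget has the same weakness. Mutual enmity between $v^+$ and $v^-$ does not separate them once each has helper friends available, so literal consistency is not enforced. Moreover, ``tuning so that placing them together is never beneficial'' is not the right criterion, since IR is a feasibility condition, not a matter of benefit.

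\textbf{The missing idea.} It is precisely the one you ruled out. From ``$x$ cannot tolerate its enemies'' it does not follow that $x$ and $y$ may only be enemies of each other. Since $x$ and $y$ are friendless mutual enemies and $k=2$, every agent that is an enemy of $x$ but not of $y$ is \emph{forced} into $y$'s coalition. This is exactly the side-anchoring you were looking for, and no friend chains are needed. The paper's construction is as follows.
\begin{itemize}
\item It reduces from \probName{Monotone $4$-Regular NAE 3-SAT}, so variable agents need no gadget and have no enemies.
\item It uses two clause agents $c_j^+$ and $c_j^-$ per clause, each a friend of the three variable agents of $C_j$.
\item It makes the guard $r^+$ an enemy of all $c_j^-$ and $r^-$ an enemy of all $c_j^+$. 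Hence all $c_j^+$ land with $r^+$ and all $c_j^-$ with $r^-$.
\item It adds enmities between $c_j^\pm$ and $c_{j+1}^\pm$. Thus every clause agent has an enemy inside its forced coalition and therefore needs a variable friend there.
\end{itemize}
This yields exactly ``every clause has a variable on each side,'' and each variable agent has $2\cdot 4=8$ friends.

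\textbf{A smaller point.} Your friendless padding agents contradict your claim that every other agent has a friend. The paper's padding has the same feature. Either read ``two agents with no friends'' as two recluses, or pad with pairs of mutual friends that have no enemies.
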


\begin{proofsketch}
    To show \NPhness of \IRkfo for $k = 2$, we give a reduction from \NPh \probName{Monotone $4$-Regular NAE 3-SAT}~\cite{DarmannD2020}. The input of this problem is 3-CNF formula $\varphi=C_1 \wedge \dots \wedge C_m$ over a set $X=\{x_1,\dots,x_n\}$ of variables %
    such that $\varphi$ contains no negative literals and 
    each variable appears in exactly four clauses. The question is whether~$\varphi$ admits a \emph{valid} truth assignment, i.e., one in which each clause is satisfied by some but not all of its variables. %

    \begin{figure}[bt!]
        \centering
        \scalebox{0.8}{
        \begin{tikzpicture}[
                agent/.style={circle, draw=black, minimum size=20pt, inner sep=1pt, font=\small,fill=white},
                friend/.style={draw=black,cbBlue, thick, dashed},
                enemy/.style={draw=black,cbRed,thick},
                every node/.style={font=\small},
                var/.style={minimum size=15pt}
            ]
            \pgfmathsetmacro\yPos{1.5}
            
            \node[agent,thick] (rp) at (0,\yPos)  {$r^+$};
            \node[agent,thick] (rn) at (0,-\yPos) {$r^-$};
            
            \node[agent] (c1n) at (3,\yPos) {$c_1^-$};
            \node[agent] (c2n) at (5,\yPos) {$c_2^-$};
            \node[agent] (c3n) at (7,\yPos) {$c_3^-$};
            
            \node[agent] (c1p) at (3,-\yPos) {$c_1^+$};
            \node[agent] (c2p) at (5,-\yPos) {$c_2^+$};
            \node[agent] (c3p) at (7,-\yPos) {$c_3^+$};
            
            \node[agent, var] (a1) at (2, 0)  {$a_1$};
            \node[agent, var] (a2) at (3.5, 0) {$a_2$};
            \node[agent, var] (a3) at (5, 0)  {$a_3$};
            \node[agent, var] (a4) at (6.5, 0) {$a_4$};
            \node[agent, var] (a5) at (8, 0)  {$a_5$};

            \draw[friend] (c1p) -- (a1);
            \draw[friend] (c1p) -- (a2);
            \draw[friend] (c1p) -- (a3);
            \draw[friend] (c1n) -- (a1);
            \draw[friend] (c1n) -- (a2);
            \draw[friend] (c1n) -- (a3);
            
            \draw[friend] (c2p) -- (a1);
            \draw[friend] (c2p) -- (a3);
            \draw[friend] (c2p) -- (a4);
            \draw[friend] (c2n) -- (a1);
            \draw[friend] (c2n) -- (a3);
            \draw[friend] (c2n) -- (a4);
            
            \draw[friend] (c3p) -- (a2);
            \draw[friend] (c3p) -- (a4);
            \draw[friend] (c3p) -- (a5);
            \draw[friend] (c3n) -- (a2);
            \draw[friend] (c3n) -- (a4);
            \draw[friend] (c3n) -- (a5);

            \draw[enemy] (rp) -- (rn);
            
            \draw[enemy,bend left=22] (rp) to (c1n);
            \draw[enemy,bend left=22] (rp) to (c2n);
            \draw[enemy,bend left=22] (rp) to (c3n);
            
            \draw[enemy,bend right=22] (rn) to (c1p);
            \draw[enemy,bend right=22] (rn) to (c2p);
            \draw[enemy,bend right=22] (rn) to (c3p);
            
            \draw[enemy] (c1n) to (c2n);
            \draw[enemy] (c2n) to (c3n);
            
            \draw[enemy] (c1p) to (c2p);
            \draw[enemy] (c2p) to (c3p);

            \begin{pgfonlayer}{background}
                \filldraw[thick,dotted,cbOrange,fill=cbOrange!7,rounded corners=10pt]
                    (1, -2.5) -- (1,-0.5) -- (-1.25,-0.5) -- (-1.25,2.75) --
                    (2.5,2.75) -- (2.5,0.75) -- (4.5,0.75) -- (4.5,-0.5) -- 
                    (5.5,-0.5) -- (5.5,0.75) -- (7.5,0.75) -- (7.5,-0.5) --
                    (8.5,-0.5) -- (8.5,3) -- (-1.5,3) -- (-1.5,-2.5) -- cycle;
                \node[cbOrange] at (8,2.5) {\large$\pttn_2$};

                \filldraw[thick,dotted,cbGreen,fill=cbGreen!7,rounded corners=10pt]
                    (-1,2.5) -- (2,2.5) -- (2,0.5) -- (4,0.5) -- (4,-0.75) --
                    (6,-0.75) -- (6,0.5) -- (7,0.5) -- (7,-0.75) -- (8.5,-0.75) --
                    (8.5,-2.5) -- (1.5,-2.5) -- (1.5,0) -- (-1,0) -- cycle;
                \node[cbGreen] at (8,-2) {\large$\pttn_1$};
            \end{pgfonlayer}
        \end{tikzpicture}
        }
        \caption{An illustration of the hardness construction used to prove \Cref{thm:FO:k:twoCoals:NPh} for a formula $\varphi = (x_1 \lor x_2 \lor x_3) \land (x_1 \lor x_3 \lor x_4) \land (x_2 \lor x_4 \lor x_5)$. Red (solid) edges represent that two agents are enemies, while blue (dashed) edges represent friendship. Using colored backgrounds, we highlight one possible IR coalition structure.}
        \label{fig:thm:FO:k:twoCoals:NPh:construction}
    \end{figure}

    Given $\varphi$, we construct an equivalent instance $\mathcal{J}$ of FO-$\numCoals$-HG as follows (see \Cref{fig:thm:FO:k:twoCoals:NPh:construction} for an illustration of the construction). We have one \emph{variable agent} $a_i$ for every variable $x_i\in X$, and two \emph{clause agents} $c_j^+$ and $c_j^-$ for every clause~$C_j$. In addition, we create two \emph{guard agents} $r^+$ and $r^-$, whom we let be enemies. %
    We also let $r^+$ be enemies with all clause agents~$c_j^-$, and similarly, $r^-$ be enemies with all clause agents $c_j^+$. The guard agents have no friends. Furthermore, for each $j\in[m-1]$ the pairs $(c_j^-,c_{j+1}^-)$ and $(c_j^+,c_{j+1}^+)$ of clause agents are also enemies. Each clause agent~$c_j^+$ or $c_j^-$ has three variable agents as their friends, namely the ones corresponding to literals present in the clause $C_j$. All remaining relations are neutral. To finalize the construction, we set $\numCoals = 2$.
    
    For correctness, assume first that $\alpha$ is a valid assignment for $\varphi$. 
    Then the coalition structure $\pttn$ defined by $\pttn_1 = \{ r^+ \} \cup \{ c_j^+ \colon j\in[m] \} \cup \{ a_i \colon \alpha( x_i ) = \texttt{true} \}$ and $\pttn_2 = \agents \setminus \pttn_2$ is IR for~$\mathcal{J}$, as can be checked in a straightforward way.

    In the opposite direction, let $\pttn=(\pttn_1,\pttn_2)$ be an IR coalition structure for~$\mathcal{J}$. We first observe that $r^+$ and $r^-$ cannot be members of the same coalition, since they do not have friends and are mutual enemies. By the same argument, no clause agent~$c^-_j$ can be in the same coalition with $r^+$, and no clause agent~$c^+_j$ can be in the same coalition as $r^-$. Hence, without loss of generality, we can assume that $\{r^+,c_1^+,\ldots,c_m^+\} \subseteq \pttn_1$ and $\{r^-,c_1^-,\ldots,c_m^-\}\subseteq \pttn_2$. We construct an assignment $\alpha$ by setting $\alpha(x_i) = \texttt{true}$ if and only if $a_i\in\pttn_1$ for every $i\in[n]$. For contradiction, assume that $\alpha$ is not valid for $\varphi$, i.e., there is a clause $C_j$ with either all or none of its literals being true. In the former case,  all friends of~$c_j^-$ are in~$\pttn_1$. However, $c_j^-$ has at least one enemy in $\pttn_2$, which contradicts the individual rationality of~$\pttn$. In the latter case, we apply the same argument on~$c_j^+$. Consequently, $\alpha$ is a valid truth assignment, finishing the proof for \IRkfo. 
     For \IRBSCfo, we simply add $n$ dummy agents $d_1,\ldots,d_n$ that are neutrals from the perspective of all agents. %
\end{proofsketch}
\prooftoappendix{thm:FO:k:twoCoals:NPh}{\thmFOktwoCoalsNPh*}{
    If $k=1$, the only acceptable coalition is the grand coalition, and we can check whether it is IR in polynomial time.
    To show \NPhness of \IRkfo for $k = 2$, we give a reduction from \NPh \probName{Monotone $4$-Regular NAE 3-SAT}~\cite{DarmannD2020}. The input of this problem is 3-CNF formula $\varphi=C_1 \wedge \dots \wedge C_m$ over a set $X=\{x_1,\dots,x_n\}$ of variables 
    such that $\varphi$ contains no negative literals and 
    each variable appears in exactly four clauses. The question is whether~$\varphi$ admits a \emph{valid} truth assignment, i.e., one in which each clause is satisfied by some but not all of its variables. 

    Given $\varphi$, we construct an equivalent instance $\mathcal{J}$ of FO-$\numCoals$-HG as follows (see \Cref{fig:thm:FO:k:twoCoals:NPh:construction} for an illustration of the construction). We have one \emph{variable agent} $a_i$ for every variable $x_i\in X$, and two \emph{clause agents} $c_j^+$ and $c_j^-$ for every clause~$C_j$. In addition, we create two \emph{guard agents} $r^+$ and $r^-$, whom we let be enemies.  
    We also let $r^+$ be enemies with all clause agents~$c_j^-$, and similarly, $r^-$ be enemies with all clause agents $c_j^+$. The guard agents have no friends. Furthermore, for each $j\in[m-1]$ the pairs $(c_j^-,c_{j+1}^-)$ and $(c_j^+,c_{j+1}^+)$ of clause agents are also enemies. Each clause agent~$c_j^+$ or $c_j^-$ has three variable agents as their friends, namely the ones corresponding to literals present in the clause $C_j$. All remaining relations are neutral. To finalize the construction, we set $\numCoals = 2$.
    
    For correctness, assume first that $\alpha$ is a valid assignment for $\varphi$. 
    We construct a coalition structure $\pttn$ so that $\pttn_1 = \{ r^+ \} \cup \{ c_j^+ \colon j\in[m] \} \cup \{ a_i \colon \alpha( x_i ) = \texttt{true} \}$ and $\pttn_2 = \agents \setminus \pttn_2$. Clearly, both guard agents have only neutral agents in their coalition, so the partition is IR for them. Variable agents have no enemies, so any allocation is IR for these. Finally, each clause agent~$c$ is in the coalition with at least one enemy. However, since $\alpha$ is a valid assignment, $c$ has at least one friend in the same coalition---namely, the variable agent~$a_i$ corresponding to the variable~$x_i$ that satisfies $C_j$ if $c=c^+_j$ and the one that does not satisfy $C_j$ if $c=c^-_j$. That is, $\pttn$ is IR for all agents, %
    so $\mathcal{J}$ is a \Yes-instance.

    In the opposite direction, let $\pttn=(\pttn_1,\pttn_2)$ be an IR coalition structure for~$\mathcal{J}$. We first observe that $r^+$ and $r^-$ cannot be members of the same coalition, since they do not have friends and are mutual enemies. By the same argument, no clause agent~$c^-_j$ can be in the same coalition with $r^+$, and no clause agent~$c^+_j$ can be in the same coalition as $r^-$. Hence, without loss of generality, we can assume that $\{r^+,c_1^+,\ldots,c_m^+\} \subseteq \pttn_1$ and $\{r^-,c_1^-,\ldots,c_m^-\}\subseteq \pttn_2$. We construct an assignment $\alpha$ by setting $\alpha(x_i) = \texttt{true}$ if and only if $a_i\in\pttn_1$ for every $i\in[n]$. For contradiction, assume that $\alpha$ is not valid for $\varphi$, i.e., there is a clause $C_j$ with either all or none of its literals being true. In the former case,  all friends of~$c_j^-$ are in~$\pttn_1$. However, $c_j^-$ has at least one enemy in $\pttn_2$, which contradicts the individual rationality of~$\pttn$. In the latter case, we apply the same argument on~$c_j^+$. Consequently, $\alpha$ is a valid truth assignment, finishing the proof for \IRkfo. 
    For \IRBSCfo, we simply add $n$ dummy agents $d_1,\ldots,d_n$ that are neutrals from the perspective of all agents. 
}

\toappendix{
Let us remark that padding with neutral dummy agents, as in the last step of the previous proof, does not yield a general reduction from \IRkfo\ to \IRBSCfo: in a balanced solution of the padded instance, some coalitions may consist solely of dummy agents, and removing them then leaves fewer than~$\numCoals$ non-empty coalitions. The argument works above because the guard agents force both coalitions to contain original agents.}

Next, we show that aiming for small coalitions does not yield tractability: our problems are hard already if we require each coalition to be of size at most~$3$. 
The reduction is from the \NPh \probName{$\{P_3,K_3,K_{1,3}\}$-Decomposition} problem, which is known to be \NPh even on graphs of maximum degree $3$~\cite{BulteauFLRR2021}. Given a graph $H=(W,F)$ with~$|F| = 3m$, this problem asks if $F$ can be partitioned into sets $F_1,\ldots,F_m$ of size exactly~$3$ so that each $F_i$ forms a 3-edge path~($P_3$), a triangle~($K_3$), or a star with three edges~($K_{1,3}$).
The main idea of our reduction is to introduce an agent corresponding to each edge of the input graph~$H$, with two agents being friends if the corresponding edges share an end-vertex and being enemies otherwise. It is not hard to see that in such a game, a coalition of size~$3$ is IR iff the corresponding edges in~$H \simeq P_3,K_3,K_{1,3}$.     

\begin{restatable}[\linkproof{thm:FO:BSC:ubAtMostThree:NPc}]{theorem}{thmFOBSCubAtMostThreeNPc}
\label{thm:FO:BSC:ubAtMostThree:NPc}
    \IRBSCfo is \NPc 
    even if $\ub(j) = 3$ for $\forall j\in[\numCoals]$, preferences are symmetric, $|\friends_a| \in \{3,4\}$ for $\forall a\in\agents$, and there are no neutrals.
\end{restatable}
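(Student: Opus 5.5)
The plan is to reduce from \probName{$\{P_3,K_3,K_{1,3}\}$-Decomposition} on graphs of maximum degree~$3$~\cite{BulteauFLRR2021}, using the \emph{line-graph} construction sketched above. Membership in \NP is already noted in \Cref{sec:preliminaries}. Given $H=(W,F)$ with $|F|=3m$, I create one agent~$a_e$ for every edge $e\in F$. Two agents $a_e,a_f$ are friends if $e$ and $f$ share an end-vertex, and enemies otherwise. Both relations are symmetric and there are no neutrals. I set $\numCoals=m$ and $\lb(j)=\ub(j)=3$ for all $j\in[m]$. Since $\numAgents=3m$, we have $\lfloor \numAgents/\numCoals\rfloor=\lceil \numAgents/\numCoals\rceil=3$, so this is a valid \IRBSCfo instance, and every solution consists of exactly $m$ coalitions of size exactly~$3$.

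The core step is a local characterization: a set $\{a_e,a_f,a_g\}$ is IR if and only if $\{e,f,g\}$ induces a $P_3$, a $K_3$, or a $K_{1,3}$. By \Cref{obs:IR:characterization}(a), an agent is happy if its coalition contains a friend or no enemy. For an agent~$a_e$, the two coalition partners are each either a friend or an enemy, since there are no neutrals. Hence $a_e$ is happy exactly when $e$ shares an end-vertex with $f$ or with~$g$. So the coalition is IR if and only if the graph formed by $e,f,g$ has no edge disjoint from the other two. For three edges, this is the same as the subgraph they span being connected: the only disconnected configurations are a $P_2$ plus a disjoint edge, or three disjoint edges, and both contain an isolated edge. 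The connected graphs with exactly three edges are precisely the path with three edges, the triangle, and the claw.

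This characterization gives the equivalence directly. A decomposition $F_1,\dots,F_m$ maps to the coalition structure $\pttn_i=\{a_e: e\in F_i\}$, which is IR. Conversely, an IR structure with $m$ coalitions of size~$3$ maps back to such a decomposition.

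The step I expect to be the main obstacle is the side condition $|\friends_a|\in\{3,4\}$. Agent $a_{uv}$ has exactly $\deg(u)+\deg(v)-2$ friends, and since maximum degree is~$3$, this is at most~$4$. The lower bound requires $\deg(u)+\deg(v)\ge 5$ for every edge. I would first check whether the hardness in~\cite{BulteauFLRR2021} already holds for graphs with degrees in $\{2,3\}$ where no two degree-$2$ vertices are adjacent, or for cubic graphs. If it does not, I would inspect their construction or preprocess the instance. One option is to handle degree-$1$ vertices and paths of degree-$2$ vertices with local modifications, for instance forced pieces or attached gadgets (say pendant claws or triangles), that preserve decomposability and bring every edge's endpoint-degree sum up to at least~$5$ without exceeding maximum degree~$3$.
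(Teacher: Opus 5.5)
Your reduction is the paper's: one agent per edge of the maximum-degree-$3$ graph~$H$, friends iff the edges share an end-vertex, enemies otherwise, $\numCoals=m$, and all sizes equal to~$3$. Your IR characterization matches the paper's argument, stated a little more cleanly: with no neutrals, an agent is satisfied iff one of its two partners is a friend, and three edges none of which is disjoint from the other two span a connected graph, hence a $P_3$, $K_3$ or $K_{1,3}$. Together with $\deg(u)+\deg(v)-2\le 4$, this proves \NPhness with $\ub(j)=3$, symmetric preferences, no neutrals, and $|\friends_a|\le 4$.

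The gap is the lower bound $|\friends_a|\ge 3$. Here your proposal only describes what you would try.

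You correctly reduce the lower bound to $\deg(u)+\deg(v)\ge 5$ for every edge $uv$ of~$H$. Nothing in the hardness result as cited (maximum degree~$3$) rules out degree-$1$ vertices or adjacent degree-$2$ vertices, and these give agents with at most two friends. The paper's written proof has the same hole: it only derives ``at most four friends'' from the degree bound. So you have not missed an argument from the paper, but neither text proves the $\{3,4\}$ clause.

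The repairs you sketch do not work as stated:
\begin{itemize}
\item A pendant triangle creates an edge between two degree-$2$ vertices, so its agent has two friends.
\item A pendant claw creates degree-$1$ leaves next to a degree-$3$ vertex, again giving agents with two friends.
\item A closed gadget such as $K_4$ minus an edge, on $\{v,x,y,z\}$ with edges $vx,vy,xy,xz,yz$, glued at a degree-$1$ vertex~$v$ of~$H$ with neighbor~$u$, does meet the degree condition on its edges. But these five edges plus $uv$ attach to the rest only through~$u$, and $5\not\equiv 0 \pmod 3$. Hence every decomposition covers these six edges by two triples of their own, e.g.\ the claw $\{uv,vx,vy\}$ and the triangle $xyz$. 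The gadget effectively deletes $uv$ instead of preserving the instance.
\end{itemize}

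Reducing from cubic graphs would give exactly four friends per agent, but you would need hardness for cubic inputs, which is not what is cited. Such hardness cannot come from bridgeless cubic graphs: by Petersen's theorem they have a perfect matching, and by Kotzig's theorem they then decompose into $3$-edge paths.

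To prove the statement as written, you still need one of two things. Either find a hardness source for maximum-degree-$3$ graphs in which every edge has endpoint-degree sum at least~$5$, or give a degree-raising gadget whose effect on decomposability and on $|F| \bmod 3$ you actually verify.
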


\prooftoappendix{thm:FO:BSC:ubAtMostThree:NPc}{\thmFOBSCubAtMostThreeNPc*}{
    We reduce from the \probName{$\{P_3,K_3,K_{1,3}\}$-Decomposition} problem, which is known to be \NPh even on graphs of maximum degree $3$~\cite{BulteauFLRR2021}. In this problem, we are given a graph $H=(W,F)$, with $|F| = 3m$, and the goal is to partition~$F$ into sets $F_1,\ldots,F_m$ of size exactly three so that each $F_i$ forms a 3-edge path~($P_3$), a triangle~($K_3$), or a star with three edges~($K_{1,3}$).

    Given an instance $\mathcal{I}$ of the \probName{$\{P_3,K_3,K_{1,3}\}$-Decomposition} problem, we construct an equivalent instance $\mathcal{J}$ of \IRBSCfo as follows. For every edge $e_i\in F$, we create an agent $a_i$. Two agents $a_i$ and $a_j$ are  friends if and only if $e_i \cap e_j \not= \emptyset$. Since the degree of any vertex $v\in W$ is at most three, each agent has at most four friends. Additionally, we set $\enemies_a = \agents\setminus(\friends_a \cup \{a\})$ for each~$a\in\agents$. That is, there are no two agents that are neutral to each other. To finalize the construction, we set $\numCoals = m$ and $\lb(j) = \ub(j) = 3$ for every $j\in[\numCoals]$.

    For correctness, assume that $\mathcal{I}$ is a yes-instance and $F_1,\ldots,F_m$ is a solution partition of $F$. We construct $\pttn$ by setting $\pttn_j = \{ a_i \colon e_i \in F_j \}$ for every $j\in[m]$. As each $F_j$ forms $P_3$, $K_3$, or $K_{1,3}$, each coalition is clearly of size $3$. Additionally, the form of each $F_j$ implies that every agent $a_i$ has at least one friend in $\pttn(a_i)$. Thus, $\pttn$ is IR.

    In the opposite direction, let $\mathcal{J}$ be a yes-instance and $\pttn$ be a solution structure~$\pi$; in particular, each coalition in~$\pi$ has size~$3$. Set $F_j = \{ e_i \colon a_i\in \pttn_j \}$.  For the sake of contradiction, assume that there exists $F_j = \{ e_{i_1}, e_{i_2}, e_{i_3} \}$ that does not form $P_3$, $K_3$, or $K_{1,3}$. Consequently, there is an edge in $F_j$, say $e_{i_1}$, such that $e_{i_1} \cap e_{i_2} = e_{i_1} \cap e_{i_3} = \emptyset$. Hence,  agent $a_{i_1}$ has no friend in $\pttn(a_{i_1})$ but has two enemies, which contradicts that $\pttn$ is IR. Therefore, each $F_j$ forms a $P_3$, a $K_3$, or a $K_{1,3}$ in~$H$, which completes the proof.
}

Contrasting \Cref{thm:FO:BSC:ubAtMostThree:NPc}, we show that if each coalition must be either a pair of agents or a singleton, then our problems are solvable efficiently.
Our algorithm relies on the observation that a coalition formed by two agents is IR if and only if there is no enmity between these two agents (in any direction). By searching for a matching of an appropriate size in the graph where two agents are connected by an edge if and only if neither considers the other an enemy, we can solve our instance in the same time necessary for finding a maximum matching.
\begin{restatable}[\linkproof{thm:FO:SCC:ubAtMostTwo:poly}]{theorem}{thmFOSCCubAtMostTwopoly}
\label{thm:FO:SCC:ubAtMostTwo:poly}
    If $\ub(j) \leq 2$ for every $j\in[\numCoals]$, then \IRSCCfo{} $\in \cc{P}$.%
\end{restatable}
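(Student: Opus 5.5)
The plan is to reduce to maximum matching plus a small counting/flow check. Since every coalition has size at most~$2$, a feasible coalition structure consists only of empty coalitions (allowed when $\lb(j)=0$), singletons, and pairs. By \Cref{obs:IR:characterization}(a), a singleton is always IR because it contains no enemy of its member. A pair $\{a,b\}$ is IR iff neither agent considers the other an enemy: if, say, $b\in\enemies_a$, then $a$ has an enemy and no friend in the pair, and otherwise both agents have no enemy in it. So I will build the undirected \emph{compatibility graph}~$H$ on~$\agents$, with $\{a,b\}$ an edge iff $b\notin\enemies_a$ and $a\notin\enemies_b$. Then the pairs of any IR coalition structure form a matching in~$H$, and conversely any matching of~$H$, with all unmatched agents as singletons, gives IR coalitions.

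Next I will separate the graph part from the size constraints. Fix the number~$p$ of pairs; then there are exactly $s=\numAgents-2p$ singletons. A matching with exactly $p$ edges exists iff $p\le \nu(H)$, the maximum matching size, since any submatching of a maximum matching is again a matching. I compute $\nu(H)$ once in polynomial time (Edmonds' algorithm). The point is that the structure of~$H$ matters only through~$\nu(H)$: once some $p$-edge matching is fixed, which agents are paired no longer affects whether the pairs, singletons and empty sets can be placed into coalitions.

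For each $p\in\{0,\dots,\min(\nu(H),\lfloor \numAgents/2\rfloor)\}$, I then check whether the $p$ pairs, the $s$ singletons and $\numCoals-p-s$ empty sets can be assigned bijectively to the coalition indices $j\in[\numCoals]$, where index~$j$ accepts sizes in $[\lb(j),\min(\ub(j),2)]$. If $p+s>\numCoals$, this value of~$p$ is rejected immediately. Otherwise this is a bipartite perfect-matching (equivalently, max-flow) instance. On one side are three item types, with multiplicities $p$, $s$ and $\numCoals-p-s$; on the other side are the $\numCoals$ indices. It is solvable in polynomial time, or even greedily, by grouping the indices into the at most six interval types of $\{0,1,2\}$. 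The answer is \Yes iff some~$p$ passes this check. Correctness in both directions follows from the previous paragraph: a solution yields a matching of size $p\le\nu(H)$ together with a valid size assignment, and conversely.

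The only step needing care is this last assignment, in particular handling lower bounds of~$0$ and upper bounds below~$2$ uniformly. Modeling it as a flow problem over the three item types avoids any case analysis. The overall running time is dominated by the single maximum-matching computation plus $\Oh{\numAgents}$ flow checks of polynomial size.
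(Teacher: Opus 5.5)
Your proposal is correct and follows essentially the paper's route: the same compatibility graph $G^{\bar{\enemies}}$, a single maximum-matching computation, and enumeration of the number of pairs. The only difference is that you check the size constraints with a small flow over the interval types, whereas the paper places the pairs greedily (coalitions with lower bound~$2$ first, then lower bound~$1$) and verifies the counting inequality $k_1 \leq |\agents|-2k_2 \leq k-k_2$.
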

\prooftoappendix{thm:FO:SCC:ubAtMostTwo:poly}{\thmFOSCCubAtMostTwopoly*}{
    Before we present our algorithm, we introduce an additional notation. First, we use $G^{\bar{\enemies}}$ to denote an auxiliary graph on the set of agents such that two agents $a$ and $b$ are connected with an edge if and only if neither $a \in \enemies_b$ nor $b \in \enemies_a$. Next, we use $n_{\nicefrac{u}{\ell}}$ to denote the number of all coalitions $j\in[\numCoals]$ such that $\ub(j) = u$ and $\lb(j) = \ell$. Without loss of generality, we assume $n_{\nicefrac{0}{0}} = 0$, as such coalitions can be removed without changing the solution of the game.

    We try all candidate values for the number~$k_2$ of coalitions of size~$2$ in a hypothetical solution~$\pttn$; note that we must have $n_{\nicefrac{2}{2}} \leq  k_2 \leq n_{\nicefrac{2}{2}}+n_{\nicefrac{2}{1}}+n_{\nicefrac{2}{0}}$. Let us remark that fixing $k_2$ to the minimum feasible value~$n_{\nicefrac{2}{2}}$ does not suffice: if there are more remaining agents than remaining coalitions, additional coalitions of size two are needed to accommodate all agents.
    Next, we compute a maximum matching~$M$ in~$G^{\bar\enemies}$.
    Since each coalition of size~$2$ must be a matching in $G^{\bar\enemies}$, we must have $k_2 \leq |M|$; otherwise, we can reject the current value (and continue with the next candidate value for~$k_2$).
    Thus, we may assign $k_2$ arbitrary edges from~$M$ to~$k_2$ coalitions with upper bound~$2$, giving precedence to those with lower bound~$2$ (as $k_2 \geq n_{\nicefrac{2}{2}}$, all such coalitions are now filled up), and then to those with lower bound~$1$. 
    
    Let $k_1$ denote the remaining still empty coalitions with lower bound~$1$. Clearly, we must have $|\agents|-2k_2 \geq k_1$ as otherwise, there are not enough agents to satisfy every lower bound, in which case we can reject. Moreover, if our guesses are correct, then the remaining $k-k_2$ coalitions can each contain at most one agent; this means that we must have $|\agents|-2k_2  \leq k-k_2$. However, if both of these conditions hold, i.e. 
    \begin{equation}
    \label{eq:numberOfSingletons}
    k_1 \leq |\agents|-2k_2 \leq k-k_2,
    \end{equation}
    then we can distribute the remaining $|\agents|-2k_2 $ agents among the yet empty coalitions so that each coalition~$j \in [k]$ with $\lb(j)=1$ contains an agent, 
    and no coalition (other than those containing the $k_2$ pairs from~$M$) contains more than one agent. This algorithm necessitates to check for each $k_2 \in \{n_{\nicefrac{2}{2}},\dots,n_{\nicefrac{2}{2}}+n_{\nicefrac{2}{1}}+n_{\nicefrac{2}{0}}\}$ whether~\Cref{eq:numberOfSingletons} holds. Therefore, the bottleneck in the algorithm is the maximum matching computation, yielding a running time of~$\Oh{\sqrt{|\agents|}\cdot|E(G^{\bar{\enemies}})|}$ using the algorithm by Micali and Vazirani~\citeapp{MicaliV1980} for computing~$M$. 
}

\subsection{Small Number of Enemies}
First, observe that if no agent has enemies, then any coalition structure is IR. Hence, we only need to check whether the size constraints can be respected.
In fact, for this, we only need to check whether 
$\numAgents \geq \sum_{i \in [k]} \lb(i)$ and $\numAgents \leq \sum_{i \in [k]} \ub(i)$, yielding a linear-time algorithm. 

\begin{restatable}[\linkproof{thm:FO:SCC:noEnemies:poly}]{proposition}{thmFOSCCnoEnemiespoly}
\label{thm:FO:SCC:noEnemies:poly}
    If $|\enemies_a| = 0$ for every $a\in\agents$, then 
    \IRSCCfo $\in \cc{P}$.%
\end{restatable}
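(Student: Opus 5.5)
The plan is to first reduce individual rationality to a triviality and then show that only the size constraints remain. By \Cref{obs:IR:characterization}(a), a coalition structure~$\pttn$ under friend-oriented preferences is IR if and only if every agent~$a$ has a friend in~$\pttn(a)$ or no enemy in~$\pttn(a)$. If $\enemies_a=\emptyset$ for all~$a\in\agents$, the second disjunct always holds. Hence \emph{every} coalition structure is IR. The instance is therefore a \Yes-instance if and only if there is some partition $\pttn=(\pttn_1,\dots,\pttn_\numCoals)$ of~$\agents$ with $\lb(j)\le|\pttn_j|\le\ub(j)$ for all~$j\in[\numCoals]$. Here empty parts are allowed whenever $\lb(j)=0$, as in the definition of HG-SCC.

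Second, I will show that such a partition exists if and only if
\[
\sum_{j\in[\numCoals]}\lb(j)\;\le\;\numAgents\;\le\;\sum_{j\in[\numCoals]}\ub(j),
\]
assuming without loss of generality that $\lb(j)\le\ub(j)$ for each~$j$; otherwise we reject immediately. Necessity follows by summing the size constraints over all coalitions. For sufficiency, I use a greedy argument. Start with sizes $s_j=\lb(j)$, whose total is at most~$\numAgents$. Then raise the sizes one coalition at a time, increasing $s_j$ up to $\ub(j)$ until the total reaches exactly~$\numAgents$. This is possible because the totals increase by one at a time from $\sum_j\lb(j)$ up to $\sum_j\ub(j)\ge\numAgents$. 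Finally, assign the agents arbitrarily so that coalition~$j$ receives exactly~$s_j$ of them.

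Third, I will note that this test, together with checking $\lb(j)\le\ub(j)$ and constructing the partition if one is wanted, takes time linear in~$\numCoals+\numAgents$. Thus \IRSCCfo can be solved in polynomial time in this case.

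No step here is a real obstacle. The only point requiring care is to respect the convention that coalitions with lower bound~$0$ may remain empty. The intermediate-value argument in the second step handles this automatically.
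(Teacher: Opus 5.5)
Your proposal is correct and follows the paper's proof essentially verbatim. Since there are no enmities, every coalition structure is IR, so it suffices to check $\sum_{j}\lb(j)\le\numAgents\le\sum_{j}\ub(j)$ and then fill each coalition up to its lower bound and distribute the remaining agents without exceeding any upper bound. Your explicit check that $\lb(j)\le\ub(j)$ for every~$j$ is a small but worthwhile addition that the paper leaves implicit.
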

\prooftoappendix{thm:FO:SCC:noEnemies:poly}{\thmFOSCCnoEnemiespoly*}{
    First, we need to check if $\numAgents \geq \sum_{i \in [k]} \lb(i)$ and $\numAgents \leq \sum_{i \in [k]} \ub(i)$, as otherwise the size constraints cannot be respected by any coalition structure. If so, then it is not hard to see that the instance is a \Yes-instance:
    since any coalition structure is IR (because there are no enmities), we can first fill up each coalition~$i$ up to its lower bound~$\lb(i)$, and then distribute the remaining agents in an arbitrary way as long as no upper bound is exceeded.    
}

If each agent has only one enemy, then we can create a balanced coalition structure (i.e., where all coalition sizes differ by at most~$1$) 
using a simple and fast algorithm: It suffices to place each agent that has some enemy into a coalition not shared by their unique mutual enemy. To achieve this, we fix an ordering of the agents so that agents with at least one enemy precede those without enemies, and each agent is next to their enemy (if they have one). Applying the round robin method based on this ordering to place the agents into~$k$ coalitions, we obtain an IR coalition structure.
\begin{restatable}[\linkproof{thm:FO:k:oneEnemy:poly}]{theorem}{thmFOkoneEnemypoly}   
\label{thm:FO:k:oneEnemy:poly}
\label{thm:FO:BSC:oneEnemy:poly}
    If $|\enemies_a| \leq 1$ for every $a\in\agents$ and enmities are symmetric, then 
    \IRBSCfo and \IRkfo can be solved in polynomial time. 
\end{restatable}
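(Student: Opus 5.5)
The plan is to make the round-robin idea sketched before the statement precise. Under symmetric enmities with $|\enemies_a|\le 1$, the enmity graph $G^\enemies$ is a matching. Write $P_1,\dots,P_p$ for its edges and $U$ for the set of agents that have no enemy. By \Cref{obs:IR:characterization}(a), a coalition structure is IR whenever no coalition contains both endpoints of some $P_\ell$. This condition is sufficient but not necessary, since friendships could compensate. However, we only need to exhibit a solution whenever one exists. So it suffices to show that whenever the size constraints are satisfiable at all, some balanced structure separating every enemy pair exists. Friendships can then be ignored entirely.

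For \IRBSCfo with $k\ge 2$ we proceed as follows.
\begin{enumerate}
\item Order the agents as $P_1=\{u_1,v_1\},\dots,P_p=\{u_p,v_p\}$, listing them $u_1,v_1,u_2,v_2,\dots,u_p,v_p$, followed by the agents of $U$ in arbitrary order.
\item Assign the $t$-th agent in this order to coalition $((t-1)\bmod k)+1$. The resulting coalition sizes differ by at most one.
\item Permute the coalition indices so that the prescribed sizes $\lfloor n/k\rfloor$ and $\lceil n/k\rceil$ match. Since the prescribed sizes sum to $n$, the multiset of sizes coincides, so this is always possible.
\item Check that each enemy pair is split. The two enemies in a pair occupy consecutive positions $t$ and $t+1$, which are sent to different coalitions because $k\ge 2$.
\end{enumerate}
For $k=1$ we simply check whether the grand coalition is IR, as in \Cref{thm:FO:SCC:grandCoal:poly}. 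Hence the answer is always \emph{yes} when $k\ge 2$ and $k\le n$, and the algorithm runs in linear time.

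For \IRkfo we require only that all $k$ coalitions be non-empty. If $k\le n$, the balanced structure above has every coalition of size at least $\lfloor n/k\rfloor\ge 1$ and is therefore a solution. If $k>n$, the instance is trivially a \emph{no}-instance. If $k=1$, we again check the grand coalition.

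The only delicate point is the correctness of the consecutive-position argument. The enemy pairs are listed first and contiguously, so the positions of each pair are adjacent. Adjacent positions differ by one modulo $k$, and so they land in distinct coalitions whenever $k\ge 2$. This is where symmetry is used: with asymmetric enmities the enmity graph can contain longer paths or cycles, and the adjacency trick breaks down.
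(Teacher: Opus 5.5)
Your proposal is correct and follows the paper's proof: handle $k=1$ by checking the grand coalition, and for $k\geq 2$ order the agents so that enemy pairs come first and occupy consecutive positions, then distribute them round-robin. Your extra step of permuting coalition indices so that the round-robin sizes match the prescribed balanced sizes fills in a detail the paper leaves implicit.
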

\prooftoappendix{thm:FO:k:oneEnemy:poly}{\thmFOkoneEnemypoly*}{
    If $\numCoals=1$, we can decide the instance directly using \Cref{thm:FO:SCC:grandCoal:poly}. Hence, we assume $\numCoals \geq 2$. We fix an order $a_1,\ldots,a_\numAgents$ of agents so that agents with at least one enemy are first and each of them is next to their enemy. Only after these agents follow the agents without enemies.

    Now, we put the $i$-th agent in the $((i\bmod \numCoals)+1)$-th coalition. Since $\numCoals\geq 2$, no pair of enemies ends up in the same coalition. Consequently, the coalition structure is IR. Moreover, it is easy to see that each coalition is non-empty and the coalitions differ in size by at most $1$.
}

\Cref{thm:FO:BSC:oneEnemy:poly} is contrasted by our next result, which shows that both \IRBSCfo\ and \IRkfo\ become intractable once each agent has four enemies, even if the number of coalitions is only~$k=3$. 
This result follows from the \NPhness of the classic $3$-coloring problem and its equitable variant, where the sizes of the color classes have to differ by at most~$1$; both of these problems are known to be \NPc even on $4$-regular graphs~\cite{FurmanczykM2022}. 
Our reduction is very simple: each vertex is represented by an agent with no friends, and two agents are enemies if and only if the corresponding vertices are adjacent in the input graph.

\begin{restatable}[\linkproof{thm:FO:k:fourEnemies:NPh}]{theorem}{thmFOkfourEnemiesNPh}
\label{thm:FO:k:fourEnemies:NPh}
    \IRBSCfo and \IRkfo are \NPc even if preferences are symmetric,  $\numCoals=3$, and $|\enemies_a|=4$ for every $a\in\agents$. 
\end{restatable}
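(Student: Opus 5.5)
The plan is a direct polynomial-time reduction from \probName{$3$-Coloring} (for \IRkfo) and from \probName{Equitable $3$-Coloring} (for \IRBSCfo). Both problems are \NPc on $4$-regular graphs~\cite{FurmanczykM2022}. Membership in \NP\ was already noted in \Cref{sec:preliminaries}, so only hardness needs an argument.

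Given a $4$-regular graph $H=(V,F)$, I would create one agent $a_v$ for each vertex $v\in V$ and give it no friends. Two agents $a_u,a_v$ are declared mutual enemies if and only if $\{u,v\}\in F$, and all other pairs are neutral. The preferences are then symmetric and $|\enemies_a|=4$ for every agent $a$. For \IRkfo\ I set $\numCoals=3$. For \IRBSCfo\ I also set $\numCoals=3$, with sizes $\lb(j)=\ub(j)\in\{\lfloor n/3\rfloor,\lceil n/3\rceil\}$ summing to $n=|V|$; the multiset of such sizes is uniquely determined by $n$.

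Correctness rests on \Cref{obs:IR:characterization}(a). Since no agent has a friend, a coalition structure is IR exactly when no agent shares its coalition with an enemy. In other words, every coalition must be an independent set in $H$. So the IR coalition structures with three parts correspond bijectively to proper colorings of $H$ with color classes $\pttn_1,\pttn_2,\pttn_3$. The balanced-size requirement corresponds exactly to the equitable condition that class sizes differ by at most one.

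The only point that needs care is that \IRkfo\ demands exactly three \emph{non-empty} coalitions, whereas a $3$-coloring may leave a color class empty. I would handle this with a short argument. A $4$-regular graph contains at least one edge, so any proper coloring uses at least two colors. If it uses exactly two, then some class has at least two vertices, because $n\geq 5$. Any single vertex of that class can be recolored to the unused third color without breaking properness, which yields a proper coloring with three non-empty classes. For the balanced variant this issue does not arise, because $n\geq 5$ forces every prescribed size to be at least $1$. Both directions of the equivalence are otherwise immediate, so I expect no real obstacle beyond this non-emptiness detail.
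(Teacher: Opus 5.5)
Your proposal is correct and follows the paper's proof: the same reduction from \probName{$3$-Coloring} and \probName{Equitable $3$-Coloring} on $4$-regular graphs, using friendless agents whose enmities are exactly the edges of~$H$. The only difference is how three non-empty coalitions are guaranteed for \IRkfo. The paper assumes without loss of generality that $H$ is not $2$-colorable, whereas your recoloring step (moving one vertex of a class of size at least two into the unused color) shows directly that any proper $3$-coloring of~$H$ can be turned into one with three non-empty classes; this works equally well.
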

\prooftoappendix{thm:FO:k:fourEnemies:NPh}{\thmFOkfourEnemiesNPh*}{
    We show hardness of \IRkfo by a simple reduction from the \probName{$3$-Coloring} problem, which is \NPc even if the underlying graph $H$ is $4$-regular~\citeapp{GareyJS1976}. Moreover, we assume that $H$ is not $2$-colorable.

    We set $G=H$ and color all edges in red. That is, the agents have only enemies and no friends. Now, given a coloring of $H$, each color class induces an independent set, so if we use the same partition, we have an IR partition with three non-empty coalitions. In the opposite direction, each coalition of an IR coalition structure must induce an independent set, as otherwise it contains two agents that are mutual enemies. That is, coloring the agents according to the part they belong to in~$\pttn$ is a proper coloring of the original graph.

    The hardness for \IRBSCfo is identical; we just start with the \probName{Equitable $3$-Coloring} problem, which asks for a coloring where every color class is of the same size and is \NPc under the same restrictions~\cite{FurmanczykM2022}.
}

For the variant where the coalition structure is not required to be balanced but the coalition sizes are still fixed, the problem remains hard even if each agent has exactly three enemies;
to show this, we use a reduction  from  the \NPc \probName{Semi-Equitable $3$-Coloring} problem~\cite{FurmanczykK2016} where the color classes of an $n$-vertex input graph need to be of size~$4n/10$, $3n/10$, and $3n/10$.

\begin{restatable}[\linkproof{thm:FO:FSC:threeEnemies:NPh}]{theorem}{thmFOFSCthreeEnemiesNPh}
\label{thm:FO:FSC:threeEnemies:NPh}
    \IRFSCfo is \NPc even if  preferences are symmetric,  $\numCoals=3$, and $|\enemies_a| = 3$ for every $a\in\agents$.
\end{restatable}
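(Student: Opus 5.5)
The reduction will mirror the proof of \Cref{thm:FO:k:fourEnemies:NPh}, this time starting from \probName{Semi-Equitable $3$-Coloring}. That problem is \NPc by~\cite{FurmanczykK2016}, and I will use it in the form restricted to $3$-regular (cubic) graphs. I expect that to be the restriction proved there, since semi-equitable colorings are studied for cubic graphs and the $4n/10$, $3n/10$, $3n/10$ split is the natural one for cubic graphs with $10 \mid n$. Given a cubic graph $H$ on $n$ vertices, I build an agent $a_v$ for every vertex~$v$. Each agent gets no friends, and two agents are enemies exactly when the corresponding vertices are adjacent. All relations are therefore symmetric, and every agent has exactly three enemies. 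I set $\numCoals=3$ with $\lb(1)=\ub(1)=4n/10$ and $\lb(2)=\ub(2)=\lb(3)=\ub(3)=3n/10$; these sizes sum to~$n$, as the FSC setting requires.

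For correctness I will use \Cref{obs:IR:characterization}(a). Agents have no friends, so $\pttn$ is IR if and only if no agent has an enemy in its own coalition. Equivalently, every coalition must be an independent set of~$H$. A semi-equitable proper $3$-coloring with classes of sizes $4n/10, 3n/10, 3n/10$ therefore gives a fixed-size IR coalition structure directly. Conversely, an IR structure with the prescribed sizes is such a coloring. Membership in \NP{} was already noted in \Cref{sec:preliminaries}.

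The main obstacle is checking the exact source result. I need hardness for cubic graphs with precisely these class sizes (and $n$ divisible by~$10$) in~\cite{FurmanczykK2016}. If that paper instead states hardness only for some specific semi-equitable split, I will set the fixed sizes to that split; the argument above does not depend on the particular sizes. If the hardness is stated only for graphs of maximum degree~$3$ rather than cubic graphs, I would make the graph $3$-regular by gadgets. For a low-degree vertex~$v$, attach copies of a small graph such as $K_4$ minus an edge, connecting through its two degree-$2$ vertices. Each such gadget must admit a $3$-coloring whose class-size contribution is balanced and adjustable, and I would compensate for it in the fixed sizes. That padding, and the accompanying bookkeeping of class sizes, is the step I expect to need care. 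The rest of the proof is routine.
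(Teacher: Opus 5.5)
Your proposal is correct and matches the paper's proof: it reuses the friendless reduction of \Cref{thm:FO:k:fourEnemies:NPh}, applied to \probName{Semi-Equitable $3$-Coloring} on cubic graphs, with $\numCoals=3$ and fixed sizes $4n/10$, $3n/10$, $3n/10$. The paper relies on exactly this cubic-graph form of the cited result, so the degree-padding gadgets you sketch as a fallback are not needed.
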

\prooftoappendix{thm:FO:FSC:threeEnemies:NPh}{\thmFOFSCthreeEnemiesNPh*}{
    The reduction is the same as in \Cref{thm:FO:k:fourEnemies:NPh}, with the difference that we reduce from the \NPc \probName{Semi-Equitable $3$-Coloring} problem~\cite{FurmanczykK2016}. In this problem, given a cubic graph~$H$, the task is to find a proper $3$-coloring of~$H$ such that the sizes of the color classes are $4n/10$, $3n/10$, and $3n/10$. Hence, we take the same graph, color all edges red, and set $\lb(1)=\ub(1)= 4n/10$ and $\lb(j)=\ub(j)=3n/10$ for $j\in\{2,3\}$. The correctness then follows from the arguments of \Cref{thm:FO:k:fourEnemies:NPh}.
}

\subsection{Small Number of Recluses}
Notice that under friend-oriented preferences, agents without friends are usually the source of computational intractability, since they must be placed in a coalition that contains none of their enemies, and thus we may need to seek coalitions that form independent sets in the enemy graph---an idea utilized in our reduction from \probName{3-Coloring} and its variants in the proofs of
\Cref{thm:FO:k:fourEnemies:NPh,thm:FO:FSC:threeEnemies:NPh}. Hence, it makes sense to study how the number of such agents influences the complexity of our problems, motivating the following definition.

\begin{definition}
    An agent $a\in\agents$ is called a \emph{recluse} if $\friends_a = \emptyset$ and $|\enemies_a| \geq 1$.
\end{definition}

Notice that with upper and lower bounds on the number of coalition sizes, \Cref{thm:FO:BSC:ubAtMostThree:NPc}
implies that even symmetric instances with no recluses are intractable. For \IRkfo, i.e., when only the number of coalitions is given, \Cref{thm:FO:BSC:twoCoals:NPh} shows that even two recluses suffice to cause \NPhness. By contrast, \IRkfo can be solved if preferences are symmetric and there are no recluses.

\begin{theorem}\label{thm:FO:k:symmetric:atLeastOneFriend:poly}
    If %
    there are no recluses, every symmetric \IRkfo $\Gamma$ admits an individually rational coalition structure which can be found in polynomial time.
\end{theorem}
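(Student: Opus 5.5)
The plan is to build the coalition structure directly from the friendship graph~$G^\friends$, which is undirected by symmetry. The underlying idea is that, by \Cref{obs:IR:characterization}(a), a coalition is IR as soon as every member either has no enemy in it or has a friend in it. Two kinds of sets are therefore automatically IR: singletons, and any set $S$ such that every agent of $S$ has a (mutual) friend in $S$. Unions of sets of the second kind keep that property, as do unions with enemy-free agents. We only need to hit exactly~$\numCoals \le \numAgents$ non-empty coalitions.

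\emph{Step 1 (initial pieces).} Let $Z$ be the set of isolated vertices of~$G^\friends$. Since there are no recluses, every agent in~$Z$ has no enemies at all. Take a spanning forest of $G^\friends - Z$; each of its trees has at least two vertices. Decompose each tree into vertex-disjoint stars with at least two vertices each. This is the standard argument: repeatedly take a deepest non-leaf vertex~$v$ in a rooted tree, cut off $v$ together with its leaf children, and if this leaves the parent isolated, attach the parent to $v$'s star instead. Because friendship is symmetric, every vertex of such a star has a friend inside it. The pieces are then these stars plus one singleton $\{z\}$ for each $z\in Z$. Let $m$ be the number of pieces; all pieces are IR. This step is the only real technical point, and it is exactly where symmetry is used. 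In the asymmetric case, a tree edge $(a,b)$ does not guarantee that $b$ likes~$a$, which matches \Cref{thm:FO:k:asymmetric:atLeastOneFriend:NPh}.

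\emph{Step 2 (reaching exactly $\numCoals$ coalitions).} If $m\ge \numCoals$, merge pieces arbitrarily until exactly $\numCoals$ remain. Each merged coalition is still IR, because every agent with an enemy lies in a star piece and still has its star-friend present. If $m<\numCoals$, refine one step at a time, each step increasing the number of coalitions by exactly one. From a star with at least three vertices, detach a leaf into a singleton; the remaining star still has at least two vertices, and a singleton is always IR. From a star with exactly two vertices, split it into two singletons. Since the all-singletons partition has $\numAgents\ge\numCoals$ parts and each step adds exactly one part, we reach exactly $\numCoals$ non-empty IR coalitions.

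Everything involved (spanning forest, star decomposition, merging/splitting) takes polynomial time, indeed linear. I expect the main care to be needed in stating the star decomposition cleanly and in checking that the IR invariant ``every agent with enemies is either alone or has a friend in its coalition'' survives both merges and splits.
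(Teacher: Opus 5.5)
Your proof is correct and follows the same strategy as the paper. You start from blocks in which every agent with an enemy has a mutual friend. If there are at least $\numCoals$ blocks you merge them; otherwise you split off singletons one at a time until exactly $\numCoals$ coalitions exist. The only difference is that the paper takes the connected components of $G^\friends$ as its initial blocks and peels leaves directly off their spanning trees, so your star-decomposition step, which you flag as the main technical point, is not actually needed.
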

\begin{proof}
    Recall that $\numCoals \leq \numAgents$. We first identify all connected components $K_1,\ldots,K_\ell$ of the friendship graph $G^\friends$. %
    Due to the absence of recluses,  each~$K_j$ is either of size at least two or contains a single agent that has neither friends nor enemies.

    If $\numCoals \leq  \ell$, then we set $\pttn_j = V(K_j)$ for each $j\in[\numCoals-1]$ and $\pttn_\numCoals = \bigcup_{j=\numCoals}^\ell 
    V(K_j)$. 
    Then $\pttn$ has $k$ non-empty coalitions and is IR due to our previous observation.

    If $\numCoals > \ell$, then for every component $K_j$, $j\in[\ell]$, we compute a spanning tree~$T_j$. 
    We start by setting $\pttn_j = V(K_j)$ for every $j\in[\ell]$. This is IR as shown above, but coalitions~$\pi_j$ for $j>\ell$ are empty. Hence, we do the following: let $\pttn_j$ be a non-empty coalition of size at least two and $\pttn_{j'}$ be an empty coalition. We find agent $a\in\pttn_j$ that is a leaf of the spanning tree~$T_j$; we move~$a$ from~$\pttn_j$ to~$\pttn_{j'}$. %
    Now, agent $a$ is in a singleton coalition, so $\pttn_{j'}$ is IR for~$a$. Since we removed a leaf from the tree of~$T_j$, the coalition~$\pttn_j \setminus \{a\}$ remains also IR, since each agent of~$\pttn_j\setminus\{a\}$ still has at least one friend in~$\pttn_j\setminus\{a\}$---their neighbor in $T_j\setminus\{a\}$. The procedure terminates once all~$k$ coalitions are non-empty. The invariant of individual rationality is satisfied after each transfer.%
\end{proof}

Unfortunately, the algorithm of Theorem~\ref{thm:FO:k:symmetric:atLeastOneFriend:poly} cannot be extended to the asymmetric case, as our next result shows.

\begin{restatable}[\linkproof{thm:FO:k:asymmetric:atLeastOneFriend:NPh}]{theorem}{thmFOkasymmetricatLeastOneFriendNPh}
\label{thm:FO:k:asymmetric:atLeastOneFriend:NPh}
\IRkfo is \NPc even if there are no recluses.
\end{restatable}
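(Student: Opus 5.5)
The plan is to adapt the reduction from \probName{Monotone $4$-Regular NAE 3-SAT} used for \Cref{thm:FO:k:twoCoals:NPh}, again with $\numCoals=2$. In that construction the only recluses are the guards $r^+$ and $r^-$. Their job is to be \emph{friendless enemies} of the clause agents of the opposite sign, which splits the clause agents into $\{c_j^+\}$ and $\{c_j^-\}$. I will keep the clause and variable agents, and all their friendships and enmities, unchanged. The new ingredient is a guard gadget in which every agent has a friend, but in which every IR $2$-partition still separates $r^+$ from $r^-$ and denies each guard the company of its friends. Asymmetry is essential here: by \Cref{thm:FO:k:symmetric:atLeastOneFriend:poly}, no such gadget exists with symmetric relations.

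The building block I would try first is a directed cycle on agents $g_0,\dots,g_{\ell-1}$ in which $g_i$ considers $g_{i+1}$ a friend and $g_{i-1}$ an enemy, with indices modulo $\ell$. A coalition containing two consecutive agents $g_{i-1},g_i$ but not $g_{i+1}$ is not IR for $g_i$. Hence, in a $2$-coalition structure, the cycle is either entirely inside one coalition, or split so that no two consecutive agents share a coalition. For even $\ell=4$, the second option is exactly the alternating split $\{g_0,g_2\}$ versus $\{g_1,g_3\}$; in it, every $g_i$ is without its unique friend, so it behaves like a recluse for all further enmities.

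I then identify $r^+:=g_0$ and $r^-:=g_1$ and give $g_0$ the clause agents $c_j^-$ as additional enemies, and $g_1$ the agents $c_j^+$. In the alternating split, $g_0$ (resp.\ $g_1$) has no friend present and therefore acts exactly as the recluse $r^+$ (resp.\ $r^-$) in \Cref{thm:FO:k:twoCoals:NPh}. Both directions of the correctness proof then carry over verbatim, with $g_2$ placed alongside $r^+$ and $g_3$ alongside $r^-$. The valid-assignment direction is immediate: the only new agents have no friend but also no enemy in their coalition.

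The main obstacle is the \emph{all-together} option of the cycle. If all four $g_i$ share a coalition, each has its friend present and tolerates any enemies, so the guards lose their force. I would rule this out with an extra asymmetric agent. One candidate is an agent $z$ whose unique friend is $g_0$ and whose enemies are $g_2$ and $g_3$, combined with making $z$ an enemy-without-reciprocation of some agent that must then sit with the cycle. More robustly, I would use two disjoint $4$-cycles $G$ and $G'$ together with cross enmities ($g_0$ hates $g'_0$, and so on), chosen so that both cycles cannot be monochromatic simultaneously without violating IR for an agent whose friend is on the other cycle. I would then argue that a monochromatic cycle is forced to be split. Verifying that no mixed configuration of such a gadget survives IR is the delicate case analysis I expect to dominate the proof. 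The final step is checking that each agent has at least one friend, so there are no recluses, and that membership in \NP{} holds as noted in \Cref{sec:preliminaries}.
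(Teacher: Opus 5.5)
\textbf{There is a genuine gap, and it is in the step you defer.} Ruling out the all-together configuration cannot be done by any gadget while the clause and variable agents keep their friendships, so every instance your reduction produces is a \emph{yes}-instance.

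Here is why. Take $\numCoals=2$, no recluses, and any agent $v$. The structure $(\agents\setminus\{v\},\{v\})$ is IR unless some agent $u$ with $\enemies_u\neq\emptyset$ has $\friends_u=\{v\}$. Every other conflicting agent has a nonempty friend set not equal to $\{v\}$, so it keeps a friend in $\agents\setminus\{v\}$.

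In your instances, no agent has a variable agent as its unique friend. So for any variable agent $a_i$, the structure $(\agents\setminus\{a_i\},\{a_i\})$ is always IR:
\begin{itemize}
\item each gadget agent still has its friend, since the whole cycle stays together;
\item each clause agent keeps at least two of its three friends;
\item variable agents have no enemies.
\end{itemize}
This holds whatever you attach, whether $z$ or a second cycle $G'$. The two-cycle variant is also refuted directly: take $G$ as one coalition and all remaining agents as the other.

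The same argument gives a stronger constraint. In a no-instance with $\numCoals=2$ and no recluses, every agent must be the unique friend of some conflicting agent. By counting, every agent then has exactly one friend, so $G^\friends$ is a disjoint union of directed cycles. If there were two cycles, one cycle versus the rest would be IR. Hence $G^\friends$ must be a single directed Hamiltonian cycle, which is incompatible with reusing the clause/variable part of \Cref{thm:FO:k:twoCoals:NPh}.

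\textbf{The missing idea is to let the number of coalitions do the forcing, not a gadget.} The theorem does not require $\numCoals=2$. The paper reduces from \probName{3-Coloring} as follows.
\begin{itemize}
\item Each vertex agent $a_u$ lies on its own directed friendship cycle of length $K=|V(H)|^2$, together with $K-1$ dummies.
\item Each dummy considers everyone except its unique friend an enemy.
\item $a_u$ considers all dummies other than its friend enemies, and also the vertex agents of adjacent vertices.
\end{itemize}
Consequently, each cycle either lies inside one coalition or all its dummies are singletons. The paper sets $\numCoals=(K-1)|V(H)|+3$. Keeping even one cycle together leaves at most $(|V(H)|-1)K+1<\numCoals$ coalitions, so all dummies must be singletons. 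Then every vertex agent has lost its only friend, and the vertex agents must fill the remaining three coalitions as enemy-free sets, that is, as the color classes of a proper $3$-coloring.
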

\begin{proofsketch}
    We reduce from \probName{3-Coloring}. Let $H$ be the input graph. We construct an instance $\Gamma$ of \IRkfo as follows.
    Let us set~$K=|V(H)|^2$. For each vertex $u \in V(H)$, we introduce a \emph{vertex-agent}~$a_u$ and $K-1$ additional dummy agents; these $K$ agents together will form a directed cycle of length~$K$ in the friendship graph~$G^\friends$. In fact, $G^\friends$ will be the disjoint union of these $|V(H)|$ cycles, which immediately implies that no recluses are contained in our instance, as each agent has exactly one friend. 
    Besides these friendships, we let each dummy agent consider every agent other than their unique friend and themselves an enemy. Each vertex agent considers all dummy agents other than their unique friend as an enemy. Finally, two vertex agents $a_u$ and~$a_{v}$ consider each other mutual enemies if $u$ and~$v$ are adjacent in~$H$. Finally, we set %
    $k=(K-1) \cdot |V(H)| +3$. It is not hard to verify that $H$ is 3-colorable if and only if $\Gamma$ admits an IR coalition structure with exactly $k$ non-empty coalitions.
\end{proofsketch}
\prooftoappendix{thm:FO:k:asymmetric:atLeastOneFriend:NPh}{\thmFOkasymmetricatLeastOneFriendNPh*}{
    We reduce from \probName{3-Coloring}. Let $H$ be the input graph. We construct an instance $\Gamma$ of \IRkfo as follows.
    Let us set~$K=|V(H)|^2$. For each vertex $u \in V(H)$, we introduce a \emph{vertex-agent}~$a_u$ and $K-1$ additional dummy agents; these $K$ agents together will form a directed cycle of length~$K$ in the friendship graph~$G^\friends$. In fact, $G^\friends$ will be the disjoint union of these $|V(H)|$ cycles, which immediately implies that no recluses are contained in our instance, as each agent has exactly one friend. 
    Besides these friendships, we let each dummy agent consider every agent other than their unique friend and themselves an enemy. Each vertex agent considers all dummy agents other than their unique friend as an enemy. Finally, two vertex agents $a_u$ and~$a_{v}$ consider each other mutual enemies if $u$ and~$v$ are adjacent in~$H$. Finally, we set %
    $k=(K-1) \cdot |V(H)| +3$. It is not hard to verify that $H$ is 3-colorable if and only if $\Gamma$ admits an IR coalition structure with exactly $k$ non-empty coalitions.
    
    First, assume that $H$ is 3-colorable. Create $3$ coalitions by grouping the vertex agents according to the $3$ color classes, and put every dummy into a singleton coalition. No vertex agent~$a_u$ has an enemy in their coalition, since the vertices corresponding to the coalition containing~$a_u$ form an independent set in~$H$. Therefore, the obtained coalition structure is IR and yields exactly $k$ coalitions.

    Assume now that $\Gamma$ admits an IR coalition structure $\pttn$ with exactly $k$ non-empty coalitions. Note that since each dummy vertex hates every agent other than their unique friend, we know that they either form a singleton or must be contained in the same coalition as their friend. This implies that each cycle in~$G^\friends$ either must be contained completely in some coalition of~$\pttn$ or all of its dummy agents form singletons in~$\pttn$. If not all dummies are contained in singletons, then the number of coalitions is at most $(|V(H)|-1)K+1<k$. Thus, all dummies form singletons in~$\pttn$. However, this yields exactly $(K-1)|V(H)|$ singletons in~$\pttn$, so all vertex agents must be grouped into three coalitions. As there are no friendships between vertex agents, these coalitions cannot contain a pair of enemies---that is, the corresponding vertices in~$H$ must form an independent set.%
}

\noindent\Cref{thm:FO:k:symmetric:atLeastOneFriend:poly,thm:FO:k:asymmetric:atLeastOneFriend:NPh} leave open only the symmetric case with exactly one recluse.

\subsection{Restricting the Enmity Graph}
Next, we examine how the structure of the enmity graph influences the tractability of our problems. We consider the vertex cover number of~$G^\enemies$, denoted by~$\vc(G^\enemies)$. We start with a simple algorithm for the case  
when $\vc(G^\enemies) \leq 1$: the main idea is to place the unique problematic agent---the one incident to all edges in~$G^\enemies$---into a singleton.%

\begin{restatable}[\linkproof{thm:FO:k:enmityVcAtMostOne:poly}]{theorem}{thmFOkenmityVcAtMostOnepoly}
\label{thm:FO:k:enmityVcAtMostOne:poly}
    If $\vc(G^\enemies) \leq 1$, 
    then \IRkfo can be solved in linear time.
\end{restatable}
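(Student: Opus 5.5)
The plan is to handle the cases $\numCoals=1$ and $\numCoals\geq 2$ separately, and to show that in the second case the answer is always \Yes. The witness puts the single agent covering the enmity graph into a singleton.

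First, if $\numCoals=1$, the only candidate is the grand coalition. We decide the instance in linear time exactly as in \Cref{thm:FO:SCC:grandCoal:poly}: for every agent $a$, check via \Cref{obs:IR:characterization}(a) that $\agents$ contains a friend of $a$ or no enemy of $a$.

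Now let $\numCoals\geq 2$; recall $\numCoals\leq\numAgents$.
\begin{itemize}
\item If $G^\enemies$ has no arcs, we form any partition into $\numCoals$ non-empty coalitions. This structure is IR because nobody has an enemy, in line with \Cref{thm:FO:SCC:noEnemies:poly}.
\item Otherwise we find a vertex $v$ incident to every red arc, treating arc directions as irrelevant. To do this in linear time, pick an arbitrary red arc $(u,w)$: any vertex cover of size one must be $u$ or $w$. Test each candidate by one scan over the red arcs, and let $v$ be a candidate that passes.
\end{itemize}
We then set $\pttn_1=\{v\}$ and split the remaining $\numAgents-1\geq\numCoals-1$ agents arbitrarily into $\numCoals-1$ non-empty coalitions.

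The correctness check is short. Agent $v$ is alone, so it has no enemy in its coalition. For any other agent $a$, every enemy of $a$ lies on a red arc leaving $a$. Since every red arc is incident to $v$, the only possible enemy of $a$ is $v$ itself, and $v\notin\pttn(a)$. Hence by \Cref{obs:IR:characterization}(a) every agent is satisfied, and $\pttn$ is IR with exactly $\numCoals$ non-empty coalitions.

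So for $\numCoals\geq 2$ the instance is always a \Yes-instance, and the construction runs in linear time. There is no real obstacle here. The only points to be careful about are that enmities may be asymmetric, so the cover is taken in the underlying undirected graph, and that the case $\numCoals=1$ must be treated separately, since there $v$ cannot be isolated.
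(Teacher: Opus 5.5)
Your proposal is correct and follows essentially the same argument as the paper: decide $\numCoals=1$ by checking the grand coalition, and for $\numCoals\geq 2$ put the agent covering all enmity arcs into a singleton and split the remaining agents arbitrarily. The paper leaves implicit two details you make explicit, namely the linear-time procedure for finding the cover vertex and the fact that the other $\numCoals-1$ coalitions can be made non-empty because $\numCoals\leq\numAgents$.
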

\prooftoappendix{thm:FO:k:enmityVcAtMostOne:poly}{\thmFOkenmityVcAtMostOnepoly*}{
    If there are no enmities, then we can solve the instance by \Cref{thm:FO:SCC:noEnemies:poly}.
    So we assume $\vc(G^\enemies) = 1$, and let $c$ be the agent covering all edges in~$G^\enemies$.
    
    If $\numCoals=1$, we can solve the instance in linear time by \Cref{thm:FO:SCC:grandCoal:poly}. 
    Hence, $\numCoals\geq 2$. 
    In this case, however, we can put agent~$c$ alone into coalition $\pttn_1$ and split the remaining agents arbitrarily between the remaining coalitions. As the only enmities are from/towards $c$, this is clearly IR.
}

For symmetric instances, in the case $\vc(G^\enemies) \leq 1$, a polynomial-time algorithm based on a more careful case analysis can solve the \IRSCCfo problem.

\begin{theorem}
\label{thm:SSC:k:symmetric:enmityVcAtMostOne:poly}
    If $\vc(G^\enemies) \leq 1$, 
    then \IRSCCfo for symmetric instances can be solved in polynomial time.
\end{theorem}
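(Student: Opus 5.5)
The plan is to exploit the very restricted structure. If there are no enmities, \Cref{thm:FO:SCC:noEnemies:poly} applies directly. Otherwise let $c$ be the unique agent covering all edges of $G^\enemies$. By symmetry, the set of agents in conflict is $E_c=\enemies_c$, and every agent $e\in E_c$ has exactly one enemy, namely $c$. All other agents have no enemies, so they are IR in every coalition.

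Hence a coalition structure $\pttn$ is IR if and only if the coalition $C=\pttn(c)$ satisfies two conditions. First, either $C\cap\enemies_c=\emptyset$ or $C\cap\friends_c\neq\emptyset$. Second, every $e\in C\cap\enemies_c$ has a friend in $C$. So IR is a condition on the single coalition containing $c$, and the other coalitions only have to respect their size bounds.

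The algorithm therefore guesses the index $j\in[\numCoals]$ of $c$'s coalition, which gives $\numCoals$ choices. For each $j$ it determines the set $S_j$ of sizes $s$ for which some IR coalition $C\ni c$ with $|C|=s$ exists. The remaining sizes are then feasible iff $\lb(j)\le s\le\ub(j)$ and $\sum_{i\neq j}\lb(i)\le n-s\le\sum_{i\neq j}\ub(i)$, exactly as in \Cref{thm:FO:SCC:noEnemies:poly}. This works because, once $C$ is fixed, the other agents can be distributed arbitrarily.

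The main step is characterizing the achievable sizes of an IR coalition around $c$. Two extreme constructions are needed.

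\emph{Smallest IR coalition.} The coalition $\{c\}$ is IR, and adding any agent of $\neutrals_c$ or $\friends_c$ preserves IR. Adding such an agent never creates an unhappy agent, because friends and neutrals of $c$ are not enemies of anyone.

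\emph{Largest IR coalition.} Start from $A$ and repeatedly delete every enemy $e\in\enemies_c$ with no friend in the current set. Keep $c$ only if, at the end, either no enemies of $c$ remain or $c$ still has a friend. Note that deleting enemies never removes friends of $c$, so the second alternative is stable once $\friends_c\neq\emptyset$.

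Between these extremes, the plan is to show that the achievable sizes form an interval. A natural set to study is $U=\{c\}\cup\friends_c\cup\neutrals_c$, which is always IR. From $U$ we can add enemies of $c$ one by one in an order in which each newly added enemy already has a friend present; such an order exists, for example BFS layers in the friendship graph restricted to those enemies reachable from $U$. Conversely, agents of $U\setminus\{c\}$ can be removed one at a time, going down to $\{c\}$.

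The main obstacle is contiguity. Removing an agent of $\friends_c\cup\neutrals_c$ can orphan enemies that are already present. It can also orphan $c$ itself when it removes $c$'s last friend while enemies remain. I would handle this by ordering the operations. Grow first, so that the coalition starts as small as needed using only non-enemies (all sizes $1,\dots,|U|$ are achievable). Then add enemies along BFS trees rooted in $U$, making sure that at least one friend of $c$ is included before the first enemy is added. The sizes above $|U|$ are achieved by adding enemies leaf-order along these trees, up to the maximum closure.

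If $\friends_c=\emptyset$, enemies may never join $c$, and the size set is simply $\{1,\dots,|U|\}$. If $\friends_c\neq\emptyset$, the size set is $\{1,\dots,M\}$, where $M$ is the size of the closure computed from $U$. One subtle point remains: enemies could be supported by friends outside $U$, namely other enemies. This is covered by the BFS closure through friendship among the enemies.

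Checking each $j$ against this interval takes polynomial time, which completes the algorithm.
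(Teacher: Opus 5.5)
Your reduction to the question of which sizes an IR coalition containing $c$ can have is sound. So is the final check over $j$ using the residual bounds.

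The gap is the central claim that, when $\friends_c\neq\emptyset$, these sizes form the interval $\{1,\dots,M\}$. This is false. Take agents $c,f,e_1,e_2$ with $f\in\friends_c$ and $e_1,e_2\in\enemies_c$, where $e_1$ and $e_2$ are mutual friends and all other pairs are neutral. The IR coalitions containing $c$ have sizes $1$, $2$, and $4$ only. The set $\{c,f,e_i\}$ leaves $e_i$ with its enemy $c$ and no friend, and $\{c,e_1,e_2\}$ leaves $c$ with enemies but no friend.

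Your algorithm fails on this under either reading of $M$:
\begin{itemize}
\item If $M$ is the global maximum $4$ produced by your deletion procedure, add two friendless enemies $g_1,g_2$ of $c$ and ask for $\numCoals=2$ coalitions of size exactly $3$. This is a no-instance, but your test accepts it.
\item If $M$ is the BFS closure from $U=\{c,f\}$, then $e_1$ and $e_2$ are never reached, since neither has a friend in $U$. Size $4$ is lost, so, for example, the instance with friendless enemies $g_1,\dots,g_4$ and two coalitions of size $4$ is wrongly rejected.
\end{itemize}
Your ``subtle point'' paragraph does not fix this. Friendship components among $\enemies_c$ with no friendship edge to $U$ are not reached by BFS from $U$. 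They can never be added one agent at a time, only as sets in which every agent has a friend inside the set.

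What is missing is precisely the case analysis of the paper's proof.
\begin{itemize}
\item Order the closure $C$ of $U$ (your BFS) so that $c$ and some $c'\in\friends_c$ come first. Every prefix of length at least $2$ is then IR.
\item The remaining enemies of $c$ that have a friend form a set $X$, all of whose friends lie in $X$. So $X'\subseteq X$ can be added iff every member of $X'$ has a friend in $X'$.
\item The achievable values of $|X'|$ are $0$ and $2,\dots,|X|$, except that only even values occur when $G^\friends[X]$ is a perfect matching.
\item If $|C|\ge 3$, trading the last agent of a prefix for one extra agent of $X$ fills every gap. The size set then really is $\{1,\dots,|C|+|X|\}$.
\item If $C=\{c,c'\}$ (that is, $c$ has one friend, no neutrals, and no enemy of $c$ is a friend of $c'$), then size $3$ is unattainable. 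If moreover $G^\friends[X]$ is a matching, every odd size above $1$ is unattainable.
\end{itemize}
These are exactly the exceptions the paper treats: (i) $s'_j=1$, and (ii) $s'_j$ odd with $G^\friends[X]$ a matching. With the size set computed this way, the rest of your algorithm goes through.
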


\begin{proof}
    If there are no enmities, then we can solve the instance by \Cref{thm:FO:SCC:noEnemies:poly}.
    So we assume $\vc(G^\enemies) = 1$, and let $c$ be the agent covering all edges in~$G^\enemies$. 
    We start by first guessing the coalition~$j$ containing~$c$ and its size~$s_j \in [\numAgents]$ in a solution~$\pi$. Next, we guess whether $c$ has an enemy in~$\pi_j=\pi(c)$.

    If $c$ has no enemies in~$\pi_j$, then we can put an arbitrary set of~$s_j-1$ agents into a coalition~$C$ together with~$c$ from among $\agents \setminus \enemies_c$, and set $C$ is the $j$-th coalition; if $|\agents \setminus \enemies_c| <s_j$,  we can reject the current set of guesses. We can then distribute the remaining agents across the remaining coalitions arbitrarily in a way that respects the lower and upper bounds, if this is possible, since there are no enmities between the remaining agents (see the method in \Cref{thm:FO:SCC:noEnemies:poly}).

    If $c$ has some enemies in~$\pi_j$, then we also guess a friend~$c'$ of~$c$ in~$\pi_j$. Next, 
    we start adding agents to a coalition~$C$, which initially contains~$c$ and~$c'$, as follows: as long as $|C|<s_j$, we try to add an agent to~$C$ that either does not consider~$c$ an enemy or considers some agent in~$C$ a friend. If this procedure stops with~$|C|<s_j$, then consider %
    the set~$X$ of those remaining agents---all of them enemies of~$c$---that have at least one friend. We aim to find 
    a subset~$X' \subseteq X$ of~$s'_j$ agents that is \emph{valid}, that is, each of them has at least one friend in~$X'$. Adding a valid set of size~$s'_j$ to~$C$ yields an IR coalition of size~$s_j$.
    
    Observe that if $s'_j=s_j-|C| > |X|$, then no IR coalition exists of size~$s_j$ containing~$c$, since every enemy of~$c$ in~$\pi_j$ must have a friend in~$\pi_j$ too; thus, we can reject.
    Otherwise, it is possible to find a valid set~$X'$ unless (i) $s'_j=1$, or (ii) $s'_j$ is odd, but the subgraph of~$G^\friends$ induced by~$X$ is the union of independent edges (i.e., a matching). In these cases, if $C=\{c,c'\}$, then no IR coalition of size~$s_j$ contains~$c$, because each edge in~$G^\friends[X]$ must have either 0 or~$2$ of its endpoints in such a coalition. If $|C|>2$, then it suffices to remove the agent added at the last step to~$C$, and add to~$C$ a valid set of~$s'_j+1$ agents from~$X$, obtaining an IR coalition of size~$s_j$ containing~$c$.
    Since there are no enmities between the remaining agents, we can distribute them respecting the upper and lower bounds of the coalitions (or reject if this is not possible). 
    There are at most~$\numAgents^3$ guesses, and each of them can be verified in linear time.\iflong\footnote{In fact, with some care, the algorithm can work without guessing~$|\pi_j|$ in advance, and instead determining all possible sizes that an IR coalition containing~$c$ can have, shaving off a factor of~$\numAgents$ from the running time.}\fi{}
\end{proof}

Unfortunately, the algorithm in \Cref{thm:SSC:k:symmetric:enmityVcAtMostOne:poly} cannot be generalized to the asymmetric case, as shown by the following strong intractability result.

\begin{restatable}[\linkproof{thm:BSC:k:enmityVcAtMostOne:NPh}]{theorem}{thmBSCkenmityVcAtMostOneNPh}
\label{thm:BSC:k:enmityVcAtMostOne:NPh}
    \IRBSCfo is \NPc even if $\vc(G^\enemies)=1$, $\numCoals=2$,
    and $|\enemies_a|\leq 1$ and $|\friends_a| \leq 2$ for each $a \in \agents$.
\end{restatable}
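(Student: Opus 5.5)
The plan is a reduction from \probName{Clique}. The key observation comes first. Since $\vc(G^\enemies)=1$, there is an agent $c$ covering all enmity arcs. We let every arc point \emph{towards} $c$, so $c$ itself has no enemies and $|\enemies_a|\le 1$ holds automatically. Let $\pttn(c)=\pttn_1$ without loss of generality. Then $\pttn$ is IR if and only if every agent $a\in\pttn_1$ with $c\in\enemies_a$ has a friend in $\pttn_1$, by \Cref{obs:IR:characterization}(a). Agents in $\pttn_2$ are unconstrained. So we must choose a set $S\ni c$ of exactly the prescribed size that is ``closed'' in this sense. The side $\pttn_2$ will play the role of the \emph{selected} objects, so that the ``at least one friend'' condition in $\pttn_1$ turns into an AND condition on the complement.

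Given $H=(V,E)$ with $|V|=n$, $|E|=m$ and a target $q$, the construction is as follows. Set $L=m+1$. For each vertex $u$ we create a directed friendship cycle $Z_u$ of $L$ agents. Each of them has its cycle successor as its unique friend and considers $c$ an enemy. For each edge $e=uv$ we create an edge agent $b_e$ with two friends: one fixed agent of $Z_u$ and one of $Z_v$. The agent $b_e$ also considers $c$ an enemy. Finally, we add padding agents $d_1,\dots,d_p$ that have no friends and only $c$ as an enemy. Such an agent can never be in $\pttn_1$. All remaining relations are neutral. Every agent has at most two friends and at most one enemy, and $\vc(G^\enemies)=1$. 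We set $k=2$ and $|\pttn_1|=t:=1+(n-q)L+m-\binom q2$. We choose $p$ so that the total number of agents is $2t$ or $2t\pm 1$, making the sizes balanced. We pick the parity, or add one more padding agent, so that the larger side matches appropriately. We may assume $t$ is large enough, e.g.\ by taking $L$ huge relative to $m$, so that $p\ge 0$.

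For correctness, the first step is that each cycle $Z_u$ lies entirely in $\pttn_1$ or entirely in $\pttn_2$. Indeed, a member of $Z_u$ in $\pttn_1$ forces its successor into $\pttn_1$. Padding agents lie in $\pttn_2$. Hence $|\pttn_1|=1+jL+(\text{number of edge agents in }\pttn_1)$ for some integer $j$. Since $L>m$, the size constraint forces $j=n-q$, so exactly $q$ cycles, the selected set $Q$, lie in $\pttn_2$. Exactly $m-\binom q2$ edge agents then lie in $\pttn_1$. Each such $b_{uv}$ needs $u\notin Q$ or $v\notin Q$. Thus all edges with both endpoints in $Q$, at least $\binom q2$ of them in total, must lie in $\pttn_2$. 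This forces $Q$ to be a clique. Conversely, a $q$-clique $Q$ yields the partition in which $\pttn_1$ contains $c$, the cycles of $V\setminus Q$, and all edge agents not inside $Q$. This is IR and has the right sizes.

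The main obstacle is the size bookkeeping. We must ensure that padding cannot be absorbed into $\pttn_1$, which is why padding agents are made friendless enemies of $c$. We must also pin down exactly which side is larger in the balanced setting, and keep the vertex-weight $L$ separated from the edge counts. These are the points I would verify most carefully.
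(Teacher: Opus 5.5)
Your backward direction fails, and with it the reduction. The only constraint in your instance is on $c$'s side: an edge agent $b_{uv}\in\pi_1$ needs $u\notin Q$ or $v\notin Q$. This forces every edge spanned by $Q$ into $\pi_2$. It does not prevent an edge agent with an endpoint outside $Q$ from also being in $\pi_2$; as you note yourself, agents in $\pi_2$ are unconstrained. Write $e(Q)$ for the number of edges spanned by $Q$. From ``exactly $\binom{q}{2}$ edge agents lie in $\pi_2$'' you only get $e(Q)\le\binom{q}{2}$, which holds for every $q$-set; your phrase ``at least $\binom{q}{2}$ of them'' has the inequality reversed.

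Concretely, take an arbitrary $Q\subseteq V$ with $|Q|=q$. Put into $\pi_2$ the cycles $Z_u$ for $u\in Q$, all edge agents spanned by $Q$, any further $\binom{q}{2}-e(Q)$ edge agents, and the padding; everything else joins $c$. Every edge agent in $\pi_1$ then has an endpoint outside $Q$, hence a friend in $\pi_1$, and the sizes match. So your instance is a yes-instance whenever $m\ge\binom{q}{2}$ and the padding works out, whether or not $H$ has a $q$-clique. The gadget gives $\{u,v\}\subseteq Q\Rightarrow b_{uv}\in\pi_2$, whereas a clique encoding needs the converse.

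An edge agent in $\pi_2$ is never constrained, so the size constraint can only bite by demanding that \emph{many} edge agents sit with $c$, i.e., that many edges have an endpoint among the cycles on $c$'s side. That encodes covering or independence, not density. In your framework, the repair is to require all $m$ edge agents to be with $c$, i.e., $t=1+(n-q)L+m$. Then the $n-q$ cycles on $c$'s side must form a vertex cover of $H$, which gives a reduction from \probName{Vertex Cover} with no regularity assumption.

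The paper instead reduces from \probName{Independent Set} on $r$-regular graphs. It places the $q$ selected cycles on $c$'s side together with exactly $qr$ edge agents, each needing an endpoint in the selected set $S$. At most $qr$ edges touch $S$, with equality iff $S$ is independent.

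Also recheck your padding. Friendless padding can only join $\pi_2$, so $c$'s side must be the smaller half, and $p=2t-(1+nL+m)$. Taking $L$ large makes $p$ \emph{negative} as soon as $n<2q$, so ``$L$ huge'' does not ensure $p\ge 0$. You need $n\ge 2q$, arranged by modifying $H$. Alternatively, do as the paper does: add a long friendship cycle $D$ of agents who dislike $c$, which the size bound forces onto $c$'s side, together with friendless agents $D'$ forced onto the other side.
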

\prooftoappendix{thm:BSC:k:enmityVcAtMostOne:NPh}{\thmBSCkenmityVcAtMostOneNPh*}{
    We reduce from the \probName{Independent Set} problem on regular graphs. Let the input of this problem be~$(H,q)$ for some graph~$H$ in which all vertices have degree~$r$ for some~$r \in \mathbb{N}$. We construct an instance~$\mathcal{I}$ of~\IRBSCfo as follows. 

    Let $n=|V(H)|$; then $m=|E(H)|=\nicefrac{rn}{2}$.
    Let us define integers
    \begin{align*}
        \ell &= rn+1,  
         & 
         L&=n^4+\ell(n-q)+(m-qr),    \\
         s&=(\ell+r)q,
         &
         L'&=n^4+s+1.
    \end{align*}
    
    For each vertex~$v \in V(H)$, we create $\ell$ \emph{vertex agents}~$a_v^0,\dots,a_v^{\ell-1}$, 
    and for each edge~$e \in E(H)$, we create an \emph{edge agent $a_e$}. 
    Next, we create two sets of dummy agents: $D=\{d_0,\dots,d_{L-1}\}$ and $D=\{d'_0,\dots,d'_{L'-1}\}$. Finally, we add a special agent~$c$. We let~$\agents$ denote the set of all agents. 

    The relations of the agents are defined as follows. All agents consider~$c$ as an enemy, but there are no other enmities, meaning that $\{c\}$ is a vertex cover for the enmity graph~$G^\enemies$.
    The friendship graph is constructed as follows: for each vertex~$v \in V(H)$, the corresponding vertex agents form a directed cycle, i.e., each agent~$v_a^i$ considers only~$v_a^{(i +1) \bmod \ell}$ a friend, but no one else. 
    Similarly, the dummy agents in~$D$ form a friendship cycle: $d_i$ considers only~$d_{(i+1) \bmod L}$ a friend. Agents in~$D'$ have no friends. 
    Lastly, each edge agent~$a_e$ corresponding to an edge~$e$ with endpoints~$u$ and~$v$ in~$H$ considers~$a_u^0$ and~$a_v^0$ as friends. To finish the construction of our instance~$\mathcal{I}$, we set the number of desired coalitions to be~$\numCoals=2$.

    Suppose that $\mathcal{I}$ admits a solution~$(\pttn_1,\pttn_2)$; we may assume $c \in \pttn_1$ by symmetry. Note first that agents in~$D'$ cannot be contained in~$\pttn_1$, as $c$ is an enemy of each of them, and they have no friends. 
    Second, observe that for each set~$A^\circ$ of agents that forms a cycle in~$G^\friends$, either $A^\circ \cap \pttn_1=\emptyset$ or $A^\circ \subseteq \pttn_1$, because any agent~$a \in A^\circ \cap \pttn_1$ must have their unique friend in~$\pttn_1$ (by $c \in \enemies_a \cap \pttn_1$)---i.e., the next agent on the cycle.
    
    Note that the number of agents is $\ell n+m+L+L'+1$, so both coalition must have size~\[|\agents|=\frac{\ell n+m+L+L'+1}{2}=
    n^4+\ell n + m+1,\] 
    which implies that $D \subseteq \pttn_1$, as otherwise $|\pttn_2|\geq |D|+|D'|=L+L'>2n^4>\nicefrac{|\agents|}{2}$. 
    Due to $\nicefrac{|\agents|}{2}=L+s+1$, it follows that $\pi_1$ contains~$D \cup \{c\}$ and a set of exactly~$s$ vertex and edge agents. Since the set of vertex agents forms $n$ disjoint cycles of length~$\ell$ in~$G^\friends$, by $\ell>m$ we know that $\pttn_1$ must contain exactly~$q$ such cycles, corresponding to a set~$S$ of~$q$ vertices in~$H$. Moreover, $\pttn_1$ must also contain $qr$ edge agents. Note that each such edge agent~$a_e$ must have a friend in~$\pttn_1$, i.e., $e$ must have an endpoint in~$S$. However, as  $H$ is $r$-regular and $|S|=q$, there can be at most~$qr$ edges incident to some vertex in~$S$, and equality happens only if~$S$ is an independent set. Thus, $S$ is an independent set of size~$q$, as required.      

    For the other direction, it is straightforward to verify that given an independent set~$S$ of size~$q$ in~$H$, the $\ell q$  vertex agents corresponding to the vertices in~$H$, the $r q$ edge agents corresponding to the edges incident to some vertex in~$H$, and the agents in~$D \cup \{c\}$ form an coalition of size~$\nicefrac{|\agents|}{2}$ that is IR. Since $c$ covers all edges in~$G^\enemies$, all remaining~$\nicefrac{|\agents|}{2}$ agents can be put together into another IR coalition, proving yielding a solution for~$\mathcal{I}$. 
}

If preferences are symmetric, we can generalize the algorithm of \Cref{thm:FO:k:enmityVcAtMostOne:poly} for a vertex cover of size~$3$, using an extensive case analysis. 
\begin{restatable}[\linkproof{thm:FO:k:symmetric:enmityVcAtMostThree:poly}]{theorem}{thmFOksymmetricenmityVcAtMostThreepoly}
\label{thm:FO:k:symmetric:enmityVcAtMostThree:poly}
    If $\vc(G^\enemies) \leq 3$ and preferences are symmetric, \IRkfo can be solved in polynomial time. 
\end{restatable}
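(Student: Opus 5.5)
Let $U$ with $|U|\le 3$ be a vertex cover of $G^\enemies$, found by brute force in polynomial time. Every agent outside $U$ has all its enemies inside $U$, and agents of $U$ may be enemies of each other. The cases $\numCoals=1$ and $\vc(G^\enemies)\le1$ are settled by \Cref{thm:FO:SCC:grandCoal:poly} and \Cref{thm:FO:k:enmityVcAtMostOne:poly}, so we assume $\numCoals\ge2$ and $|U|\in\{2,3\}$.

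The key tool is the ``leaf-peeling'' argument from \Cref{thm:FO:k:symmetric:atLeastOneFriend:poly}. Take a coalition that is IR and contains an agent with a friend there, and take a spanning tree of a component of $G^\friends$ restricted to that coalition. Removing a leaf of this tree into an empty coalition as a singleton keeps everything IR, since the remaining agents keep a tree neighbour and the singleton is trivially IR. Hence it suffices to find an IR coalition structure with \emph{at most} $\numCoals$ non-empty coalitions that leaves enough ``peelable'' agents. We then split coalitions until exactly $\numCoals$ are non-empty. Splitting is possible as long as some coalition has a friend-component of size at least two, or is a union of several friend-components (these can be separated, provided no enmity then becomes unjustified; separating whole components never removes a friend). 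Since $\numCoals\le \numAgents$, we run out of splitting moves only when all coalitions are singletons or single vertices, and then we already have $\numAgents\ge\numCoals$ coalitions.

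The plan is therefore to enumerate, over all ways of partitioning $U$ into at most $|U|$ groups, which agents of $U$ share a coalition. For a group $\{u\}$ with $u$ placed alone, no constraint arises. For a pair or triple of mutual enemies placed together, each of them needs a friend in the coalition, and every outside agent put in with $u$ that is an enemy of $u$ needs a friend there. Here symmetry is used: an enemy $b\notin U$ of $u$ placed with $u$ has $u$ as its enemy as well, so $b$ needs a friend too. Agents outside $U$ not adjacent to $U$ in $G^\enemies$ are unconstrained, and the remaining agents are recluses (towards $U$ only) or friend-having agents. Concretely, put every agent $a\notin U$ into a coalition containing no enemy of $a$ from $U$ if one exists. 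With $|U|\le3$ and $\numCoals\ge2$, we enumerate the $O(1)$ possible ``types'' of coalitions (which subset of $U$ they contain), and for each agent outside $U$ we check whether some chosen type is safe for it, either with no enemy or with a friend available. Friends of an agent $a$ outside $U$ may be routed along with $a$: we place whole connected components of $G^\friends$ (not meeting $U$) together when needed, since adding a friend-component to a coalition where all its members are safe keeps IR. Afterwards we apply peeling to reach exactly $\numCoals$ coalitions.

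The main obstacle is the case $|U|=3$ with $\numCoals=2$ (for $\numCoals\ge3$ each $u\in U$ can simply be isolated or grouped safely, and peeling handles the count). With two coalitions, two vertex-cover agents, say $u,v$, must share a coalition. If they are enemies, each needs a friend there, and these friends may have enemies in $U$ or drag in further agents. Likewise, recluses adjacent to all three cover agents cannot be placed at all, which yields immediate rejection. I would handle this by guessing, for each cover agent sharing a coalition with an enemy, one friend (polynomially many guesses). For the remaining agents I would then reason via connected components of $G^\friends$: a component is placed wholly on one side, where each member is safe or has a friend inside, and a $2$-SAT- or bipartition-style assignment over the $O(1)$ sides should decide feasibility. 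After that we verify that both coalitions are non-empty, which peeling also ensures. Getting this case analysis airtight, especially the interaction when friends of cover agents are themselves enemies of other cover agents, is where I expect most of the work.
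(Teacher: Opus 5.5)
\textbf{Gap: the cases $\numCoals\in\{2,3\}$ are not actually solved.} Your reduction to ``at most $\numCoals$ coalitions'' is sound once stated carefully, and it disposes of $\numCoals\ge 4$. The careful version: first split a coalition along whole components of $G^\friends$ restricted to it, and peel a leaf only when the coalition is a single such component. Peeling a leaf of a two-vertex component inside a larger coalition can leave its partner friendless next to an enemy.

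The real content of the theorem is the cases $\numCoals\in\{2,3\}$, and there you only say that a ``2-SAT- or bipartition-style assignment'' should decide feasibility. Constraints of the form ``if $a$ is on this side, some friend of $a$ is on this side'' are disjunctions over arbitrarily many agents, so no 2-SAT formulation is apparent. Moreover, the simplification you rely on is unsound: placing each friend-component wholly on one side is not without loss of generality. Take friendless mutual enemies $p,q$, an enemy $x$ of $p$, an enemy $w$ of $q$, let $x,w$ be friends, and set $\numCoals=2$. The unique IR structure is $\{p,w\},\{q,x\}$, which splits $\{x,w\}$.

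Your claim that $\numCoals\ge 3$ is easy also fails when $|U|=3$. Four pairwise enemies without friends give $G^\enemies=K_4$, so $\vc(G^\enemies)=3$, and this is a no-instance for $\numCoals=3$.

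\textbf{The missing idea: the paper's guess-and-closure scheme.} Guess three things: the partition of the cover induced by a solution, which cover agents are \emph{lonely} (no friend in their coalition), and one friend $a_c$ in the coalition of each non-lonely cover agent $c$. Then three cases arise.
\begin{itemize}
\item If $\numCoals$ exceeds the number of guessed groups, the remaining agents fill the extra coalitions. These are enemy-free, since every enmity touches the cover.
\item If all cover agents are lonely, no coalition may contain an enmity. Each remaining agent is then placed greedily where it has no enemy.
\item Otherwise, pick a non-lonely $c^\star$ and treat its coalition as a sink. First force out all enemies of any lonely cover agent sharing the sink. Then repeatedly force out every unplaced agent that has an enemy in the sink and all of whose friends are already forced out, placing it in another coalition (next to a friend where possible, never next to a lonely enemy). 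Finally put all remaining agents into the sink, where each has either no enemy or a friend that also stays.
\end{itemize}
Symmetry is what makes this work. First, $a_{c^\star}$ regards $c^\star$ as a friend. Second, two friends can never both be forced out during the iteration, since whichever comes first would still have an unforced friend. Hence all friends of a forced-out agent are among the pre-placed agents, and its placement can mirror the hypothetical solution. This monotone closure is what should replace your unresolved component-wise assignment.
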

\prooftoappendix{thm:FO:k:symmetric:enmityVcAtMostThree:poly}{\thmFOksymmetricenmityVcAtMostThreepoly*}{
    We may assume $\numCoals\geq 2$. Let $C$ be a vertex cover of~$G^\enemies$ of minimal size. Let us assume that there exists a solution~$\pi$ for our instance of \IRkfo.
    
    We first guess the partitioning of~$C$ induced by~$\pi$, and we further guess those agents in~$C$ that are \emph{lonely} in~$\pi$, meaning that they have no friend in their coalition in~$\pi$. For each agent in~$c \in C$ that is not lonely, we also guess a friend~$a_c$ of~$c$ in~$\pi(c)$. %
    Let $\wt{\pi}$ denote the obtained partitioning of the set~$\wt{C}$ of all of these agents, i.e., of all agents in~$C$, and the friend~$a_c$ for each lonely agent~$c$ in~$C$.
    Note that each agent~$a \in \wt{C}$ who has an enemy in~$\wt{\pi}(a)$ also has a friend in~$\wt{\pi}(a)$, if our guesses are correct.
    We distinguish between the following cases. 

    \proofcase{A}{$\numCoals> |\wt{\pi}|$.}
    Assuming correct guesses, we know that  $|\agents \setminus \wt{C}| \geq \numCoals -|\pi|$ must hold, as there are exactly $\numCoals -|\pi|$ coalitions containing no agents of~$C$ in~$\pi$, and these coalitions contain a subset of the agents in~$\agents \setminus  \wt{C}$.
    Therefore, we can distribute all agents in $\agents \setminus \wt{C}$ across $\numCoals-|\pi|$ newly created coalitions in an arbitrary manner. 
    Together with $\wt{\pi}$, this yields an IR coalition structure with exactly $\numCoals$ coalitions. 

    \proofcase{B}{$\numCoals=|\wt{\pi}|$ and all agents in~$C$ are lonely.} 
    In this case, we iterate over each of the remaining agents, putting each of them into some (arbitrarily chosen) coalition where they do not have any enemies. Since no coalition in~$\pi$ can contain a pair of enemies (as all agents in~$C$ are lonely), and any enemy of some agent in~$\agents \setminus C$ must be in~$C$, we can place each agent in this manner. In the resulting coalition structure, no agent shares a coalition with an enemy, so individual rationality is satisfied.

    \proofcase{C}{$\numCoals=|\wt{\pi}|$ and there is an agent~$c^\star \in C$ that is not lonely.}
    If $\pi(c^\star)$ contains some lonely agent~$c$ in~$C$, then 
    we know that $\numCoals=2$  (because  $\numCoals=|\wt{\pi}| \leq |C|-1\leq 2$), and the other coalition in~$\pi$ (not $\pi(c)=\pi(c^\star)$) contains exactly one agent of~$C$ which we will denote by~$c'$. Note that 
    no enemy of~$c$ can be contained in $\pi(c^\star)$, so we add all of them to~$\wt{\pi}(c')$. 

    Regardless of whether $\pi(c^\star)$ contains some lonely agent of~$C$ or not, we proceed as follows.
    Let $S$ denote the set of all agents placed into some coalition of~$\wt{\pi}$ other than~$\wt{\pi}(c^\star)$. 
    We start an iteration as follows. 
    While there exists some unplaced agent~$a \notin \wt{\pi}(c^\star) \cup S$ that has an enemy in~$\wt{\pi}(c^\star)$ but has friends only in~$S$ (or none at all)---implying that $a$ cannot be in~$\pi(c^\star)$ under~$\pi$---we add $a$ to~$S$ and place them into one of the coalitions in~$\wt{\pi} \setminus \wt{\pi}(c^\star)$ as follows:
    \begin{itemize}
        \item If $a$ has an enemy in~$S$ that is a lonely agent in~$C$, then we place $a$ into the unique coalition in~$\wt{\pi} \setminus \wt{\pi}(c^\star)$ that does not contain this lonely agent of~$C$ (such a coalition must exist, if our guesses are correct).
        \item Otherwise, we place $a$ into some coalition in~$\wt{\pi} \setminus \wt{\pi}(c^\star)$ containing, if possible, some friend of~$a$. If $a$ has no friends, then we place them into some coalition in which $a$ has no enemies (such a coalition must exist, if our guesses are correct).
    \end{itemize}
    
    After the end of this iterative process, all remaining unplaced agents, i.e., each agent in~$\agents \setminus (\wt{\pi}(c^\star) \cup S)$ 
    either has no enemies in~$\wt{\pi}(c^\star)$ or has a friend in~$\agents \setminus (\wt{\pi}(c^\star) \cup S)$. We place all of these agents 
    into~$\wt{\pi}(c^\star)$.
    Let $\hat{\pi}$ denote the coalition structure obtained. 

    \proofsubparagraph{Correctness}
    We claim that $\hat\pi$ is IR. 
    First, notice that we never place an enemy of some lonely agent~$c_\ell \in C$ into the same coalition as~$c_\ell$: if $c_\ell \in \pi(c^\star)$, then this is clear because we place all of their enemies into the other (unique) coalition before the iterations; if $c_\ell \in S$, then this is clear due to our placement strategy during the iterative process (here we rely also on the correctness of our guesses). Moreover, our initial construction for~$\wt{\pi}$ ensures that all agents of~$\wt{C}$ that are not lonely agents in~$C$ have a friend in their coalition. It remains to see that each agent~$a$ in~$\agents \setminus \wt{C}$ either has a friend in~$\hat{\pi}(a)$ or has no enemy in~$\hat{\pi}(a)$. 
    First, if we placed $a$ during the iterative process, then either they have no friend but also no enemies in their coalition under~$\hat{\pi}$, or they have at least one friend in $\hat{\pi}(a)$.
    Second, if we placed $a$ after the iterative process into~$\wt{\pi}(c^\star)$, then by the stopping condition of the iteration, either $a$ has no enemies in~$\wt{\pi}(c^\star)$ and, hence, in~$\hat{\pi}(a)$, or they have at least one friend in $\agents \setminus S  \subseteq \hat{\pi}(a)$. This proves that $\hat{\pi}$ is IR.

    \proofsubparagraph{Running time} Finding a minimum vertex cover~$C$ for~$G^\enemies$ can be done in linear time, due to $|C| \leq 3$. There are $\Oh{1}$ possible guesses for the partitioning of~$C$ by~$\pi$ and for the lonely agents in~$C$. Guessing a friend for each agent in~$C$ that is not lonely yields at most~$\Oh{|\agents|^3}$ possibilities. %
    Our algorithm runs in~$\Oh{|\agents|^2}$ time for each set of guesses, yielding a polynomial running time. 
}

The next three intractability results show that the algorithm for \IRkfo\ in \Cref{thm:FO:k:symmetric:enmityVcAtMostThree:poly}  cannot be generalized to 
(a) deal with asymmetric preferences, even if $\vc(G^\enemies)=2$ and we ask for $k=2$ coalitions,
(b) solve \IRkfo\ for symmetric instances with $\vc(G^\enemies)=4$, or 
(c) solve \IRBSCfo\ under the same restrictions.
These results are obtained by reductions similar to the one of \Cref{thm:FO:k:twoCoals:NPh}. 

\begin{restatable}[\linkproof{thm:FO:k:enmityVcAtLeastTwo:NPh:collapsed}]{theorem}{thmFOkenmityVcAtLeastTwoNPhcollapsed}
\label{thm:FO:k:enmityVcAtLeastTwo:NPh:collapsed}%
\label{thm:FO:k:enmityVcAtLeastFour:NPh:collapsed}%
\label{thm:FO:BSC:symmetric:enmityVcAtLeastThree:NPh:collapsed}%
The followings hold: 
\begin{compactitem}
    \item[(a)]
        \IRkfo is \NPc even if $\vc(G^\enemies) = 2$, $\numCoals=2$ and $\max_{a \in \agents}|\enemies_a| \leq 1$. %
    \item[(b)]
        \IRkfo is \NPc even if $\vc(G^\enemies) = 4$ and the instance is symmetric.
    \item[(c)]
        \IRBSCfo is \NPc even on symmetric instances with $\vc(G^\enemies) = 3$. %
\end{compactitem}
\end{restatable}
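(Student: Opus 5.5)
We would prove all three parts by reductions from \probName{Monotone $4$-Regular NAE 3-SAT}~\cite{DarmannD2020}, reusing the construction of \Cref{thm:FO:k:twoCoals:NPh}. There, variable agents $a_i$ are friends of clause agents $c_j^+, c_j^-$, and friendless guards $r^+, r^-$ separate the two coalitions. In that construction, the enmities among consecutive clause agents are what make $\vc(G^\enemies)$ large. The common plan is therefore to delete them and recover the \emph{forcing} effect (all $c_j^+$ together with $r^+$, all $c_j^-$ together with $r^-$) in other ways: through asymmetry in (a), through a few extra guards in (b), and through the balanced sizes in (c). Membership in \NP\ is already noted in \Cref{sec:preliminaries}. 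In each case the forward direction is routine: put $r^+$, the $c_j^+$'s and the true variables on one side, and use validity of the assignment to give every clause agent a friend in its coalition. The backward direction reads the assignment off from the side containing each $a_i$, exactly as in \Cref{thm:FO:k:twoCoals:NPh}.

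For (c), with symmetric preferences and $\vc(G^\enemies)=3$, we would use three guards $r^+, r^-, g$ that are pairwise enemies and have no friends. The only other enmities are that every $c_j^-$ is an enemy of $r^+$ and every $c_j^+$ an enemy of $r^-$. Then $\{r^+,r^-,g\}$ covers $G^\enemies$, and with $k=3$ the guards lie in distinct coalitions. We would pad with neutral dummies and choose the balanced size $n/3$ so that $g$'s coalition must be filled entirely by dummies. Counting then leaves exactly enough room that the other two coalitions consist of $r^\pm$, the clause agents and the variable agents. A clause agent may sit with its hostile guard only if it has a friend there. Since we only need each $c_j^+$ and $c_j^-$ to have a friend in its own coalition, the choice of $g$'s coalition must ensure that every $c_j^+$ ends with $r^+$ and every $c_j^-$ with $r^-$. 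The intended lever is to give each guard's coalition an exact number of non-variable agents. A clause agent on the ``wrong'' side would then force compensating moves that violate some size, or create an agent with an enemy and no friend.

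For (b), with symmetric preferences, $\vc(G^\enemies)=4$ and unrestricted sizes, size counting is unavailable. We would take two pairs of mutually hostile friendless guards, $r^+, r^-$ and $s^+, s^-$, and set $k=2$. Each $c_j^+$ is an enemy of both $r^-$ and $s^-$, and each $c_j^-$ of $r^+$ and $s^+$. The cross-enmities between the two pairs would be chosen so that the guards necessarily split as $\{r^+,s^+\}$ versus $\{r^-,s^-\}$. We then aim to show that a clause agent whose friends all lie on one side is forced to the opposite side and violates IR there, mimicking the enmity chain of the original proof.

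For (a), with asymmetric preferences, $k=2$ and at most one enemy per agent, we would let $c_j^+$ regard only $r^-$ as an enemy, $c_j^-$ regard only $r^+$ as an enemy, and let $r^+$ and $r^-$ be mutual enemies; then $\{r^+,r^-\}$ is a vertex cover. The main obstacle, here and in (b), is the forcing step. Under friend-oriented IR, an enmity only constrains the agent holding it, so a clause agent could simply sit with its non-hostile guard and impose no constraint. Our first attempt is to make each clause agent's presence on the correct side \emph{necessary} for some other agent. We would add, for each $c_j^+$, a friendless agent $p_j^+$ whose single enemy is $r^-$, together with an asymmetric friendship arrangement that makes $c_j^+$ the only admissible companion of an agent pinned next to $r^-$. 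If this pinning fails, we would fall back to directed friendship cycles, as in \Cref{thm:FO:k:asymmetric:atLeastOneFriend:NPh}, which force whole cycles to stay together.
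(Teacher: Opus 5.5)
\textbf{(b) and (c).} Your constructions for these two parts do not encode the formula at all. In \Cref{thm:FO:k:twoCoals:NPh} the guards already do the forcing. The role of the clause--clause enmities is different: they give every clause agent an enemy \emph{inside its own coalition}, namely a neighbour who may stay there because it has variable friends. That is what turns IR into ``has a true (resp.\ false) variable friend''. You delete these enmities and rebuild only the forcing, which was never missing. With friendless guards, and $c_j^+$ hostile only to opposite-side guards, no IR outcome ever places a clause agent with one of its enemies, because a friendless guard cannot tolerate an enemy. So the variable agents can be placed arbitrarily, and your instances are yes-instances independently of~$\varphi$, up to size arithmetic. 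In particular, ``a clause agent may sit with its hostile guard only if it has a friend there'' is false for friendless guards. In (c), moreover, the sizes cannot force $g$'s coalition to consist of dummies, since $g$ is neutral to all clause and variable agents.

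The missing ingredient is an enemy \emph{with friends}. For (b), the paper adds clause guards $g^+,g^-$ that are friends with all variable agents. $g^+$ is a mutual enemy of $r^-$ and of every $c_j^+$, and symmetrically for $g^-$. The friendless $r^-$ pushes $g^+$ and all $c_j^+$ next to $r^+$. There each $c_j^+$ faces its enemy $g^+$ and needs a true variable friend, while $g^+$ is content thanks to its variable friends. The set $\{r^\pm,g^\pm\}$ covers $G^\enemies$.

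For (c), the paper abandons the NAE scheme and reduces from \probName{3-SAT} with $k=2$. It uses friendless guards $r,r'$, with $r'$ hostile to all variable and clause agents, so these all join $r$. The variable agent $a_1$ is a common enemy of all other variable and clause agents, so every agent forced next to $r$ needs a friend. The balanced size $2n+m+1$ then leaves room for exactly $n$ literal agents, i.e., exactly one literal per variable.

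\textbf{(a).} Your gadget is left unspecified, and your hostility points the wrong way. The paper lets $c_j^+$ regard its \emph{own-side} guard $r^+$ as an enemy (and $c_j^-$ regard $r^-$), so a clause agent needs a friend once it sits next to its guard. The forcing that puts it there can only come from $r^-$ keeping its hostility toward the $c_j^+$ from \Cref{thm:FO:k:twoCoals:NPh}. That gives each guard $m+1$ enemies. So, despite the paper's remark that all enmities point from clause agents to guards, its argument yields $|\enemies_a|\le 1$ only for non-guard agents.

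Your diagnosis that a guard with a single enemy cannot force anything is well founded. If every agent has at most one enemy and $\{u,v\}$ covers $G^\enemies$, then every other agent's enemy lies in $\{u,v\}$. A case analysis on whether $u$ and $v$ share a coalition then appears to reduce the case $k=2$ to polynomially many reachability checks in $G^\friends$. So you should not expect any gadget to prove (a) with the bound imposed on every agent.
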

\prooftoappendix{thm:FO:k:enmityVcAtLeastTwo:NPh:collapsed}{\thmFOkenmityVcAtLeastTwoNPhcollapsed*}{
We prove the result in three separate theorems.

\begin{theorem}
\label{thm:FO:k:enmityVcAtLeastTwo:NPh}
    \IRkfo is \NPc 
    even if $\vc(G^\enemies) = 2$, $\numCoals=2$ and $|\enemies_a| \leq 1$ for each $a \in \agents$.
\end{theorem}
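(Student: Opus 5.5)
The plan is to adapt the reduction from \probName{Monotone $4$-Regular NAE 3-SAT} used for \Cref{thm:FO:k:twoCoals:NPh}. I will exploit asymmetry so that every enmity arc touches one of two guard agents $r^+,r^-$. Then $\{r^+,r^-\}$ is a vertex cover of $G^\enemies$, giving $\vc(G^\enemies)=2$, and every agent has at most one enemy. As before, there is one variable agent $a_i$ per variable, which has neither friends nor enemies. Each clause $C_j$ gets a pair of clause agents whose friends are the three variable agents of $C_j$. The guard $r^+$ is friendless and its unique enemy is $r^-$, so with $\numCoals=2$ every IR structure puts $r^+$ into one coalition $\pttn_1$ and $r^-$ into the other, $\pttn_2$.

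The clause agents are set up as follows. Agent $c_j^+$ considers $r^+$ its unique enemy, and $c_j^-$ considers $r^-$ its unique enemy. By \Cref{obs:IR:characterization}(a), if $c_j^+\in\pttn_1$ then some variable friend of $c_j^+$ must lie in $\pttn_1$. Symmetrically, if $c_j^-\in\pttn_2$ then some variable of $C_j$ must lie in $\pttn_2$. Reading $\alpha(x_i)=\true$ iff $a_i\in\pttn_1$, these two conditions together say exactly that $C_j$ is not-all-equal satisfied.

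The forward direction is the easy one. Given a valid assignment, place $r^+$, all $c_j^+$ and the true variable agents into $\pttn_1$, and all remaining agents into $\pttn_2$. Each $c_j^+$ then has a true friend in $\pttn_1$. Each $c_j^-$ sees no enemy in its coalition, since $r^-$ is in $\pttn_2$. Checking IR for every agent is immediate.

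The main obstacle is the reverse direction: I must force $c_j^+\in\pttn_1$ and $c_j^-\in\pttn_2$. Unlike the symmetric construction, an agent is never forced \emph{into} the coalition of its enemy; it can always escape to the other side. My first attempt is to attach a forcing gadget to each clause agent that still creates enmities only toward $r^+$ or $r^-$, using directed friendship cycles as in \Cref{thm:FO:k:asymmetric:atLeastOneFriend:NPh,thm:BSC:k:enmityVcAtMostOne:NPh}. An agent whose only friend is its successor on a cycle and whose only enemy is a guard forces the whole cycle to leave that guard's coalition as soon as one member does. I would thread $r^-$ together with the agent $c_j^+$ into such a cycle, say $r^-\to c_j^+\to\dots$, so that dropping $c_j^+$ into $\pttn_2$ makes some cycle agent unhappy. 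This is where I expect to iterate on the design.

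If that fails, I would switch to a different source problem whose ``forced side'' structure fits one-sided constraints directly. The natural candidate is a directed closure problem: choose a set $\pttn_2\ni r^-$ such that every agent in it whose enemy is $r^-$ also has a friend in it. Here the asymmetry is a feature rather than a bug. I would then encode NAE-3SAT or a hitting-set variant into this closure condition, with $\pttn_1$ containing $r^+$ and the analogous condition relative to $r^+$ holding there.
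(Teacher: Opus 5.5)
The gap is the one you flag yourself, and it cannot be closed. Nothing forces $c_j^+$ into the coalition of~$r^+$ (or $c_j^-$ into that of~$r^-$). Unless $\mathsf{P}=\mathsf{NP}$, no gadget can supply this while every agent has at most one enemy, $\vc(G^\enemies)=2$ and $\numCoals=2$.

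To see why, suppose first that the underlying undirected enmity graph is bipartite. This is automatic when $r^+$ and $r^-$ are mutual enemies, as in your design together with any gadget whose enmities point at the guards. Then a proper $2$-colouring of that graph is an IR structure with two non-empty coalitions, since every agent finds its unique enemy on the other side. Friendship cycles are irrelevant, and every such instance is a \Yes-instance.

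Otherwise, a graph with vertex cover $\{u,v\}$ has no cycle longer than~$4$. So the only odd cycle is a directed triangle with $u\in\enemies_x$, $v\in\enemies_u$, $x\in\enemies_v$, and all other enmities point at $u$ or~$v$. Guess how $u,v,x$ are split, plus one agent of the other coalition if all three are together. The requirement on the coalition~$S$ holding the relevant cover agents is then that every member of~$S$ whose enemy lies in~$S$ has a friend in~$S$. With the positions of the enemies fixed, this condition is closed under union. So the largest such~$S$ is obtained by iterated deletion, and the deleted agents are safe in the other coalition.

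Hence the problem is polynomial under these restrictions. Your fallback closure formulation fails for the same reason: it is union-closed, and it is trivially satisfied by sending every agent who dislikes $r^-$ to~$\pttn_1$. There is also a smaller slip in your forward direction: $c_j^-$ \emph{does} share~$\pttn_2$ with its enemy~$r^-$. It is IR only because some variable of~$C_j$ is false.

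The paper's proof also starts from your clause gadget. It takes the construction of \Cref{thm:FO:k:twoCoals:NPh}, deletes the clause--clause enmities, and adds $r^+\in\enemies_{c_j^+}$ and $r^-\in\enemies_{c_j^-}$. The forcing you are missing is what that construction already contains: the friendless guards dislike the opposite-sign clause agents, $c_j^-\in\enemies_{r^+}$ and $c_j^+\in\enemies_{r^-}$. Keep these arcs only in this direction, so that each clause agent retains a single enemy. This pins every $c_j^+$ next to~$r^+$ and every $c_j^-$ next to~$r^-$. Your not-all-equal argument then goes through verbatim, with $\{r^+,r^-\}$ covering~$G^\enemies$ and $\numCoals=2$.

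The price is that each guard has $m+1$ enemies. So this proves the result with $\vc(G^\enemies)=2$, $\numCoals=2$ and at most one enemy per \emph{non-guard} agent. By the previous paragraph, the bound cannot be extended to the guards unless $\mathsf{P}=\mathsf{NP}$. Taken literally, the paper's remark that all enmities run from clause agents to the guards describes your unforced construction and has the same defect.
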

\begin{proof}
    We prove the theorem by a modification of the reduction from \Cref{thm:FO:k:twoCoals:NPh}. We construct the same graph $G$ and remove all edges running between two clause agents. Finally, to ensure the same properties of the construction, we set $r^+\in\enemies_{c_j^+}$ and $r^-\in\enemies_{c_j^-}$ for every $j\in[m]$. Notice that the enmities are only from clause agents towards the guard agents (and between the two guard agents), and therefore every clause agent has at least one enemy in their coalition, which must be supplemented by assigning at least one of their friends.
\end{proof}

\begin{theorem}
\label{thm:FO:k:enmityVcAtLeastFour:NPh}
    \IRkfo is \NPc 
    even if $\vc(G^\enemies) = 4$ and the instance is symmetric.
\end{theorem}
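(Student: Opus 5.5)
The plan is to adapt the reduction from \probName{Monotone $4$-Regular NAE 3-SAT} in the proof of \Cref{thm:FO:k:twoCoals:NPh}, keeping $\numCoals=2$. In that construction, the enmities along the chains $c_1^\pm,\dots,c_m^\pm$ serve to guarantee that every clause agent has an enemy inside its own coalition. These chains make $\vc(G^\enemies)$ unbounded, so I will delete them and obtain the same effect from two additional guard agents. The new enmity graph then has the vertex cover $\{r^+,r^-,s^+,s^-\}$, and all relations stay symmetric.

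The construction is as follows. Keep the variable agents $a_i$, the clause agents $c_j^+$ and $c_j^-$ with their friendships to the variable agents of $C_j$, and the guards $r^+,r^-$. As before, $r^+$ and $r^-$ are mutual enemies, $r^+$ is an enemy of every $c_j^-$, and $r^-$ is an enemy of every $c_j^+$. Now add two agents $s^+$ and $s^-$. Make $s^+$ a (symmetric) friend of $r^+$ and an enemy of $r^-$ and of every $c_j^+$. Symmetrically, make $s^-$ a friend of $r^-$ and an enemy of $r^+$ and of every $c_j^-$. All other pairs are neutral. Every red edge has an endpoint in $\{r^+,r^-,s^+,s^-\}$, so $\vc(G^\enemies)\le 4$.

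For the backward direction, let $(\pttn_1,\pttn_2)$ be IR. I argue as follows.
\begin{compactitem}
    \item[(i)] $r^-$ has no friend other than $s^-$, and $r^+$ has no friend other than $s^+$. If $r^+$ and $r^-$ shared a coalition, then $r^-$ would need $s^-$ there, but $s^-$ is an enemy of $r^+$ and its only friend is $r^-$, which is fine for $s^-$. So this case needs care. I would first exclude it by a short counting argument on the friends of $s^\pm$. If that fails, I would add a third neutral-friendless pair making $r^\pm$ genuine recluses towards each other, for instance by not making $s^\pm$ friends of $r^\pm$ but giving each $s^\pm$ a private friend $t^\pm$ that is an enemy of the opposite guard.
    \item[(ii)] Once $r^+\in\pttn_1$ and $r^-\in\pttn_2$ are forced, $s^+$ cannot be in $\pttn_2$: it would have its enemy $r^-$ there and no friend. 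Hence $s^+\in\pttn_1$, and likewise $s^-\in\pttn_2$.
    \item[(iii)] The friendless-side argument places every $c_j^+$ in $\pttn_1$ and every $c_j^-$ in $\pttn_2$. Each such agent then shares a coalition with the enemy $s^\pm$, so it needs a variable friend in its coalition. The validity of the assignment $\alpha(x_i)=\true \iff a_i\in\pttn_1$ then follows exactly as in \Cref{thm:FO:k:twoCoals:NPh}.
\end{compactitem}

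For the forward direction, take $\pttn_1=\{r^+,s^+\}\cup\{c_j^+\}\cup\{a_i:\alpha(x_i)=\true\}$ and let $\pttn_2$ be the rest, then check IR agent by agent. The agents $r^\pm$ and $s^\pm$ each have a friend in their coalition, and each clause agent has the required variable friend.

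I expect step (i) to be the main obstacle: forcing the two guards apart while $r^\pm$ are no longer recluses. I would resolve it by choosing friendships for $s^\pm$ so that $r^-$ remains a recluse, and I would verify that no alternative ``mixed'' placement of $s^\pm$ yields a spurious IR structure.
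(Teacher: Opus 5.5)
\textbf{There is a genuine gap: your main construction is always a yes-instance.} Making $s^+\in\friends_{r^+}$ and $s^-\in\friends_{r^-}$ destroys the recluse property of the guards. Both step (i) and the ``friendless-side argument'' in step (iii) depend on that property. Concretely, $(\{r^+,r^-,s^+,s^-\},\ \agents\setminus\{r^+,r^-,s^+,s^-\})$ is IR for \emph{every} formula:
\begin{compactitem}
    \item[--] each of $r^\pm,s^\pm$ has its friend in the first coalition;
    \item[--] variable agents have no enemies;
    \item[--] every enemy of a clause agent lies in $\{r^\pm,s^\pm\}$, so clause agents see no enemy in the second coalition.
\end{compactitem}
Even when the guards are separated, $(\{r^+,s^+\},\ \agents\setminus\{r^+,s^+\})$ is IR. Here $r^-$ is happy with its friend $s^-$, and each $c_j^+$ may sit with its enemy $r^-$ because all its variable friends are there too. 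So no counting argument can repair step (i), and step (iii) fails independently of it.

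\textbf{Your fallback is the right fix, and it is essentially the paper's proof.} Drop the friendships between $s^\pm$ and $r^\pm$, so that $r^+$ and $r^-$ remain recluses. The rest then goes through:
\begin{compactitem}
    \item[--] As in the two-coalition reduction, $r^+$ and $r^-$ are separated, every $c_j^+$ joins $r^+$, and every $c_j^-$ joins $r^-$.
    \item[--] $s^+$ is forced to join $r^+$ because $r^-$ is a recluse and $s^+\in\enemies_{r^-}$. Argue this from $r^-$'s side, since the justification from $s^+$'s side in your step (ii) depends on where $s^+$'s friends may be placed.
    \item[--] $s^+$ then shares its coalition with its enemies $c_j^+$, so it needs a friend there. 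Your private friend $t^+$, made an enemy of $r^-$, achieves this.
\end{compactitem}
The paper instead makes its clause guards $g^\pm$ friends with all variable agents. That needs no extra agents, and in the forward direction a valid NAE assignment places variable agents on both sides. In either version, $\{r^+,r^-,s^+,s^-\}$ covers all enmities. The vertex-disjoint enmity triangles $\{r^-,s^+,c_1^+\}$ and $\{r^+,s^-,c_1^-\}$ show that $\vc(G^\enemies)$ is exactly $4$, as the statement requires. You should commit to this version and discard the original one.
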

\begin{proof}
    We again prove the theorem by modification of the reduction of \Cref{thm:FO:k:twoCoals:NPh}. Let $G$ be the graph resulting from this construction. We first remove all edges connecting two clause agents. Then, we add two \emph{clause guards} $g^+$ and~$g^-$. We make $g^+$ a mutual enemy with $r^-$ and~$g^-$ a mutual enemy with $r^+$. Additionally, each clause guard is a friend with all variable agents, and a mutual enemy with all clause agents of the same sign. See \Cref{fig:FO:k:enmityVcAtLeastFour:NPh:construction} for an illustration. The correctness of the construction then follows along the same lines as in the proof of \Cref{thm:FO:k:twoCoals:NPh}; the important observation is that $g^+$ and $r^+$ are necessarily in the same coalition and, therefore, each $c_j^+$ has at least one enemy in any IR coalition structure. Finally, $\{r^+,r^-,g^+,g^-\}$ clearly forms a vertex cover of $G^\enemies$ of size four.
    \begin{figure}[bt!]
        \centering
        \begin{tikzpicture}[
                agent/.style={circle, draw=black, minimum size=20pt, inner sep=1pt, font=\small,fill=white},
                friend/.style={draw=black,cbBlue, thick, dashed},
                enemy/.style={draw=black,cbRed,thick},
                every node/.style={font=\small},
                var/.style={minimum size=15pt}
            ]
            \pgfmathsetmacro\yPos{1.5}
            
            \node[agent,thick] (rp) at (0,\yPos)  {$r^+$};
            \node[agent,thick] (rn) at (0,-\yPos) {$r^-$};

            \node[agent,thick] (gn) at (10,\yPos) {$g^-$};
            \node[agent,thick] (gp) at (10,-\yPos) {$g^+$};
            
            \node[agent] (c1n) at (3,\yPos) {$c_1^-$};
            \node[agent] (c2n) at (5,\yPos) {$c_2^-$};
            \node[agent] (c3n) at (7,\yPos) {$c_3^-$};
            
            \node[agent] (c1p) at (3,-\yPos) {$c_1^+$};
            \node[agent] (c2p) at (5,-\yPos) {$c_2^+$};
            \node[agent] (c3p) at (7,-\yPos) {$c_3^+$};
            
            \node[agent, var] (a1) at (2, 0)  {$a_1$};
            \node[agent, var] (a2) at (3.5, 0) {$a_2$};
            \node[agent, var] (a3) at (5, 0)  {$a_3$};
            \node[agent, var] (a4) at (6.5, 0) {$a_4$};
            \node[agent, var] (a5) at (8, 0)  {$a_5$};

            \draw[friend] (c1p) -- (a1);
            \draw[friend] (c1p) -- (a2);
            \draw[friend] (c1p) -- (a3);
            \draw[friend] (c1n) -- (a1);
            \draw[friend] (c1n) -- (a2);
            \draw[friend] (c1n) -- (a3);
            
            \draw[friend] (c2p) -- (a1);
            \draw[friend] (c2p) -- (a3);
            \draw[friend] (c2p) -- (a4);
            \draw[friend] (c2n) -- (a1);
            \draw[friend] (c2n) -- (a3);
            \draw[friend] (c2n) -- (a4);
            
            \draw[friend] (c3p) -- (a2);
            \draw[friend] (c3p) -- (a4);
            \draw[friend] (c3p) -- (a5);
            \draw[friend] (c3n) -- (a2);
            \draw[friend] (c3n) -- (a4);
            \draw[friend] (c3n) -- (a5);

            \foreach \i in {1,2,3,4,5} {
                \draw[friend] (gn) -- (a\i.30);
                \draw[friend] (gp) -- (a\i.330);
            }

            \draw[enemy] (rp) -- (rn);
            \draw[enemy,bend left=20] (rp) to (gn); 
            \draw[enemy,bend right=20] (rn) to (gp); 
            
            \draw[enemy,bend left=22] (rp) to (c1n);
            \draw[enemy,bend left=22] (rp) to (c2n);
            \draw[enemy,bend left=22] (rp) to (c3n);
            
            \draw[enemy,bend right=22] (rn) to (c1p);
            \draw[enemy,bend right=22] (rn) to (c2p);
            \draw[enemy,bend right=22] (rn) to (c3p);
            
            \draw[enemy,bend left=22] (gp) to (c1p);
            \draw[enemy,bend left=22] (gp) to (c2p);
            \draw[enemy,bend left=22] (gp) to (c3p);
            
            \draw[enemy,bend right=22] (gn) to (c1n);
            \draw[enemy,bend right=22] (gn) to (c2n);
            \draw[enemy,bend right=22] (gn) to (c3n);

            \begin{pgfonlayer}{background}
                \filldraw[thick,dotted,cbOrange,fill=cbOrange!7,rounded corners=10pt]
                    (1, -3) -- (1,-0.5) -- (-1.25,-0.5) -- (-1.25,2.75) --
                    (2.5,2.75) -- (2.5,0.75) -- (4.5,0.75) -- (4.5,-0.5) -- 
                    (5.5,-0.5) -- (5.5,0.75) -- (7.5,0.75) -- (7.5,-0.5) --
                    (10.5,-0.5) -- (10.5,3) -- (-1.5,3) -- (-1.5,-3) -- cycle;
                \node[cbOrange] at (10,2.5) {\large$\pttn_2$};

                \filldraw[thick,dotted,cbGreen,fill=cbGreen!7,rounded corners=10pt]
                    (-1,2.5) -- (2,2.5) -- (2,0.5) -- (4,0.5) -- (4,-0.75) --
                    (6,-0.75) -- (6,0.5) -- (7,0.5) -- (7,-0.75) -- (10.5,-0.75) --
                    (10.5,-3) -- (1.5,-3) -- (1.5,0) -- (-1,0) -- cycle;
                \node[cbGreen] at (10,-2.5) {\large$\pttn_1$};
            \end{pgfonlayer}
        \end{tikzpicture}
        \caption{An illustration of the hardness construction used to prove \Cref{thm:FO:k:enmityVcAtLeastFour:NPh} for a formula $\varphi = (x_1 \lor x_2 \lor x_3) \land (x_1 \lor x_3 \lor x_4) \land (x_2 \lor x_4 \lor x_5)$. Red (solid) edges represent that two agents are enemies, while blue (dashed) edges represent friendship. Using colored backgrounds, we highlight one possible individually rational coalition structure.}
        \label{fig:FO:k:enmityVcAtLeastFour:NPh:construction}
    \end{figure}
\end{proof}

\begin{theorem}
\label{thm:FO:BSC:symmetric:enmityVcAtLeastThree:NPh}
    \IRBSCfo is \NPc even if $\vc(G^\enemies) = 3$ and the instance is symmetric.
\end{theorem}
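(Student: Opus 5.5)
The plan is to adapt the reduction behind \Cref{thm:FO:k:enmityVcAtLeastFour:NPh}. Since \IRBSCfo asks for balanced sizes, the size constraint can replace one of the four vertex-cover agents of that construction. As there, I reduce from \probName{Monotone $4$-Regular NAE 3-SAT}~\cite{DarmannD2020} and set $\numCoals=2$. The instance has variable agents $a_i$, two clause agents $c_j^+,c_j^-$ per clause (each a friend of the three variable agents of $C_j$), and two guards $r^+,r^-$ that are friendless and mutually enemies. I add one further enemy-bearing agent $g$. All enmities will be incident to $\{r^+,r^-,g\}$, which gives $\vc(G^\enemies)=3$, and all relations will be symmetric.

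The intended enmities are as follows: $r^+$ with every $c_j^-$, $r^-$ with every $c_j^+$, and $g$ with $r^-$ and with every $c_j^+$. Here $g$ is friendless, so that $g$ is a recluse. Since $r^+$ and $r^-$ are friendless mutual enemies, they lie in different coalitions, say $r^+\in\pttn_1$ and $r^-\in\pttn_2$. Then $g\in\pttn_1$, because $g$ is a friendless enemy of~$r^-$. Consequently, every $c_j^+$ has an enemy in both coalitions: $g$ in $\pttn_1$ and $r^-$ in $\pttn_2$. So $c_j^+$ needs a variable-agent friend in whichever coalition it joins. The missing ingredient is that $c_j^-$ can go to $\pttn_2$ without meeting an enemy, and that $c_j^+$ is not forced to $\pttn_1$. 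This is exactly what the fourth vertex $g^-$ did in \Cref{thm:FO:k:enmityVcAtLeastFour:NPh}, and here I intend to enforce it by the balance constraint instead.

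To do so, I add padding via cliques of dummy agents in the friendship graph. These cliques are internally friends and neutral to everyone else, so they are IR anywhere but must move as blocks if any member has an enemy; here they have none. Their role is to fix the sizes. Concretely, I choose sizes so that $|\pttn_1|=|\pttn_2|$ can only be achieved if $\pttn_1$ contains exactly $m$ clause agents. I would make the $+$-clause agents and $-$-clause agents distinguishable in count by giving $r^+$ enemies only among the $c_j^-$, so that every $c_j^-$ in $\pttn_1$ needs a friend there, and every $c_j^+$ anywhere needs a friend on its side.

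For completeness, the forward direction is direct. From a valid assignment $\alpha$, I put $r^+$, $g$, all $c_j^+$, and the true variables into $\pttn_1$, and the remaining agents into $\pttn_2$. Each $c_j^+$ then sees a true friend and each $c_j^-$ a false friend, and the dummies are used to balance. For the backward direction, the aim is to show that every solution has all $c_j^+$ on the $g$-side and all $c_j^-$ on the $r^-$-side. With that established, the NAE argument of \Cref{thm:FO:k:twoCoals:NPh} applies verbatim.

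The main obstacle is precisely this forcing step: a clause agent might swap sides while the padding compensates for the change in count. My first attempt would be a counting argument in which variable agents and padding blocks have sizes in incompatible residue classes, for instance padding blocks of size divisible by a large $M>n+2m$. Then the only freedom in reaching $\nicefrac{|\agents|}{2}$ is in the small agents. Combined with the fact that each $c_j^-$ in $\pttn_1$ and each $c_j^+$ anywhere requires a friend, I would try to show that the number of clause agents on each side is pinned to $m$, and that any swap of a $c_j^+$ with a $c_{j'}^-$ can be undone without breaking IR. If this fails, my fallback is to make $g$ a non-recluse whose only friends form a large block whose placement is dictated by balance.
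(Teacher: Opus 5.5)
There is a genuine gap: the instance you construct has no individually rational coalition structure at all. Enmities are symmetric, so $g$ and $r^-$, both friendless, regard every $c_j^+$ as an enemy. By \Cref{obs:IR:characterization}(a), neither of them may share a coalition with any $c_j^+$. You correctly derive $r^-\in\pttn_2$ and $g\in\pttn_1$, but then $c_j^+$ fits into neither coalition. Hence, for $m\geq 1$, your reduction always outputs a \No-instance. Your forward direction fails exactly here, since it places $g$ together with all $c_j^+$. The sentence ``$c_j^+$ needs a variable-agent friend in whichever coalition it joins'' checks IR only for $c_j^+$ and forgets IR for the recluses.

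The same oversight makes you misplace the difficulty. Because $r^+$ is friendless, no $c_j^-$ can enter $\pttn_1$ at all, so ``every $c_j^-$ in $\pttn_1$ needs a friend'' is vacuous. In the symmetric setting, the side of every clause agent is therefore fixed for free: swaps cannot happen, and no residue-class padding is needed. The real obstacle is the opposite one. With only three cover vertices, each clause agent must meet an enemy \emph{inside} its own coalition, and that enemy cannot be a recluse. Your design gives the $c_j^-$ no such enemy; you even intend them to be enemy-free in $\pttn_2$. So the ``not all true'' half of NAE is never enforced. Neutral dummy cliques only change coalition cardinalities and cannot create per-clause requirements. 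Your fallback of giving $g$ friends removes the contradiction for $c_j^+$, but it still leaves every $c_j^-$ unconstrained. What balance would then encode is a budget on the number of variable agents on $g$'s side, which is a hitting-set-type condition rather than NAE.

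The paper avoids this by using the balance constraint only for counting, and it reduces from \probName{3-SAT} instead:
\begin{itemize}
\item A friendless guard $r'$ is an enemy of $r$ and of all variable and clause agents, so all of these land in $\pttn_1$.
\item The variable agent $a_1$, which is \emph{not} a recluse, is an enemy of every other variable agent and every clause agent. Hence all of them, and $a_1$ itself, need a friend in $\pttn_1$.
\item Variable agents are friends only with their two literal agents, and clause agents only with the literal agents of their literals.
\item With $n+m$ neutral dummies, $|\pttn_1|=2n+m+1$ leaves room for exactly $n$ literal agents, one per variable. So the literal agents in $\pttn_1$ form a satisfying assignment.
\end{itemize}
The enmity vertex cover is $\{r,r',a_1\}$, and the key point is that the in-coalition enemy $a_1$ can itself be satisfied by a friend.
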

\begin{proof}
    To show \NPhness of \IRBSCfo, we present a reduction from \mbox{\probName{3-SAT}}. Let the input of this problem be a 3-CNF formula $\varphi=C_1 \wedge \dots \wedge C_m$ over a set $X=\{x_1,\dots,x_n\}$ of variables.
    Let $L=\{x_i,\overline{x}_i:i \in [n]\}$ denote the set of positive and negative literals of all variables in~$X$.

    Given $\varphi$, we construct an equivalent instance~$\mathcal{J}$ of \IRBSCfo using a construction similar to the one in the proof of \Cref{thm:FO:BSC:twoCoals:NPh}. We have a \emph{variable agent}~$a_i$ for each variable~$x_i \in X$, 
    a \emph{literal agent}~$b_\ell$  for every literal $\ell \in L$, and a \emph{clause agent}~$c_j$ for every clause~$C_j$. In addition, we create \emph{guard agents}~$r$ and~$r'$, and $n+m$ dummy agents. We let the two guard agents~$r$ and $r'$ be enemies. We also let $r'$ be enemies with all clause and variable agents. Furthermore, the agent~$a_1$ is enemies with every other variable agent and with every clause agent. Since there are no other enmities in the instance, we know that the agents $r,r',$ and~$a_1$ cover all edges in the enmity graph. 
    
    To define friendships in~$\mathcal{J}$, we let each variable agent~$a_i$ be friends with the corresponding literal agents~$b_{x_i}$ and~$b_{\overline{x}_i}$, 
    and we let each clause agent~$c_j$ be friends with the three literal agents corresponding to the three literals present in~$C_j$. 
    All remaining relations are neutral. To finalize the construction, we set $\numCoals = 2$. Note that the number of agents is~$4n+2m+2$.

    Assume first that there exists a truth assignment that satisfies~$\varphi$. Create a coalition~$\pttn_1$ that contains~$r$, all variable and clause agents, together with all literal agents corresponding to true literals. Note that~$\pttn_1$ has $2n+m+1$ agents, as desired; we put all remaining agents into another coalition~$\pttn_2$. Observe that neither~$r$ nor~$r'$ shares their coalition with an enemy. Dummies and literal agents have no enemies, so $\pi_2$ satisfies individual rationality. 
    Consider now~$\pttn_1$. Since we started from a satisfying truth assignment for~$\varphi$, each clause agent has at least one friend in their coalition. As for each variable, either its positive or its negative literal is in~$\pttn_1$, we know that every variable agent also has a friend in~$\pttn_1$. Hence, the coalition structure $(\pttn_1,\pttn_2)$ is IR.

    Assume now that there exists some IR coalition structure~$(\pttn_1,\pttn_2)$ for~$\mathcal{J}$ with $|\pttn_1|=|\pttn_2|=2n+m+1$.
    Since neither~$r$ nor~$r'$ have friends, they must be placed in different coalitions; by symmetry, we may assume $r \in \pttn_1$ and $r' \in \pttn_2$. Then all clauses and variable agents---enemies of~$r'$---must be in~$\pttn_1$. However, then all of these agents share $\pi_1$ with an enemy, due to the enmities with~$a_1$. Therefore, each of them needs at least one friend in~$\pttn_1$. Since variable agents only consider their corresponding literal agents friends, and $\pi_1$ can contain at most~$n$ literal agents due to its size, we get that $\pi_1$ contains exactly one literal agent for each variable. Moreover, since each clause agent must have at least one friend in~$\pttn_1$, the literals corresponding to the literal agents in~$\pttn_1$ yield a satisfying truth assignment for~$\varphi$.
\end{proof}
\Cref{thm:FO:k:enmityVcAtLeastTwo:NPh,thm:FO:k:enmityVcAtLeastFour:NPh,thm:FO:BSC:symmetric:enmityVcAtLeastThree:NPh} prove the \Cref{thm:FO:k:enmityVcAtLeastTwo:NPh:collapsed}.
}

We close this section with a result that shows that taking the number~$\numConflicting$ of agents that have enemies as a parameter, \IRFSCfo\ is fixed-parameter tractable.
The result is based on the observation that friendship relations between agents who have no enemies can be removed; hence, the underlying relationship graph is guaranteed to have a vertex cover of size~$\numConflicting$. As we will see in \Cref{thm:FO:FSC:vc:FPT}, such instances can be solved in \FPT time with respect to~$\numConflicting$.

\begin{restatable}[\linkproof{thm:FO:k:numConflicting:FPT}]{theorem}{thmFOknumConflictingFPT}
\label{thm:FO:k:numConflicting:FPT}
    \IRFSCfo is in \FPT when parameterized by the number of conflicting agents $\numConflicting = |\{ a \in \agents : \enemies_a \not= \emptyset \}|$. 
\end{restatable}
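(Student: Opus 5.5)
The plan is to reduce to the vertex-cover result the text announces, namely \Cref{thm:FO:FSC:vc:FPT}, which shows \FPT{} of \IRFSCfo{} parameterized by the vertex cover number of the relationship graph~$G$. Let $X=\{a \in \agents : \enemies_a \neq \emptyset\}$ be the set of conflicting agents, so $|X|=\numConflicting$.

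\emph{Step 1: remove arcs leaving non-conflicting agents.} For every agent $a \notin X$, I would delete all of $a$'s outgoing friendship arcs, i.e., set $\friends_a=\emptyset$. By \Cref{obs:IR:characterization}(a), an agent with no enemies is satisfied in every coalition, whatever friends it has. An agent's own friendships matter only for that agent's IR condition. Hence every coalition structure is IR in the modified instance if and only if it is IR in the original one. The sizes $\lb=\ub$ and the number~$k$ are unchanged, so the two instances have exactly the same solution sets.

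\emph{Step 2: bound the vertex cover.} After Step~1, every remaining arc $(a,b)$, whether blue or red, has its tail $a$ in~$X$. Red arcs have tails in $X$ by the definition of~$X$, and blue arcs do now by construction. So every edge of the underlying undirected graph of the modified~$G$ has an endpoint in~$X$, which means $X$ is a vertex cover and $\vc(G)\le\numConflicting$. Since $X$ is given explicitly, we do not even need to compute a minimum vertex cover.

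\emph{Step 3: invoke \Cref{thm:FO:FSC:vc:FPT}.} Running that algorithm on the modified instance gives running time $f(\numConflicting)\cdot\numAgents^{\Oh{1}}$, because the preprocessing is linear.

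The main obstacle is confirming that \Cref{thm:FO:FSC:vc:FPT} applies to asymmetric instances with directed arcs, measuring $\vc$ on the underlying undirected graph. The table suggests it does, since the \FPT{} entry for $\vc$ is not split into symmetric and asymmetric cases. If it instead required symmetry, Step~1 would break symmetry, and I would have to redo the vertex-cover algorithm directly with cover~$X$. In that algorithm, the agents outside $X$ are grouped into at most $2^{\Oh{\numConflicting}}$ types according to their neighbourhood in~$X$. Non-conflicting agents with no friends in $X$ are interchangeable fillers. The remaining question is how many agents of each type enter each coalition, which could be handled by an ILP with a bounded number of variables, or by guessing the coalitions that contain $X$-agents and filling the rest greedily.
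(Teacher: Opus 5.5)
Your proposal is correct and is essentially the paper's own proof: you delete friendships that cannot affect individual rationality, observe that the conflicting agents then form a vertex cover of~$G$, and invoke \Cref{thm:FO:FSC:vc:FPT}. Your deletion of all outgoing friendship arcs of non-conflicting agents works just as well as the paper's deletion of only friendships between two non-conflicting agents. Since \Cref{thm:FO:FSC:vc:FPT} is stated and proved for general, possibly asymmetric instances, the fallback in your last paragraph is not needed.
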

\prooftoappendix{thm:FO:k:numConflicting:FPT}{\thmFOknumConflictingFPT*}{
    Observe that we can safely delete all friendship relations between non-conflicting agents without changing the solvability of our instance, since every coalition structure is IR for non-conflicting agents, regardless of whether they share a coalition with a friend or not. After this modification, the set of conflicting agents is a vertex cover for the underlying graph~$G$. Hence, the result follows from \Cref{thm:FO:FSC:vc:FPT}.
}

\subsection{Restricting the Relationship Graph}
We now turn our attention to the relationship graph~$G$---including both friendships and enmities---to study how classic structural graph parameters such as treewidth and clique-width influence the computational complexity of our problems. 
To start with, we develop a fixed-parameter tractable algorithm for \IRkfo\ with parameter~$\tw(G)$, the treewidth of~$G$. Although our algorithm uses the standard method of dynamic programming along a tree-decomposition of~$G$, the details of the computation and the proof of its correctness are far from trivial.

\begin{restatable}[\linkproof{thm:FO:k:treewidth:FPT}]{theorem}{thmFOktreewidthFPT}
\label{thm:FO:k:treewidth:FPT}
    \IRkfo admits an algorithm with  $2^\Oh{\tw(G)\cdot\log\tw(G)}\cdot \numAgents$ running time.  
    Thus, \IRkfo is in \FPT parameterized by the treewidth of $G$.
\end{restatable}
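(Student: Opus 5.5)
The plan is a dynamic program over a nice tree decomposition of~$G$ of width~$t=\tw(G)$, combined with a structural argument that removes the dependence on~$\numCoals$. If the running time is to be $2^{\Oh{t\log t}}\cdot\numAgents$, we cannot afford a factor~$\numAgents$ for counting coalitions in the DP state. We therefore split into two regimes: $\numCoals\le t+1$ and $\numCoals> t+1$.

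\textbf{Step 1 (states).} We fix a nice tree decomposition with $\Oh{\numAgents}$ nodes, computed in the stated time (e.g., by Korhonen's 2-approximation). For a node~$x$ with bag~$\beta(x)$, a state consists of:
\begin{itemize}
  \item a partition of $\beta(x)$ into blocks; each block is the trace of one coalition, so there are at most $(t+1)^{t+1}$ partitions;
  \item for every bag agent~$a$, a status in \{has a friend in its coalition already, has no friend yet but has an enemy already, has neither\};
  \item for every block, a bit recording whether the coalition is known to be non-empty (relevant only for blocks that are empty in the bag, if we allow a bounded number of those).
\end{itemize}
Introduce, forget and join nodes are handled in the standard way:
\begin{itemize}
  \item At an introduce node, adding~$a$ to a block updates the statuses of~$a$ and of its neighbours in that block, using the friend and enemy arcs in both directions.
  \item At a forget node, we reject the state if the forgotten agent has status ``enemy but no friend''. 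If the agent was the last bag member of its block, the coalition is closed, and we increment a counter.
  \item At a join node, statuses combine by taking the better status coordinatewise. Since edges are seen in exactly one subtree, no double counting occurs.
\end{itemize}
Correctness is via the usual characterization from \Cref{obs:IR:characterization}(a): an agent is fine iff it ends with a friend or never meets an enemy in its coalition.

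\textbf{Step 2 (handling $\numCoals$).} The counter of closed coalitions is the problematic part. For $\numCoals\le t+1$ we cap the counter at~$t+2$, which adds only a $2^{\Oh{\log t}}$ factor, so the exact count is tracked. For $\numCoals>t+1$ we instead store, per state, the \emph{maximum} number of closed coalitions achievable, and answer yes iff this maximum is at least~$\numCoals$. This requires a lemma: if an IR structure with $c>t+1$ coalitions exists, then one with $c-1$ coalitions exists. In that regime feasibility is then monotone downward, and the maximum decides the instance.

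Two facts support the regime split and the lemma. First, $G^\enemies$ is $t$-degenerate, so it admits a proper $(t+1)$-colouring. Any partition into independent sets of $G^\enemies$ is IR regardless of friendships. Hence IR structures with few coalitions are easy to obtain, and the threshold $t+1$ is natural. Second, the lemma would be proved by a merging argument. Call two coalitions \emph{mergeable} if no agent lacking a friend in its own coalition has an enemy in the other; merging two mergeable coalitions preserves IR. Build the conflict graph~$Q$ on the $c$ coalitions, in which coalitions are adjacent iff they are not mergeable. It then suffices to show that $Q$ is not complete when $c>t+1$. I would try to argue that $Q$ has bounded degeneracy or clique number in terms of~$t$. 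Conflicts arise only from red arcs incident to ``lonely'' agents (agents without a friend in their coalition). A lonely agent's coalition contains none of its enemies. So the lonely agents, each coloured by its coalition, form an independent-set-like structure in~$G^\enemies$, and a $K_{t+2}$ in~$Q$ should yield a clique-like minor of size $t+2$ in~$G$.

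\textbf{Main obstacle.} The merging lemma is the crux. Coalitions need not be connected in~$G$, so $Q$ is not directly a minor of~$G$, and the argument above is only a heuristic. If it fails, the fallback is to first normalize a hypothetical solution. Each lonely agent that conflicts can be recoloured greedily, using degeneracy, into an existing coalition where it has no enemies. Separately, friend-satisfied groups can be split along friendship components. The aim of this normalization is to show that, whenever a solution with $c$ coalitions exists and $c>t+1$, one with $c-1$ coalitions exists too. Everything else, namely the DP transitions and the running-time bound $2^{\Oh{t\log t}}\cdot\numAgents$, is routine.
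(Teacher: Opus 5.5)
There is a genuine gap, and it sits exactly at what you call the main obstacle. The missing idea is much simpler than any merging lemma: whenever $\numCoals \ge t+1$ (and $\numCoals\le\numAgents$, as the problem requires), the instance is \emph{always} a yes-instance. You already observe that $G^\enemies$ has a proper colouring with at most $t+1$ colours, and that a partition into independent sets of $G^\enemies$ is IR. But you conclude from this that \emph{few} coalitions are easy, and the point is the opposite. Independence is inherited by subsets, so you may keep splitting colour classes (down to singletons if necessary). This gives an IR structure with exactly $\numCoals$ non-empty coalitions for every $\numCoals$ between $t+1$ and $\numAgents$. Hence the large-$\numCoals$ regime needs no DP, no stored maximum, and no conflict-graph or minor argument. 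The step from $c>t+1$ to $c-1$ that you justify only heuristically is immediate, since $c-1\ge t+1$. As written, however, your proof rests on an unproven lemma and is therefore incomplete. The paper does precisely this: it answers yes outright when $\numCoals\ge\tw(G)+1$ and runs a DP only for $\numCoals\le\tw(G)$.

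Your DP for small $\numCoals$ also needs repair. Coalitions need not be connected in $G$. A coalition may therefore lose all its bag members and regain some later, or have members in both subtrees of a join node but none in the join bag. If you close a coalition when its last bag member is forgotten and add counters, you only ever count refinements of a solution. For example, take two isolated agents with no relations and $\numCoals=1$. The width-$0$ nice decomposition (introduce $u$, forget $u$, introduce $v$, forget $v$) drives your counter to $2$, so a yes-instance is rejected. Your third state component hints at keeping bag-empty blocks alive, but then blocks need persistent identities across forget and join nodes. That is exactly the paper's fix: a labelling $p\colon\beta(x)\to[\numCoals]$, the three-valued flags, and a vector $s\in\{0,1\}^{\numCoals}$ recording which coalitions are already non-empty. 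The vectors are combined by coordinatewise maximum at join nodes, and no counter is needed at all. Since $\numCoals\le t+1$ in this regime, this gives $\numCoals^{\Oh{t}}\cdot 3^{\Oh{t}}\cdot 2^{\numCoals}=2^{\Oh{t\log t}}$ states per node, and hence the claimed running time.
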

\begin{proofsketch}
    For the sake of exposition, we assume that $\numCoals \leq \tw(G)$ in this proof sketch.
    We formulate an explicit bottom-up dynamic programming algorithm over a \emph{nice} tree decomposition~$\mathcal{T}$ of~$G$---a rooted tree decomposition in a standard normal form whose nodes are of four types (leaf, introduce, forget, and join nodes), simplifying the description of the computation; see the textbook of Cygan \emph{et al.}~\cite{cyg-fom-kow-lok-mar-pil-pil-sau:b:parameterized-algorithms} for a formal definition. %

    For every node $x$ of~$\mathcal{T}$, let $\beta(x)$ denote the agents in the bag of~$x$, and let $A^x$ denote the set of agents in the union of all bags in the subtree of~$\mathcal{T}$ rooted at~$x$. We compute a dynamic programming table $\DP_x[p,f,s]$, where
    \begin{itemize*}
        \item[] $p\colon \beta(x) \to [\numCoals]$ is a partition of bag agents between coalitions,
        \item[] $f\colon \beta(x) \to \{ -1, 0, 1 \}$ assigns to each bag agent a \emph{flag} based on their neighborhood, and
        \item[] $s\colon [\numCoals] \to \{0,1\}$ represents whether the coalition $j\in[k]$ is non-empty.
    \end{itemize*}
    We call the triple $(p,f,s)$ a \emph{signature}. Observe that for each node $x$, there are at most $\numCoals^\Oh{\tw}\cdot 3^\Oh{\tw} \cdot 2^\Oh{\numCoals} = 2^\Oh{\tw\cdot\log\tw}$ different signatures. For a signature $(p,f,s)$, the table $\DP_x$ stores \true if there exists a corresponding \emph{partial partition}~$\pttn'$ of $\agents^x$ such that
    \begin{enumerate}[label=\texttt{(\roman*)},left=2pt]
        \item for every agent $a\in\beta(x)$ we have $\pttn'(a) = \pttn'_{p(a)}$, that is, each bag agent is allocated to the correct coalition,\label{thm:FO:kHG:treewidth:FPT:goodCoalition:v0}
        \item for every agent $a\in\beta(x)$ such that $f(a) = -1$ holds that $|\pttn'(a)\cap \enemies_a| \geq 1$ and $|\pttn'(a)\cap\friends_a| = 0$,\label{thm:FO:kHG:treewidth:FPT:hasOnlyEnemy:v0}
        \item for every agent $a\in\beta(x)$ such that $f(a) = 0$ holds that $|\pttn'(a)\cap\enemies_a| = |\pttn'(a)\cap\friends_a| = 0$,\label{thm:FO:kHG:treewidth:FPT:hasNoRelations:v0}
        \item for every agent $a\in\beta(x)$ such that $f(a) = 1$ holds that $|\pttn'(a)\cap\friends_a| \geq 1$,\label{thm:FO:kHG:treewidth:FPT:hasFriend:v0}
        \item for every agent $a\in \agents^x\setminus\beta(x)$ the partition $\pttn'$ is IR, and\label{thm:FO:kHG:treewidth:FPT:forgottenIR:v0}
        \item for every $j\in[k]$ it holds that $|\pttn_j| \geq 1$ if $s(j)=1$ and $|\pttn_j| = 0$ otherwise.\label{thm:FO:kHG:treewidth:FPT:nonEmptyCoals:v0}
    \end{enumerate}
    If no such partial partition exists for $(p,f,s)$, the dynamic programming table stores \false for this signature. %
    
    Assume now that the dynamic programming table is computed correctly for the root node $r$. By definition, the bag of the root node is empty. Therefore, we ask if $\DP_r[\emptyset,\emptyset,\mathbf{1}]$, where $\emptyset$ represents the function with an empty domain and~$\mathbf{1}$ is a constant-$1$ function, is set to \true. If it is the case, then there exists a partition of $\agents^r = \agents$ which is individually rational for every agent $a\in\agents$ by property \ref{thm:FO:kHG:treewidth:FPT:forgottenIR:v0} and all coalitions are non-empty by property \ref{thm:FO:kHG:treewidth:FPT:nonEmptyCoals:v0}. So, the algorithm returns~\Yes. Otherwise, no such partition exists, and, therefore, the algorithm correctly returns~\No.
\end{proofsketch}

\prooftoappendix{thm:FO:k:treewidth:FPT}{\thmFOktreewidthFPT*}{
    We split the algorithm into two different cases based on the relation between $\numCoals$ and $\tw(G)$. Already Fioravantes \emph{et al.}~\cite{FioravantesGMS2026a} noticed that for an additively separable hedonic game  (a class of hedonic games that includes our friends--enemies--neutral model) with underlying relationship graph~$G$, there always exists an IR coalition structure with $\numCoals$ non-empty coalitions, if $\numCoals \geq \tw(G) + 1$ (and $\numCoals \in [\numAgents]$); moreover, such a coalition structure can found in \FPT time parameterized by $\tw(G)$. The argument is based on the relationship between our problem and coloring: whenever graph $G$ is of treewidth at most $\tau$, then it is colorable by $\tau+1$-many colors~\citeapp{Chlebikova2002}. Moreover, such a coloring can be found in time $2^\Oh{\tau \cdot\log\tau}\cdot n$~\citeapp{ArnborgP1989}. This coloring then partitions the agents into independent sets, and we can keep splitting these sets until we achieve the desired number of coalitions. 
    That is, if $\numCoals \geq \tw(G)+1$, we can solve the problem in \FPT time.
    Hence, for the rest of the proof, we assume that $\numCoals \leq \tw(G)$. 

    We formulate an explicit bottom-up dynamic programming algorithm over a nice tree decomposition~$\mathcal{T}$ of~$G$. We can assume that the decomposition is given as part of the input, as if not, we can compute one of width at most $2\cdot\tw(G)$ in~$2^\Oh{\tw}\cdot n$ time using Korhonen's algorithm~\citeapp{Korhonen2021}.

    For every node $x$ of the tree decomposition~$\mathcal{T}$, let $\beta(x)$ denote the agents in the bag of~$x$, and let $A^x$ denote the set of agents in the union of all bags in the subtree of~$\mathcal{T}$ rooted at~$x$. We compute a dynamic programming table $\DP_x[p,f,s]$, where
    \begin{itemize}
        \item $p\colon \beta(x) \to [\numCoals]$ is a partition of bag agents between coalitions,
        \item $f\colon \beta(x) \to \{ -1, 0, 1 \}$ assigns to each bag agent a \emph{flag} based on their neighborhood, and
        \item $s\colon [\numCoals] \to \{0,1\}$ represents whether the coalition $j\in[k]$ is non-empty.
    \end{itemize}
    We call the triple $(p,f,s)$ a \emph{signature}. Observe that for each node $x$, there are at most $\numCoals^\Oh{\tw}\cdot 3^\Oh{\tw} \cdot 2^\Oh{\numCoals} = 2^\Oh{\tw\cdot\log\tw}$ different signatures. For a signature $(p,f,s)$, the table $\DP_x$ stores \true if there exists a corresponding \emph{partial partition}~$\pttn'$ of $\agents^x$ such that
    \begin{enumerate}[label=\texttt{(\roman*)},left=2pt]
        \item for every agent $a\in\beta(x)$ we have $\pttn'(a) = \pttn'_{p(a)}$, that is, each bag agent is allocated to the correct coalition,\label{thm:FO:kHG:treewidth:FPT:goodCoalition}
        \item for every agent $a\in\beta(x)$ such that $f(a) = -1$ holds that $|\pttn'(a)\cap \enemies_a| \geq 1$ and $|\pttn'(a)\cap\friends_a| = 0$,\label{thm:FO:kHG:treewidth:FPT:hasOnlyEnemy}
        \item for every agent $a\in\beta(x)$ such that $f(a) = 0$ holds that $|\pttn'(a)\cap\enemies_a| = |\pttn'(a)\cap\friends_a| = 0$,\label{thm:FO:kHG:treewidth:FPT:hasNoRelations}
        \item for every agent $a\in\beta(x)$ such that $f(a) = 1$ holds that $|\pttn'(a)\cap\friends_a| \geq 1$,\label{thm:FO:kHG:treewidth:FPT:hasFriend}
        \item for every agent $a\in \agents^x\setminus\beta(x)$ the partition $\pttn'$ is IR, and\label{thm:FO:kHG:treewidth:FPT:forgottenIR}
        \item for every $j\in[k]$ it holds that $|\pttn_j| \geq 1$ if $s(j)=1$ and $|\pttn_j| = 0$ otherwise.\label{thm:FO:kHG:treewidth:FPT:nonEmptyCoals}
    \end{enumerate}
    If no such partial partition exists for $(p,f,s)$, the dynamic programming table stores \false for this signature. We use $\pttn^p$ to denote the partition of $\beta(x)$ induced by $p$.

    Assume now that the dynamic programming table is computed correctly for the root node $r$. By definition, the bag of the root node is empty. Therefore, we ask if $\DP_r[\emptyset,\emptyset,\mathbf{1}]$, where $\emptyset$ represents the function with an empty domain and~$\mathbf{1}$ is a constant-$1$ function, is set to \true. If it is the case, then there exists a partition of $\agents^r = \agents$ which is individually rational for every agent $a\in\agents$ by property \ref{thm:FO:kHG:treewidth:FPT:forgottenIR} and all coalitions are non-empty by property \ref{thm:FO:kHG:treewidth:FPT:nonEmptyCoals}. So, the algorithm returns~\Yes. Otherwise, no such partition exists, and, therefore, the algorithm correctly returns~\No.

    We now formally define the computation of the dynamic programming table separately for each type of node of the tree decomposition.

    \proofcase[Leaf]{Node}{}
    Let $x$ be a leaf node, which means that  $\beta(x) = \emptyset$. Hence, we directly determine the value of the table for each signature $(p,f,s)$, with $p$ and~$f$, necessarily having an empty domain, 
    solely on the value of $s$ by setting
    \[
        \DP_x[p,f,s] = \begin{cases}
             \true & \text{if } \forall j\in[\numCoals] \colon s(j) = 0 \text{ and}\\
            \false & \text{otherwise.}
        \end{cases}
    \]
    That is, the stored value is \true if and only if $s$ prescribes all coalitions to be empty.

    \proofcase[Introduce]{Node}{}
    Let $x$ be an introduce node with a single child $y$. By definition, we have $\beta(x) = \beta(y) \cup \{a\}$ for some $a \not\in \beta(y)$. Based on the part the newly introduced agent is placed to, we make sure that the flags of all other members of this part are valid and that $a$'s coalition is not marked as empty. If this is the case, we ask if there is a solution for the same partition where $a$ is removed from their part. Formally, the computation is as follows:
    \begin{gather}
    \raisetag{16pt}
        \DP_x[p,f,s] = \left\{
        \begin{array}{ll}
            \multicolumn{2}{l}{
            \false \qquad  \text{if } s(p(a)) = 0,}
            \\[4pt]
            \multicolumn{2}{l}{
            \false \qquad \text{if } f(a) = -1 \land ((\pttn^p(a) \cap \enemies_a = \emptyset) 
            \lor 
            (\pttn^p(a) \cap \friends_a \neq \emptyset))} 
            \\[4pt]
            \multicolumn{2}{l}{
            \false \qquad \text{if } f(a) = 0 \land ((\pttn^p(a) \cap \enemies_a \neq \emptyset) 
            \lor 
            (\pttn^p(a) \cap \friends_a \neq \emptyset))} 
            \\[4pt]
            \multicolumn{2}{l}{
            \false \qquad \text{if } f(a) = 1 \land (\pttn^p(a) \cap \friends_a = \emptyset)} 
            \\[4pt]
            \multicolumn{2}{l}{
            \false \qquad \text{if } \exists b \in \pttn^p(a) \colon f(b)\neq 1 \land (a \in \friends_b)} 
            \\[6pt]
            \multirowcell{2}{
            \bigvee_{\substack{
                f^y\colon \beta(y)\to\{-1,0,1\}\\
                \forall b\in\beta(y) \setminus(\pttn^p(a) \cap (\friends_a\cup\enemies_a))\colon\\
                f^y(b) = f(b)\\
                \forall b \in \beta(y)\colon \  (f^y(b)=1) \Rightarrow (f(y)=1)\\
                \forall b\in \pttn^p(a), a \in \enemies_b \colon\\
                (f^y(b)\in \{-1,0\}) \Rightarrow (f(b)=-1) 
            }}}
            & 
            \DP_y[ p_{\downarrow \beta(y)}, f^y, s \Join (p(a) \mapsto 0 ) ]  
            \\[2pt]
            & \phantom{xx}\lor\DP_y[ p_{\downarrow \beta(y)}, f^y, s ] \qquad  \text{otherwise.}
            \\[36pt]
        \end{array}
        \right.
        \label{eq:introduce_node_comp}
    \end{gather}
    In the definition of the computation, we use $g_{\downarrow X}$ to denote the restriction of a function~$g$ to the domain~$X$. Moreover, by $g \Join (x \mapsto y)$ we denote the function that coincides with~$g$ on all elements of the domain of~$g$ other than~$x$, and maps~$x$ to~$y$.

    \proofcase[Forget]{Node}{}
    Let $x$ be a forget node with a single child $y$ and let $a\in\beta(y)$ be such that $\beta(x) = \beta(y)\setminus\{a\}$. Here, we check that there is a solution which $a$ considers to be individually rational. This can be read from the dynamic programming table of the child $y$ as follows.
    \begin{equation}
        \label{eq:forget_node_comp}
        \DP_x[p,f,s] = \bigvee_{j\in[\numCoals]} \bigvee_{b\in\{0,1\}} \DP_y[ p \Join ( a \mapsto j ), f \Join ( a \mapsto b ), s ]\,.
    \end{equation}
    The $\Join$ operator is the same as in the case of introduce node, just notice that this time it extends the domain of the function.

    \proofcase[Join]{Node}{}
    A join node $x$ has exactly two children $y$ and $z$ such that $\beta(x) = \beta(y) = \beta(z)$. In the join node, we merge the solutions with the same partition of bag vertices from $y$ and $z$. For this, we need to carefully handle the flags.
    \begin{gather}
        \label{eq:join_node_comp}
        \raisetag{16pt}
        \DP_x[p,f,s] = \bigvee_{\substack{
                f^y,f^z\colon\beta(x)\to\{-1,0,1\}\\
                \forall a\in\beta(x) \text{ s.t. } f(a) = -1\colon f^y(a) + f^z(a) \leq -1\\
                \forall a\in\beta(x) \text{ s.t. } f(a) = 0\colon f^y(a) = f^z(a) = 0\\
                \forall a\in\beta(x) \text{ s.t. } f(a) = 1\colon \max\{f^y(a),f^z(a)\} = 1\\
                s^y,s^z\colon[\numCoals]\to\{0,1\}\\
                \forall j\in[k]\colon\max\{s^y(j),s^z(j)\} = s(j)
            }} 
                \DP_y[ p, f^y, s^y ] \land \DP_z[ p, f^z, s^z ]
    \end{gather}

    \proofsubparagraph{Correctness}
    Now, we formally prove the correctness of the computation of the dynamic programming algorithm. We proceed by mathematical induction over the type of the node of the tree decomposition, and we split the proof into two claims.

    \begin{lemma}
        Let $x$ be a node. If, for some signature $(p,f,s)$, we have $\DP_x[p,f,s] = \true$, then there exists a partial solution $\pttn^x$ corresponding to the signature $(p,f,s)$.
    \end{lemma}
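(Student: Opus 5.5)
We prove the lemma by structural induction over the nice tree decomposition, processing nodes bottom-up. For each node type we assume the statement holds for all children. Given a signature $(p,f,s)$ with $\DP_x[p,f,s]=\true$, we take the child entry (or entries) that made the disjunction true. From the partial solution(s) guaranteed there, we assemble a partial partition $\pttn^x$ of $\agents^x$. We then verify properties \ref{thm:FO:kHG:treewidth:FPT:goodCoalition}--\ref{thm:FO:kHG:treewidth:FPT:nonEmptyCoals} one by one.

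\emph{Leaf node.} Here $s\equiv 0$ is forced and $\agents^x=\emptyset$, so the empty partition works trivially.

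\emph{Introduce node.} Let $a$ be the new agent, and take the witnessing $f^y$ together with child signature $s$ or $s\Join(p(a)\mapsto 0)$. By induction we obtain $\pttn^y$, and we set $\pttn^x=\pttn^y$ with $a$ added to coalition $p(a)$. By the definition of a tree decomposition, every neighbour of $a$ in $G$ that lies in $\agents^x$ lies in $\beta(x)$. Hence adding $a$ changes the friend/enemy counts only for bag agents in $\pttn^p(a)$, and $a$'s own relations within its coalition are exactly $\pttn^p(a)\cap(\friends_a\cup\enemies_a)$. The first four rejecting rows certify that $f(a)$ is correct. The fifth row, together with the constraints on $f^y$, certifies that each affected $b$ moves to the correct flag:
\begin{itemize}
\item $f(b)$ stays equal to $f^y(b)$ if $b$ is unrelated to $a$;
\item $f(b)$ becomes $1$ if $a\in\friends_b$;
\item $f(b)$ becomes $-1$ from $0$ or $-1$ if $a\in\enemies_b$.
\end{itemize}
Forgotten agents are unaffected, so \ref{thm:FO:kHG:treewidth:FPT:forgottenIR} is inherited. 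The coalition-emptiness vector $s$ is correct because $s(p(a))=1$ is required, and the child either already had that coalition non-empty or had it empty.

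\emph{Forget node.} Take the witness $\pttn^y$ for $p\Join(a\mapsto j)$ and $f\Join(a\mapsto b)$ with $b\in\{0,1\}$, and keep $\pttn^x=\pttn^y$. The flag of $a$ in $\{0,1\}$ means that $a$ has a friend, or no enemy and no friend, in its coalition. Since all neighbours of $a$ lie in $\agents^y$, this status is final, so $a$ is IR by \Cref{obs:IR:characterization}(a). The remaining properties carry over directly.

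\emph{Join node.} We take $\pttn^y$ and $\pttn^z$ and merge them coalition-wise: $\pttn^x_j=\pttn^y_j\cup\pttn^z_j$. This is a partition because $\agents^y\cap\agents^z=\beta(x)$, and both witnesses place bag agents according to the same $p$. Emptiness follows from $s=\max(s^y,s^z)$. Agents forgotten below $y$ have no neighbours in $\agents^z\setminus\beta(x)$, so their coalition's intersection with their neighbourhood is unchanged and they remain IR. The same holds for agents forgotten below $z$.

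The main obstacle is the flag combination for bag agents at the join, and I would spend most care here. I would observe that for $a\in\beta(x)$, the friends of $a$ in $\pttn^x(a)$ are the union of those seen in $\pttn^y$ and in $\pttn^z$, and likewise for enemies. The flags then combine as follows:
\begin{itemize}
\item the condition $\max\{f^y(a),f^z(a)\}=1$ yields a friend, so $f(a)=1$ is correct;
\item $f^y(a)=f^z(a)=0$ yields neither friends nor enemies, so $f(a)=0$ is correct;
\item $f^y(a)+f^z(a)\le -1$ means one side is $-1$ and the other is $0$ or $-1$, which yields at least one enemy and no friend, so $f(a)=-1$ is correct.
\end{itemize}
One subtlety remains: relations between two bag agents are counted on both sides. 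This is harmless because the flags are existence statements (at least one, or none) rather than counts, so double counting cannot change them.
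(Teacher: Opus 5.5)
Your proof has the same structure as the paper's: bottom-up induction, the empty partition at leaves, the child's witness with $a$ added to coalition $p(a)$ at introduce nodes, the child's witness unchanged at forget nodes, and the coalition-wise union at join nodes. Your leaf, forget and join cases are correct, and your join-node flag analysis is more explicit than the paper's.

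The step that fails is the introduce-node claim that the constraints on $f^y$ force $f(b)=f^y(b)$ whenever $b$ is unrelated to $a$. Your case split (unrelated, $a\in\friends_b$, $a\in\enemies_b$) is exhaustive only if ``unrelated'' means $a\notin\friends_b\cup\enemies_b$. However, the recurrence exempts from $f^y(b)=f(b)$ the set $\pttn^p(a)\cap(\friends_a\cup\enemies_a)$. That set is indexed by how $a$ views $b$, whereas $b$'s flag depends on how $b$ views $a$. Take $b\in\pttn^p(a)$ with $b\in\friends_a\cup\enemies_a$ but $a\notin\friends_b\cup\enemies_b$. Neither the fifth rejecting row nor the $\enemies_b$-condition applies, so the only remaining constraint is $f^y(b)=1\Rightarrow f(b)=1$. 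Hence, e.g., $f^y(b)\in\{-1,0\}$ with $f(b)=1$ is accepted.

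Concretely, let $k=1$, $\friends_a=\{b\}$, $\enemies_b=\{c\}$, and all other sets empty. Use the nice decomposition of the path $c\,b\,a$ that introduces $c$, then $b$, forgets $c$, and introduces $a$.
\begin{itemize}
\item At bag $\{b\}$, the entry with $f(b)=-1$ is (correctly) \true.
\item The introduce node for $a$ then sets the entry with $f(a)=f(b)=1$ to \true, although $\friends_b=\emptyset$.
\item This entry survives to the root, so the algorithm accepts, although the grand coalition is not IR.
\end{itemize}
So for asymmetric instances the lemma is false for the recurrence as literally written. The paper's ``it is not hard to verify'' hides exactly this point.

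The repair is in the recurrence, not in your strategy. Replace the exemption set by $\{b\in\pttn^p(a)\colon a\in\friends_b\cup\enemies_b\}$, and read the typo $f(y)=1$ as $f(b)=1$. Your three bullets are then precisely the required verification:
\begin{itemize}
\item $f(b)=f^y(b)$ if $a\notin\friends_b\cup\enemies_b$;
\item $f(b)=1$ if $a\in\friends_b$, by the fifth rejecting row;
\item $f^y(b)=1\Rightarrow f(b)=1$ and $f^y(b)\in\{-1,0\}\Rightarrow f(b)=-1$ if $a\in\enemies_b$.
\end{itemize}
For symmetric instances the two exemption sets coincide, so your argument is already complete there.
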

    \begin{claimproof} 
        We use bottom-up induction on~$\mathcal{T}$; hence, we assume that the claim holds for the children of~$x$.

        First, assume that $x$ is a leaf node. Since $\beta(x) = \emptyset$, the functions $p$ and~$f$ have an empty domain. Moreover, if $\DP_x[p,f,s] = \true$, then by the definition of the computation, we have $s = \vec{0}$. If we take $\pttn^x$ so that $\pttn^x_j = \emptyset$ for every $j\in[\numCoals]$, we have a partial partition of $\agents^x$ corresponding to $(p,f,s)$: all coalitions are empty as prescribed by $s$, and all other properties are satisfied trivially.

        Consider the case when $x$ is an introduce node with $y$ as its unique child and $\beta(x)=\beta(y) \cup \{a\}$. 
        Since $\DP_x[p,f,s]=\true$, there exists some flag function~$f^y$ satisfying the conditions of the computation in~(\ref{eq:introduce_node_comp}) and some $s^y$ with $s^y=s$ or $s^y=s \Join (p(a) \mapsto 0)$ such that  $\DP_y[p_{\downarrow\beta(y)},f^y,s^y]=\true$. Let $\pttn^y$ be a partial partition corresponding to $\sigma=(p_{\downarrow\beta(y)},f^y,s^y)$ at~$y$, and let $\pttn$ be the partitioning of~$A^x$ that coincides with~$\pttn^y$ on~$A^y$ and puts $a$ into coalition~$\pttn_{p(a)}$. 
        Then property~\ref{thm:FO:kHG:treewidth:FPT:goodCoalition} clearly holds. Properties~\ref{thm:FO:kHG:treewidth:FPT:hasOnlyEnemy}--\ref{thm:FO:kHG:treewidth:FPT:hasFriend} are satisfied for~$a$ as otherwise the computation would have set $\DP[p,f,s]=\false$. It is not hard to verify that $\pttn^y$ corresponds to~$\sigma$, these properties also hold for all other agents in~$\beta(x)$, due to our conditions on~$f^y$ and on~$f$. 
        Property~\ref{thm:FO:kHG:treewidth:FPT:forgottenIR} remains true for~$\pttn$ trivially. 
        Property~\ref{thm:FO:kHG:treewidth:FPT:nonEmptyCoals} also holds: each coalition other than the~$p(a)$-th is empty in~$\pttn$ if and only if it is empty in~$\pttn^y$, and accordingly, $s$ and~$s^y$ coincide on these values; in addition, the $p(a)$-th coalition is not empty, as it contains~$a$, and indeed, $s(p(a))=1$ is ensured. 
    
        Now, let $x$ be a forget node with a unique child~$y$ and $\beta(x)=\beta(y) \setminus \{a\}$. 
        Since $\DP_x[p,f,s]=\true$, by the computations in~$(\ref{eq:forget_node_comp})$ there exists a coalition $j\in[\numCoals]$ and some $b\in\{0,1\}$ such that $\DP_y[p\Join(a\mapsto j),f\Join(a\mapsto b),s]=\true$. Let $\pttn^y$ be a partial partition corresponding to~$\sigma=(p\Join(a\mapsto j),f\Join(a\mapsto b),s)$ at~$y$. We claim that $\pttn^y$ also corresponds to the signature $(p,f,s)$ at~$x$. We have $\agents^x = \agents^y$, so $\pttn^y$ is  a partition of $\agents^x$.
        Properties~\ref{thm:FO:kHG:treewidth:FPT:goodCoalition}--\ref{thm:FO:kHG:treewidth:FPT:hasFriend}  and~\ref{thm:FO:kHG:treewidth:FPT:nonEmptyCoals} clearly remain true for~$\pttn^y$ with respect to~$\sigma$ as well.
        Property~\ref{thm:FO:kHG:treewidth:FPT:forgottenIR} remains true for~$\pttn^y$ w.r.t. $\sigma$, because $b \in \{0,1\}$ and thus $\pttn^y$ is IR for~$a$, proving the claim.
        
        Finally, let $x$ be a join node with children~$y$ and~$z$. Since $\DP_x[p,f,s]=\true$, there exist signatures $\sigma_y=(p,f^y,s^y)$ and $\sigma_z=(p,f^z,s^z)$
        such that $\DP_y[p,f^y,s^y]=\true$ and
        $\DP_z[p,f^z,s^z]=\true$,
        with the functions $f^y,f^z,s^y$, and~$s^z$ satisfying the conditions of our computation in~(\ref{eq:join_node_comp}). Let $\pttn^y$ and $\pttn^z$ be partial partitions corresponding to $\sigma_y$ at~$y$ and to $\sigma_z$ at~$z$, respectively. Create the partial partition~$\pttn=(\pttn^y_1 \cup \pttn^z_1,\dots,\pttn^y_k \cup \pttn^z_k)$ for~$A^x=A^y \cup A^z$. Note that property~\ref{thm:FO:kHG:treewidth:FPT:goodCoalition}  trivially holds for~$\pttn$, and properties~\ref{thm:FO:kHG:treewidth:FPT:hasFriend}--\ref{thm:FO:kHG:treewidth:FPT:hasOnlyEnemy} also remain true: for agents in~$A^x \setminus \beta(x)$ this holds because $\mathcal{T}$ is a tree-decomposition, and thus all agents in~$A^y \setminus \beta(x)$ is neutral toward all agents in~$A^z \setminus \beta(x)$;
        for agents in~$\beta(x)$ this holds because of our conditions on~$f^z, f^z,$ and~$f$. Property~\ref{thm:FO:kHG:treewidth:FPT:forgottenIR} holds up trivially, and property~\ref{thm:FO:kHG:treewidth:FPT:nonEmptyCoals} holds because a coalition in~$\pttn_j=\emptyset$ for some~$j \in [k]$ if and only if $\pttn^y_j=\pttn^z_j=\emptyset$. This finishes the proof of the claim for join nodes. 

        Hence, by induction, the lemma holds.
     \end{claimproof}

    \begin{lemma}
        Let $x$ be a node. If there is a partial solution $\pttn^x$ corresponding to some signature $(p,f,s)$, then $\DP_x[p,f,s] = \true$.
    \end{lemma}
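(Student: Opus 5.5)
We prove the completeness direction by bottom-up induction over the nice tree decomposition~$\mathcal{T}$, mirroring the soundness lemma. Fix a node~$x$ and a partial solution~$\pttn^x$ corresponding to $(p,f,s)$. In each case we restrict~$\pttn^x$ to the children of~$x$ and read off child signatures that the children's partial solutions satisfy. By the induction hypothesis these entries are \true. We then check that they appear among the terms of the disjunction defining $\DP_x[p,f,s]$.

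\emph{Leaf nodes.} Here $\agents^x=\emptyset$, so property~\ref{thm:FO:kHG:treewidth:FPT:nonEmptyCoals} forces $s\equiv 0$. The table stores \true for exactly this signature.

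\emph{Forget nodes.} Suppose $\beta(x)=\beta(y)\setminus\{a\}$. The same partition $\pttn^x$ is a partial solution at~$y$: put $j=\pttn^x$-index of~$a$, and let $b=1$ if $a$ has a friend in $\pttn^x(a)$ and $b=0$ otherwise. Since $a$ is forgotten, property~\ref{thm:FO:kHG:treewidth:FPT:forgottenIR} says $\pttn^x$ is IR for~$a$. Hence if $a$ has no friend in its coalition, it also has no enemy there, so flag~$0$ is valid. The signature $(p\Join(a\mapsto j),f\Join(a\mapsto b),s)$ therefore occurs in~(\ref{eq:forget_node_comp}).

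\emph{Introduce nodes.} Suppose $\beta(x)=\beta(y)\cup\{a\}$. Restrict $\pttn^x$ to $\agents^y=\agents^x\setminus\{a\}$. Set $s^y=s$ if coalition $p(a)$ stays non-empty after removing~$a$, and $s^y=s\Join(p(a)\mapsto 0)$ otherwise. Define $f^y(b)$ as the true flag of~$b$ in the restricted partition. We must verify that $(p_{\downarrow\beta(y)},f^y,s^y)$ satisfies the side conditions of~(\ref{eq:introduce_node_comp}) and that none of the ``\false'' cases fire.

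\begin{itemize}
\item The ``\false'' cases cannot fire because $f$ is a valid flagging of the full partition, and any neighbour of~$a$ in $\agents^x$ lies in $\beta(x)$.
\item $f^y$ can differ from~$f$ only on neighbours of~$a$ in~$\pttn^p(a)$.
\item Adding~$a$ can only turn a flag~$0$ or $-1$ into~$-1$ (when $a$ is an enemy) or into~$1$ (when $a$ is a friend). This matches the implication conditions in the index set.
\end{itemize}

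One subtlety needs care. The condition written ``$f^y(b)=1\Rightarrow f(y)=1$'' is a typo for $f(b)=1$, and the formula should be read that way. Adding~$a$ never removes a friend, so this monotonicity holds.

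\emph{Join nodes.} Suppose $x$ has children $y,z$. Restrict $\pttn^x$ to $\agents^y$ and $\agents^z$, respectively, and compute the true flags $f^y,f^z$ and non-emptiness vectors $s^y,s^z$ of the two restrictions. By the separation property of tree decompositions, every relation of a bag agent within $\agents^x$ lies within $\agents^y$ or within $\agents^z$. Hence:

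\begin{itemize}
\item an agent has a friend in the union exactly when it has one in some side, so $\max\{f^y,f^z\}=1$ iff $f=1$;
\item flag~$0$ in the union forces flag~$0$ on both sides;
\item flag~$-1$ forces no friend on either side and an enemy on at least one side, so $f^y+f^z\le -1$;
\item $\max\{s^y(j),s^z(j)\}=s(j)$ for every~$j$.
\end{itemize}

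Forgotten agents of each child remain IR, since their coalitions within the union gain no related agents. Thus both child entries are \true and appear in~(\ref{eq:join_node_comp}).

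The main obstacle is the introduce case: matching the exact side conditions of~(\ref{eq:introduce_node_comp}), in particular the directed (asymmetric) relations, where $a\in\friends_b$ or $a\in\enemies_b$ affects $b$'s flag while $b\in\friends_a$ affects $a$'s. We handle this by always choosing $f^y$ as the canonical true flag of the restricted partition, which makes each implication a direct consequence of monotonicity.
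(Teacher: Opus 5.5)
Your plan matches the paper's proof case by case. The paper's $f^y$ in the introduce case is exactly your ``canonical true flag'' of the restriction, and its forget and join cases make the same choices you do. For symmetric relations your argument is complete. The step that does not go through is the one you yourself identify as the main obstacle.

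The first side condition in~(\ref{eq:introduce_node_comp}) allows $f^y(b)\neq f(b)$ only for $b\in\pttn^p(a)\cap(\friends_a\cup\enemies_a)$, i.e., for agents that \emph{$a$} regards as friends or enemies. The canonical flag of~$b$, however, can change only if $a\in\friends_b\cup\enemies_b$, i.e., depending on how \emph{$b$} regards~$a$. Your sentence ``$f^y$ can differ from~$f$ only on neighbours of~$a$ in~$\pttn^p(a)$'' is true only for neighbours in the undirected sense, and then it does not imply the side condition. Monotonicity takes care of the two implications, but not of this equality constraint.

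In fact, for the formula as literally written, the lemma fails on asymmetric instances. Let $y$ introduce~$b$ directly above a leaf and let $x$ introduce~$a$ above~$y$, so $\agents^y=\beta(y)=\{b\}$ and $\beta(x)=\{a,b\}$. Let $a\in\friends_b$ but $b\notin\friends_a\cup\enemies_a$, and take $\numCoals=1$, $p\equiv 1$, $f(a)=0$, $f(b)=1$, $s\equiv 1$. The partition $(\{a,b\})$ corresponds to $(p,f,s)$, and no \false-case of~(\ref{eq:introduce_node_comp}) fires. Yet the side condition forces $f^y(b)=1$, and $\DP_y$ is \false{} whenever $b$ is flagged~$1$, because the computation at~$y$ finds $\pttn^p(b)\cap\friends_b=\emptyset$. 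Hence $\DP_x[p,f,s]=\false$. The paper's proof has the same hole: it simply asserts that the conditions on~$f^y$ and~$f$ hold.

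The repair is of the same kind as your correction of $f(y)$ to $f(b)$: read the exempt set as $\{b\in\pttn^p(a): a\in\friends_b\cup\enemies_b\}$, which coincides with the written one for symmetric relations. With that reading your canonical-flag argument goes through verbatim, and the soundness lemma needs the same correction. You should state this explicitly rather than claim the directed case is already handled. It would also be worth one line in the introduce case noting that forgotten agents of~$y$ have no relation to~$a$, so property~\ref{thm:FO:kHG:treewidth:FPT:forgottenIR} survives the restriction.
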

    \begin{claimproof}
        We use bottom-up induction on~$\mathcal{T}$; hence, we assume that the claim holds for the children of~$x$.
        Let $\pttn^x$ be a partial partition corresponding to  $\sigma=(p,f,s)$. 

        First, let $x$ be a leaf node. Since $\beta(x) = \emptyset$ in this case, the only possible partial solution for $\agents^x$ is the empty partition. 
        That is, necessarily $s(j) = 0$ for every $j\in[\numCoals]$. However, in this case, the dynamic programming table stores \true for $\sigma$. That is, for leaf nodes, the claim holds.

        Consider now an introduce node~$x$ with unique child~$y$ and $\beta(x)=\beta(y) \cup \{a\}$. 
        Let $j \in [k]$ be the coalition for which $\pttn^x(a)=\pttn^x_j$, and let $\pttn^y$ be the partial solution for~$A^y=A^x \setminus \{a\}$ obtained from~$\pttn^x$ by removing~$a$ from~$\pttn^x_j$. If $\pttn^y_j=\emptyset$, then we set~$s^y=s \Join (p(a) \mapsto 0)$, otherwise we set~$s^y=s$. 
        In addition, we define a flag function~$f^y$ as follows: for each $b \in \beta(y)$, we set $f^y(b)=f(b)$ 
        unless $b \in \pttn^x_j$ and either~$b$ has no friends or enemies other than~$a$ in~$\pttn^x_j$, or $b$ has $a$ as their unique friend in~$\pttn^x_j$ but $\enemies_b \cap \pttn^x_j \neq \emptyset$.
        In the former case, we set~$f^y(b)=0$, and in the latter case we set $f^y(b)=-1$.
        It is then straightforward to verify that $\pttn^y$ is a partial partition for~$A^y$ that corresponds to the signature $\sigma'=(p_{\downarrow \beta(y)},f^y,s^y)$. Thus, by induction, we know that $\DP_y[\sigma']$ is set to \true by the algorithm, and hence is considered in~(\ref{eq:introduce_node_comp}) when computing $\DP_x[\sigma]$. Since $\pttn^x$ is a partial partition for~$A^x$ corresponding to~$(p,f,s)$, by properties~\ref{thm:FO:kHG:treewidth:FPT:goodCoalition}--\ref{thm:FO:kHG:treewidth:FPT:hasFriend} and~\ref{thm:FO:kHG:treewidth:FPT:nonEmptyCoals} we know that none of the conditions for setting $\DP_x[\sigma]$ to \false holds. Hence, when considering the signature~$\sigma'$, the algorithm finds that the conditions on~$f^y$ and~$f$ in~(\ref{eq:introduce_node_comp}) hold. Thus, $\DP_x[\sigma]$ is set to \true.

        Next, let $x$ be a forget node with unique child~$y$ and $\beta(x)=\beta(y) \setminus \{a\}$. Let $j \in [k]$ be the coalition such that $\pttn^x(a) = \pttn^x_j$.
        Set $p^y=p\Join (a\mapsto j)$ and $f^y=f\Join(a\mapsto b)$  where $b = 1$ if $\friends_a\cap\pttn^x_j \neq \emptyset$ and $0$ otherwise.
        We now argue that $\pttn^x$ is also a partial partition corresponding to $\sigma^y=(p^y,f^y,s)$ in $y$.
        Property~\ref{thm:FO:kHG:treewidth:FPT:goodCoalition} for~$\sigma^y$ is clearly satisfied by our choice of~$j$. 
        Properties~\ref{thm:FO:kHG:treewidth:FPT:hasOnlyEnemy}--\ref{thm:FO:kHG:treewidth:FPT:hasFriend} remain true for all agents in~$\beta(x)$; it suffices to check it for agent~$a$. Notice that since property~\ref{thm:FO:kHG:treewidth:FPT:forgottenIR} holds for~$\pttn^x$ with respect to~$x$, we know that  
        $\pttn^x$ is IR for~$a$. Thus, either $a$ has a friend in~$\pttn^x_j$, or they have no enemy in~$\pttn^x_j$. Observe that $f^y(a) = 1$ in the first case  and 
        $f^y(a)=0$ in the second case by properties~\ref{thm:FO:kHG:treewidth:FPT:hasFriend} and~\ref{thm:FO:kHG:treewidth:FPT:hasNoRelations} for~$\sigma$, respectively. Hence, $\pttn^y$ satisfies properties~\ref{thm:FO:kHG:treewidth:FPT:hasOnlyEnemy}--\ref{thm:FO:kHG:treewidth:FPT:hasFriend} for~$\sigma^y$. Properties~\ref{thm:FO:kHG:treewidth:FPT:forgottenIR} and~\ref{thm:FO:kHG:treewidth:FPT:nonEmptyCoals} hold for~$\sigma^y$ as well. 
        Thus, by the induction hypothesis, $\DP_y[p^y,f^y,s]$ is set to \true. When computing $\DP_x$ in~(\ref{eq:forget_node_comp}), the algorithm considers $\DP_y[p^y,f^y,s]$, thus $\DP_x[p,f,s]$ is set to \true, as required. %

        Finally, let $x$ be a join node with children~$y$ and~$z$ and $\beta(x)=\beta(y)=\beta(z)$. For both $u \in \{y,z\}$, define the partial partition~$\pttn^u$ by setting $\pttn^u_j=\pttn^x_j \cap A^u$ for each $j \in [k]$.
        Moreover, construct the flag function $f^u$ by setting~$f^u(b)=1$ if $\friends_b \cap \pttn^u(b) \neq \emptyset$, setting $f^u(b)=0$ if $\friends_b\cap \pttn^u(b)=\enemies_b\cap \pttn^u(b)=\emptyset$, and setting $f^u(b)=-1$ otherwise.
        Let us also $s^u(j)=1$ if $\pttn^u_j \neq \emptyset$ and $s^u(j)=0$ otherwise. Then it is clear that $\pttn^u$ is a partial solution for $\sigma^u=(p,f^u,s^u)$ at~$u$.
        Hence, by induction, $\DP_u[(p,f^u,s^u]$ is set to \true for both $u \in \{y,z\}$. Thus, when computing $\DP_x[\sigma]$ in~(\ref{eq:join_node_comp}), the algorithm considers the signatures $\sigma^y$ and~$\sigma^z$, and moreover, it is not hard to see that the functions~$f^y,f^z,s^y$, and~$s^x$ satisfy the conditions in~(\ref{eq:join_node_comp}). Therefore, the algorithm sets $\DP_x[\sigma]$ to \true, finishing the claim for a join node.

        By induction, the lemma holds.
    \end{claimproof}

    \proofsubparagraph{Running time}
    For the running time, it is easy to see that the computation of a single cell in leaf nodes takes $\Oh{\numCoals}$ time. The same running time is required to determine the value of a single cell in forget nodes. More time-consuming are introduce and join nodes. In the former, we try all valid flag functions, and there are $3^{\tw(G)} \in 2^\Oh{\tw(G)}$ many of them. In the latter, we again try all possible flag functions, and additionally, we try all possible size functions. That is, the running time in join nodes is $2\cdot 3^{\tw(G)} \cdot 2\cdot2^{\tw(G)} = 2^\Oh{\tw(G)}$. Overall, there are $n\cdot (\numCoals^{\tw(G)} \cdot 3^{\tw(G)} \cdot 2^{\tw(G)})\in n\cdot \numCoals^\Oh{\tw(G)}$ different cells. Thus, since $\numCoals \leq \tw(G)$, the algorithm runs in $2^\Oh{\tw(G)\cdot\log\tw(G)}\cdot n$ time, which is clearly in \FPT.
}

We can extend our dynamic programming algorithm on the tree-decomposition of the relationship graph~$G$ so that for each coalition that has an agent in the current bag, we store the number of agents contained in a partial solution; this approach yields an \XP algorithm for our size-constrained problem variants with parameter~$k+\tw(G)$. 
\begin{restatable}[\linkproof{thm:FO:SCC:treewidth:XP}]{theorem}{thmFOSCCtreewidthXP}
\label{thm:FO:SCC:treewidth:XP}
    \IRSCCfo is in \XP parameterized by~$k+\tw(G)$.
\end{restatable}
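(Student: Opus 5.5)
We extend the dynamic program from the proof of \Cref{thm:FO:k:treewidth:FPT}, with one change: each signature also records the current size of every coalition. We work over a nice tree decomposition of~$G$ of width~$\Oh{\tw(G)}$, computed by Korhonen's algorithm as before. For a node~$x$, a signature is now a tuple $(p,f,\sigma)$. Here $p\colon\beta(x)\to[\numCoals]$ and $f\colon\beta(x)\to\{-1,0,1\}$ are exactly as in that proof. The vector $\sigma\colon[\numCoals]\to\{0,1,\dots,\numAgents\}$ replaces the emptiness indicator~$s$. The entry $\DP_x[p,f,\sigma]$ is \true if and only if there is a partial partition~$\pttn'$ of~$\agents^x$ that satisfies properties~(i)--(v) of the earlier proof and has $|\pttn'_j|=\sigma(j)$ for every $j\in[\numCoals]$. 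In contrast to \Cref{thm:FO:k:treewidth:FPT}, we cannot first reduce to the case $\numCoals\le\tw(G)$, since the size bounds break the coloring argument. This is why the parameter is $\numCoals+\tw(G)$. The number of signatures per node is at most $\numCoals^{\Oh{\tw}}\cdot 3^{\Oh{\tw}}\cdot(\numAgents+1)^{\numCoals}$, which is polynomial for fixed $\numCoals+\tw(G)$, as an \XP algorithm requires.

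The transitions follow the earlier ones with the bookkeeping on~$s$ replaced by arithmetic on~$\sigma$.
\begin{itemize}
    \item \emph{Leaf nodes} accept only $\sigma\equiv 0$.
    \item \emph{Introduce nodes} apply the same flag-consistency tests as in~(\ref{eq:introduce_node_comp}). They look up the child entry with $\sigma$ decreased by one at coordinate~$p(a)$, rejecting if $\sigma(p(a))=0$.
    \item \emph{Forget nodes} are unchanged apart from carrying~$\sigma$ through.
    \item \emph{Join nodes} need care, because the bag agents are counted in both children. We take a disjunction over pairs $\sigma^y,\sigma^z$ with $\sigma^y(j)+\sigma^z(j)-|p^{-1}(j)|=\sigma(j)$ for all~$j$. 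The flag combination rules are as in~(\ref{eq:join_node_comp}). Enumerating $\sigma^y$ determines~$\sigma^z$, so this costs $(\numAgents+1)^{\numCoals}$ per entry.
\end{itemize}
At the root, whose bag is empty, we accept if and only if some $\sigma$ with $\lb(j)\le\sigma(j)\le\ub(j)$ for all $j\in[\numCoals]$ has $\DP_r[\emptyset,\emptyset,\sigma]=\true$. The problem only asks that the coalition structure have $\numCoals$ parts meeting the bounds, so no separate non-emptiness flag is needed. A coalition with $\lb(j)=0$ is allowed to be empty.

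Correctness is shown by the same two induction lemmas as in the proof of \Cref{thm:FO:k:treewidth:FPT}: soundness (a \true entry yields a corresponding partial partition) and completeness (a partial partition forces its signature to be \true). The flag arguments carry over verbatim. The only new ingredient is checking that the size vector is additive. Introduce nodes add one agent to coalition~$p(a)$. Forget nodes add nothing, because $\agents^x=\agents^y$. At join nodes, $\agents^y\cap\agents^z=\beta(x)$ by the tree-decomposition properties, so inclusion--exclusion gives exactly the correction $|p^{-1}(j)|$. The main obstacle I expect is this join step, together with confirming that the earlier flag rules remain sound; the size bookkeeping itself is routine. The total running time is $\numAgents\cdot\numCoals^{\Oh{\tw}}\cdot 2^{\Oh{\tw}}\cdot(\numAgents+1)^{2\numCoals}$, which is $\numAgents^{\Oh{\numCoals}}\cdot\numCoals^{\Oh{\tw}}$ and hence \XP in $\numCoals+\tw(G)$.
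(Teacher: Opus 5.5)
Your proposal is correct and follows the paper's proof essentially step for step. The paper likewise replaces the emptiness indicator $s$ by a size vector, decrements coordinate $p(a)$ at introduce nodes, leaves forget nodes unchanged, and imposes $s(j)=s^y(j)+s^z(j)-|\{b\in\beta(x)\colon p(b)=j\}|$ at join nodes. It then accepts at the root if and only if some size vector with $\lb(j)\le s(j)\le\ub(j)$ is marked \true, for the same $n^{\Oh{k}}\cdot k^{\Oh{\tw(G)}}$ running time.
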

\prooftoappendix{thm:FO:SCC:treewidth:XP}{\thmFOSCCtreewidthXP*}{
Let us show how we can extend the algorithm presented in \Cref{thm:FO:k:treewidth:FPT} to solve \IRSCCfo.
First, we need to adjust the notion of a signature: we let the function~$s$ take values from~$[\numAgents]$ for each coalition in~$[k]$, representing the size of the coalition in a corresponding partial partition.
More precisely, a partial partition corresponding signature~$(p,f,s)$ must satisfy conditions \ref{thm:FO:kHG:treewidth:FPT:goodCoalition}--\ref{thm:FO:kHG:treewidth:FPT:forgottenIR} and, instead of~\ref{thm:FO:kHG:treewidth:FPT:nonEmptyCoals}, fulfills the following: 
\begin{description}
    \item[\quad \texttt{(vi')}] for every $j \in [k]$ it holds that $s(j)=|\pttn'_j|$.
\end{description}

Next, we give the necessary modifications for the computation of the table~$\DP_x$ for some node~$x$.
The case when $x$ is a leaf node needs no modifications, since every partial partition is empty.

When $x$ is an introduce node, we set $\DP_x(p,f,s)$ to be \false in all cases as described in~(\ref{eq:introduce_node_comp}), and in the remaining case we set $\DP_x(p,f,s)=\bigvee \DP_y[p_{\downarrow \beta(y)},f^y,s']$ where the conditions on~$f^y$ are exactly as in~(\ref{eq:introduce_node_comp}) and $s'(j)$ equals $s(j)$ for all values of~$[k] \setminus \{p(a)\}$, and $s'(p(a))=s(p(a))-1$, capturing the condition that introducing agent~$a$ and putting them into coalition~$p(a)$ increases the size of this coalition by exactly one.

The case when $x$ is a forget node is unchanged. 

The case when $x$ is a join node needs the slight modification that the functions~$s^y$ and~$s^z$ must satisfy the condition that $s(j)=s^y(j)+s^z(j)-|\{b \in \beta(x)\colon p(b)=j\}|$.

Once $\DP_r$ is computed for the root node~$r$, it suffices to check whether $\DP_r(\emptyset,\emptyset,s)=\true$ for some size function~$s$ for which $\lb(j) \leq s(j) \leq \ub(j)$ for each $j \in [k]$.  
It is straightforward to verify the correctness of the thus modified algorithm. To calculate the running time, note that the computation in each node is dominated by the case for a join node, where we need to try all possible flag functions, which means $2^{\Oh{\tw(G)}}$ possibilities, and all possible functions for~$s^y$ and~$s^z$ meeting the conditions. Note that given $p$, we can compute~$s^z$ if we fix~$s^y$; hence, there are $\numAgents^{k}$ possibilities. 
The number of cells to compute is upper-bounded by 
$\numAgents \cdot ( k^{\tw(G)} \cdot 3^{\tw(G)} \cdot \numAgents^{k})$, which yields an overall running time of $n^{\Oh{k}} \cdot k^{\Oh{\tw(G)}}$, which is \XP with respect to parameter~$k+\tw(G)$. In fact, the running time is \FPT with parameter~$\tw(G)$ for each constant value of~$k$.
}

Based on our reductions from the \probName{$q$-Coloring} problem---the generalization of \probName{$3$-Coloring} in which the vertices must be properly colored using $q$~colors---we can show that our algorithm presented in~\Cref{thm:FO:k:treewidth:FPT} has optimal running time, assuming the Exponential Time Hypothesis (ETH)~\cite{imp-pat-zan:j:ETH}. In addition, the same reduction allows us to observe that \IRkfo\ is \Wh[1] when parameterized by the clique-width of~$G$. 
These intractability results
also extend to the \IRBSCfo\ problem, due to known results on the \probName{Equitable $q$-Coloring} problem.

\begin{restatable}[\linkproof{thm:FO:k:treewidth:tight:collapsed}]{theorem}{thmFOktreewidthtightcollapsed}
    \label{thm:FO:k:treewidth:tight:collapsed}%
    \label{thm:FO:k:cliquewidth:Wh:collapsed}%
    \label{thm:FO:BSC:treedepth:Wh:collapsed}
    The followings hold:
    \begin{compactitem}
        \item[(a)] 
        Unless ETH fails, no algorithm solves FO-$\numCoals$-HG in $2^{o(\tw(G)\cdot\log\tw(G))}\cdot n$ time.
        \item[(b)]
        \IRkfo  is \Wh when parameterized by the clique-width of~$G$.
        \item[(c)] 
        \IRBSCfo is \Wh when parameterized by the treewidth of $G$ and the number~$\numCoals$ of coalitions, combined, and \pNPh when parameterized by the clique-width of $G$.
    \end{compactitem}
\end{restatable}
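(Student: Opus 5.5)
All three parts rest on the reduction already used for \Cref{thm:FO:k:fourEnemies:NPh}. Given a graph $H$, we take one agent per vertex, let adjacent vertices be mutual enemies, and give no agent any friends. By \Cref{obs:IR:characterization}(a), a coalition structure is IR exactly when every coalition is an independent set of $H$. Hence IR structures with $k$ non-empty coalitions correspond to proper colourings of $H$ with exactly $k$ non-empty colour classes. For this we assume $|V(H)|\ge q$, which is harmless: a proper $q$-colouring can always be refined into exactly $q$ non-empty classes. Balanced structures correspond to equitable colourings. The relationship graph $G$ is $H$ itself, so its treewidth, clique-width and size are those of $H$, and the construction is a parameter-preserving linear-size reduction.

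\emph{Part (a).} I would start from the known ETH lower bound for \probName{$q$-Coloring} parameterized by treewidth: under ETH there is no $2^{o(\tw\log\tw)}\cdot n^{\Oh{1}}$ algorithm, with $q$ part of the input. Setting $k=q$, an algorithm for FO-$\numCoals$-HG running in $2^{o(\tw(G)\log\tw(G))}\cdot n$ time would contradict this. One subtlety has to be handled. The known hard instances presumably have $q$ roughly equal to the treewidth, whereas for $k\ge \tw+1$ the problem is trivially solvable. So I must verify that the cited lower bound remains valid in the regime $q\le \tw$, or more precisely that it concerns exactly this range. This is the step I expect to need the most care: I have to locate a statement of the lower bound that is compatible with our setting. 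If only a bound for \probName{List Coloring} or a similar variant is available, I would try to simulate the lists by attaching a clique of $q$ friendless agents, which increases the treewidth by at most $q$.

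\emph{Part (b).} \probName{Chromatic Number} is \Wh[1] parameterized by clique-width (Fomin, Golovach, Lokshtanov, Saurabh). I would reduce each instance $(H,q)$ to $(G=H,k=q)$. If $q\ge|V(H)|$ the instance is trivially a yes-instance, so we may assume $q<|V(H)|$. Under this assumption, $H$ has a proper $q$-colouring if and only if it has one with exactly $q$ non-empty classes.

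\emph{Part (c).} I would use the same map applied to equitable colouring. \probName{Equitable Coloring} is \Wh[1] parameterized by treewidth plus number of colours (Fellows et al.), which gives the first claim. It is also \NPh on graphs of bounded clique-width; for example, it is \NPh on cographs or on complements of bipartite graphs, where the clique-width is constant. This gives para-\NPhness for clique-width. The balanced-size constraint of HG-BSC matches the equitable condition exactly. The only remaining check is non-emptiness, which holds whenever $k\le n$, and that is already part of the problem definition.
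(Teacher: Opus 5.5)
Your proposal matches the paper's proof: the paper uses the same reduction (vertices become friendless agents, edges become mutual enmities). It takes the ETH bound for \probName{$q$-Coloring} from Lokshtanov--Marx--Saurabh for (a), \Whness of \probName{Chromatic Number} under clique-width for (b), and for (c) the results of Fellows et al.\ (treewidth plus colours) and \NPhness on bounded clique-width graphs for \probName{Equitable Coloring}; your handling of ``exactly $k$ non-empty coalitions'' is more careful than the paper's. The regime worry in (a) is moot, because the map preserves treewidth and a hypothetical fast algorithm must work on all instances; in (c), only your cograph example is valid, since equitable colouring of complements of bipartite graphs reduces to matching (colour classes have size at most~$2$), and those graphs have unbounded clique-width anyway.
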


\prooftoappendix{thm:FO:k:treewidth:tight:collapsed}{\thmFOktreewidthtightcollapsed*}{
    Let us first prove~(a).
    It is known that \probName{$q$-Coloring} cannot be solved in $2^{o(\tw(H)\cdot\log\tw(H))}\cdot |V(H)|$ time unless ETH fails~\citeapp{LokshtanovMS2018}. For the sake of contradiction, assume that there is an algorithm $\mathcal{A}$ running in $2^{o(\tw(G)\log\tw(G))}\cdot n$ time for \IRkfo. %
    Given an instance $\mathcal{I}=(H,q)$ of \probName{$q$-Coloring}, %
    we construct an equivalent instance~$\mathcal{J}$ of \IRkfo  as in \Cref{thm:FO:k:fourEnemies:NPh} (just starting with \probName{$q$-Coloring} instead of \probName{$3$-Coloring}). Observe that the reduction can be done in linear time and, moreover, the treewidth of~$G$ equals~$\tw(H)$, as the reduction does not change the underlying graph. Then, we use $\mathcal{A}$ to solve the instance $\mathcal{J}$, and return the same response for $\mathcal{I}$. Since $\mathcal{I}$ is equivalent to $\mathcal{J}$, the response is correct. Moreover, the overall running time of such an algorithm for \probName{$q$-Coloring} is $O(|V(H)|)+2^{o(\tw(G)\log\tw(G))}\cdot \numAgents = 2^{o(\tw(H)\log\tw(H))}\cdot |V(H)|$, which contradicts ETH. Hence, such an algorithm $\mathcal{A}$ cannot exist, which finishes the proof.

    Claim~(b) again follows from the reduction from \probName{$q$-Coloring}, presented in \Cref{thm:FO:k:fourEnemies:NPh}. The reduction does not change the underlying graph $G$ and therefore, since \probName{$q$-Coloring} is \Wh with respect to the clique-width~\citeapp{FominGLS2010},  \IRkfo is also \Wh when parameterized by this parameter.

    For claim~(c), the argument is the same as for~(b), we just start with the \probName{Equitable $q$-Coloring} problem, which is \Wh when parameterized by the treewidth plus the number of colors~\citeapp{FellowsFLRSST2011} and \pNPh when parameterized by the clique-width~\citeapp{GomesGS2023}.
}

Motivated by the intractability of \IRBSCfo\ with respect to treewidth, as shown in \Cref{thm:FO:BSC:treedepth:Wh:collapsed}, we consider the vertex cover number of the relationship graph as a stronger parameter, leading to an \FPT algorithm for \IRFSCfo.

\begin{theorem}\label{thm:FO:FSC:vc:FPT}
    When parameterized by $\vc(G)$, \IRFSCfo is in \FPT.%
\end{theorem}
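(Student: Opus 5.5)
Let $U$ be a minimum vertex cover of the underlying undirected graph of $G$, with $|U|=\tau=\vc(G)$. Every agent outside $U$ has all its friends and enemies in $U$, so it is fully described by a \emph{type}: the pair $(\friends_a\cap U,\enemies_a\cap U)$ together with the sets of agents of $U$ that consider $a$ a friend or an enemy. There are at most $2^{\Oh{\tau}}$ types, and agents of the same type are interchangeable. The plan is to guess everything that concerns $U$ and then settle the rest with an integer linear program whose number of variables depends only on~$\tau$; Lenstra's algorithm then solves it in \FPT time.

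\textbf{Guessing the $U$-part.} We first guess the partition of $U$ into at most $\tau$ groups, i.e., which agents of $U$ share a coalition in the solution~$\pttn$. We do not guess the coalition indices. For each agent $u\in U$ that has an enemy in $\pttn(u)$, we guess one \emph{type} of a friend of $u$ in $\pttn(u)$. This friend may also lie in $U$, in which case the partition already certifies it. These guesses give $2^{\Oh{\tau^2}}$ possibilities. Each group is placed in its own "occupied" coalition, and all other coalitions contain no agent of $U$. An agent outside $U$ placed in a coalition with no agent of $U$ is trivially IR. So the remaining task is to decide, for each group $g$ and type $t$, the number $x_{g,t}$ of type-$t$ agents joining group $g$, and then to match the groups with coalition indices of the right sizes.

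\textbf{Constraints.} Fix the guesses.
\begin{itemize}
\item We require $\sum_g x_{g,t}\le n_t$, where $n_t$ is the number of agents of type $t$.
\item A type $t$ may be assigned to $g$ only if every agent of type $t$ is IR there: either it has a friend in $g\cap U$, or it has no enemy in $g\cap U$. This condition depends only on $t$ and $g$, so we set $x_{g,t}=0$ for forbidden pairs.
\item For agents of $U$, the guessed friend-type $t$ of $u$ in group $g$ forces $x_{g,t}\ge1$. If the guess says that $u$ has no enemy in its coalition, then $x_{g,t}=0$ for every type $t$ containing an enemy of $u$, and the groups must also avoid internal enmities for such $u$; the latter is checked directly.
\end{itemize}

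\textbf{Sizes.} The fixed-size requirement is the main obstacle. Group sizes are $|g|+\sum_t x_{g,t}$, and the $k-(\#\text{groups})$ unoccupied coalitions must absorb exactly the leftover agents with their prescribed sizes. Since these leftovers are arbitrary, any leftover count works as long as it equals the total size of the unused coalitions. The difficulty is that $k$ and the sizes are unbounded, so we cannot guess the full assignment. Instead, we guess an injective map $\rho$ from the at most $\tau$ groups to coalition indices. This takes $k^{\tau}$ tries, which is too many. To avoid it, we note that only the multiset of sizes matters and group the indices by target size. For each group we guess nothing about the index; instead we introduce binary variables $y_{g,\sigma}$ over the distinct size values $\sigma$. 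That is not bounded either.

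The fix I would try is to add the constraint $|g|+\sum_t x_{g,t}=S_g$ with $S_g$ an integer variable, and to require that the multiset $\{S_g\}$ is a sub-multiset of the prescribed sizes. For the at most $\tau$ groups, it suffices to guess which groups receive equal sizes and the relative order of the $S_g$, and then enforce via the ILP that, for each value, the number of groups with that size does not exceed its multiplicity. For this I would check a greedy/Hall-type argument: given the ordered group sizes, feasibility reduces to each $S_g$ lying in the size list. I would encode this by guessing, for each group, the rank of its size among the distinct prescribed sizes. The unused coalitions then automatically get exactly $n-\sum_g S_g$ agents, which holds by $\sum_j \ub(j)=n$.

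If rank guessing remains unbounded, an alternative is a direct greedy argument. Since non-$U$ agents of allowed types can be moved freely between a group and the leftover pool only when this keeps the group IR, the set of achievable group sizes is an interval above a minimum, except for the forced friend constraints. This would allow each group to be matched to any prescribed size at least its minimum, giving a bipartite matching check instead of an ILP.
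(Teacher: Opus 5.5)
Your type-based framework is sound as far as it goes. The guesses, the admissibility rule for pairs $(g,t)$ (which also enforces IR of the non-cover agents themselves), and the variables $x_{g,t}$ do decide the instance once the size $S_g$ of the coalition receiving each group~$g$ is fixed. With fixed $S_g$ this is a small ILP, in fact a flow problem, playing the role of the paper's flow network.

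\textbf{The gap.} The proof breaks exactly at the step you flag as the main obstacle: you never give an \FPT way to choose the $S_g$. You give no linear encoding of the requirement that $S_g$ occurs in the prescribed list. Guessing the rank of each $S_g$ among the $d\le k$ distinct prescribed sizes costs $d^{\Oh{\tau}}$ guesses, which is only \XP.

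The fallback is wrong on two counts. First, the achievable sizes of a group are not unbounded above: a group consisting of one friendless cover agent who regards everyone else as an enemy can only be a singleton. Second, and more seriously, groups compete for the same admissible agents, so checking each group separately and then matching groups to sizes accepts no-instances. For example, let $u_1,u_2$ be friendless mutual enemies who both regard $q_1$ and $q_2$ as enemies, let $p_1,p_2$ be neutral to everybody, and require $k=2$ coalitions of size~$3$. Then $\vc(G)=2$, and each of the groups $\{u_1\}$ and $\{u_2\}$ can on its own be completed to size~$3$ with $p_1,p_2$, so a matching of groups to sizes exists. Yet no IR coalition structure exists, since each $u_i$ would need both $p_1$ and $p_2$.

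\textbf{The missing idea.} What you need is a one-sided exchange argument: a group's coalition can always be \emph{shrunk}, though not necessarily grown. Let $m_g$ be $|g|$ plus the number of forced witness types of~$g$. Suppose that in some solution the coalition containing~$g$ has size~$s$, and the coalition at an index of prescribed size~$s'$ with $m_g\le s'<s$ contains no cover agent. Place at that index $g$, one witness of each forced type, and $s'-m_g$ further members of $g$'s coalition. Merge the remaining $s-s'$ agents, all outside the cover, with the former cover-free coalition, and put the result, of size~$s$, at the old index. Both coalitions remain IR. The shrunk one keeps the same cover agents and the witnesses. The merged one contains no cover agent and hence no relationship at all.

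Consequently, take the order of the groups by the sizes of their coalitions as guessed (as you already propose; $\tau!$ choices) and sort the prescribed sizes. You may then assume that each group, taken in this order, receives the first index after the one assigned to the previous group whose prescribed size is at least~$m_g$. This is precisely \Cref{clm:locate_coalition_index}. With the $S_g$ fixed this way, your ILP, or the paper's max-flow computation, finishes the algorithm in \FPT time.
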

\begin{proof}
    Let $M$ denote a vertex cover of~$G$ of minimum size; we can compute~$M$ in $1.25284^{|M|}\cdot(\numAgents+|E|)$ time~\cite{HarrisN2024}. We will assume that there is an IR coalition structure~$\pi$ for our instance.
    We call an agent~$a \in M$  \emph{lonely} in~$\pi$  if $\friends_a \cap \pi(a)=\emptyset$. 
    For each agent~$a \in M$ that is not lonely, we fix some agent~$a_c \in \friends_a \cap \pi(a)$. Let~$A'=\{a_c:c \in M, c$ is not lonely in~$\pi\}$.
    
    Our algorithm makes four guesses in total: the partitioning of~$M$ induced by~$\pi$, the set of lonely agents, an auxiliary graph~$H$, and, later on, an ordering of the parts of the guessed partitioning; the number of possible combinations is bounded in the running-time analysis at the end of the proof.
    We start by guessing the partitioning~$\wt{\pi}$ of~$M$ induced by~$\pi$. Next, for each agent~$a$ in~$M$, we guess whether $a$ is lonely in~$\pi$; let $M'$ denote the agents in~$M$ that are \emph{not} lonely. Then, we guess the following auxiliary graph~$H$ whose vertex set is~$M' \cup A'$ and whose edge set is $\{(c,a_c):c \in M'\}$. 
    For each coalition~$\wt{C} \in \wt{\pi}$, we let $S(\wt{C})=\wt{C} \cup \{a_c:c \in M' \cap \wt{C}\}$. Assuming that our guesses are correct,  we have $S(\wt{C}) \subseteq \pi(\wt{C})$ where $\pi(\wt{C})$ denotes the unique coalition in~$\pi$ that contains all agents in~$\wt{C}$. Although we do not know the identity of the agents in~$S(\wt{C})$, we know the cardinality $|S(\wt{C})|$.

\def\ind{\mathsf{ind}}    
    
    Next, we decide which coalitions we should allocate the agents of~$S(\wt{C})$ to. 
    We may assume that %
    $\ub_i \leq \ub_j$ whenever $1 \leq i \leq j \leq k$.
    Let us now guess the ordering $\wt{C}_1,\dots,\wt{C}_\ell$ of the 
    sets in~$\wt{\pi}$ according to the size of the coalitions they are contained in under~$\pi$, so that 
    $|\pi(\wt{C}_i)| \leq |\pi(\wt{C}_j)|$
    whenever $1 \leq i< j \leq \ell$. 
    Now, we compute a mapping~$\ind:[\ell] \rightarrow [k]$ as follows.
    For $i=1,\dots,\ell$, we  find a coalition index~$\ind(i) \in [k]$ such that
    \begin{enumerate*}[label=(\roman*)]
    \item $\ind(i) > \ind(i-1)$ for every $i>1$,
    \item $\ub_{\ind(i)} \geq |S(\wt{C}_i)|$, and
    \item $\ub_{\ind(i)}$ is as small as possible.%
    \end{enumerate*}

    \begin{restatable}[\linkproof{clm:locate_coalition_index}]{claim}{clmlocatecoalitionindex}
    \label{clm:locate_coalition_index}
        Suppose that an IR coalition structure~$\pi$ exists, and the algorithm made correct guesses with respect to~$\pi$. 
        Then there exists an IR coalition structure~$\pi'$ where, for each $i \in [\ell]$, the coalition $\pi(\wt{C}_i)$ has size~$\ub_{\,\ind(i)}$.
    \end{restatable}
    \prooftoappendix{clm:locate_coalition_index}{\clmlocatecoalitionindex*}{
    For $i=1,\dots,\ell$, we perform the following procedure. 
        
        By the correctness of our guess and our choice of~$\ind(i)$, we know that $\pi(\wt{C}_i)$ is contained in some coalition~$C_i$ of~$\pi$ with size at least~$\ub_{\ind(i)}$. If $|C_i|=\ub_{\ind(i)}$, then  no modification is needed, and we proceed with the next value for~$i$. 
        If $|C_i| > \ub_{\ind(i)}$, then we modify~$\pi$ as follows. Let $k_i \in [k]$ be the coalition index to which $C_i$ is allocated under~$\pi$; then $\ub_{k_i}=|C_i|$.
        
        First, we create a coalition $C^\star_i$ by taking a  subset of~$C_i$ of size $\ub_{\ind(i)}$ that contains $S(\wt{C}_i)$; we place $C^\star_i$ into the coalition corresponding to~$\ind(i)$. Note that $C^\star_i$ is IR, and moreover, this modification ensures the required condition on~$\pi(\wt{C}_i)$, namely, that it has size~$\ub_{\ind(i)}$.
    
        Second, we add the remaining agents of~$C_i$, that is, all agents in $C_i \setminus C^\star_i$, to the coalition $A_i$ of size~$\ind(i)$ that is originally allocated to coalition index~$\ind(i)$ under~$\pi$. Due to the correctness of our guesses on the ordering $\wt{C}_1,\dots,\wt{C}_\ell$ and our previous steps, all agents of~$\wt{C}_1 \cup \dots \cup \wt{C}_{i-1}$ are contained in coalitions of~$\pi$ with indices at most~$\ind(i)-1$, whereas 
        all agents of~$\wt{C}_{i+1} \cup \dots \cup \wt{C}_\ell$ are contained in coalitions with indices at least~$k_i+1>\ind(i)+1$.
        Thus,
        we know that $A_i$ contains no agents of~$M$. Since $C_i \setminus \wt{C}_i$ contains no  agents of~$M$ either, there are no enmities within the coalition $A_i \cup (C_i \setminus \wt{C}_i)$ which is therefore IR, and moreover, has size exactly~$|C_i|$. We allocate this coalition to the coalition index~$k_i$, to which~$C_i$ was allocated under~$\pi$. 

        Performing these modifications for $i=1,\dots,\ell$, we obtain an IR coalition structure that satisfies the conditions of the claim.
    }
    
    Due to \Cref{clm:locate_coalition_index}, we may restrict our search for a solution in which $\wt{C}_i$ is allocated to the coalition with index~$\ind(i)$. 
    To compute a suitable IR coalition structure, 
    we further need to allocate the agents in~$\agents \setminus M$ to the coalition indices so that (i) %
    each non-lonely agent in~$M$ has a friend in their coalition (recall that we do not know the identity of agents in~$A' \setminus M$), 
    (ii) no lonely agent in~$M$  shares their coalition with an enemy, and (iii) the size constraints are respected.
    
    To this end, we construct a flow network~$N$ as follows, with source~$s$ and target~$t$. Let $I=\{\ind(i):i \in [\ell]\}$ be the indices of coalitions that will hold conflicting agents. 
    For all coalition index~$i \in [\ell]$, we set the \emph{demand} of $\ind(i) \in I$ as $\ub_{\,\ind(i)}-|S(\wt{C}_i)|$, and we set the demand of indices $j \in [k] \setminus I$ simply as $\ub_j$.
    
    First, we create an \emph{agent vertex}~$u_a$ for each  agent~$a \in \agents \setminus M$. Second, for each coalition index~$j \in [k]$, we create a \emph{coalition vertex}~$v_j$. 
    Third, we also define a \emph{guard vertex}~$g_{a'}$ for each $a' \in A' \setminus M$. 
    
    There is an arc~$(s,u_a)$ with capacity~$1$ for each agent~$a \in \agents$, an arc~$(v_j,t)$ for each coalition index $j$ whose capacity is  the demand of~$j$. 
    We also add an arc~$(g_a,t)$ for each agent $a \in A' \setminus M$ with capacity~$1$.
    Now, for each $a \in \agents \setminus M$, we add an arc of capacity~$1$ from agent vertex~$u_a$ to each coalition vertex~$v_j$ where either $j \notin~I$ or $j=\ind(i)$ for some $i \in [\ell]$ such that $a \notin \enemies_c$ for any lonely agent~$c \in \wt{C}_i$. 
    We also add an arc of capacity~$1$ from~$u_a$ to $g_{a'}$ for some~$a' \in A' \setminus M$ if $a$ is \emph{suitable for~$a'$}, meaning that $a$ is considered a friend by all agents~$c$ for which $a'=a_c$. Recall that even though we do not know the identity of~$a'$, we can decide for each $a \in A \setminus M$ whether $a$ is suitable for~$a'$ using the guessed auxiliary graph~$H$---hence, our network is well defined. 
    We finish our algorithm by finding a maximum flow in~$N$; if its value is $|\agents \setminus M|$, we return `yes'; otherwise, we proceed with our next set of guesses. 
    \begin{restatable}[\linkproof{clm:solution_equiv_flow}]{claim}{clmsolutionequivflow}
    \label{clm:solution_equiv_flow}
        Suppose there exists an IR coalition structure~$\pi$ that, for each $i \in [\ell]$, allocates all agents in~$S(\wt{C}_i)$ to coalition index~$\ind(i)$, and the algorithm made correct guesses with respect to~$\pi$. Then 
        there exists a flow of size~$|A \setminus M|$ in the network~$N$.
        Conversely, if~$N$ admits a flow of size~$|A \setminus M|$, then there exists an IR coalition structure~$\pi$.
    \end{restatable}
    \prooftoappendix{clm:solution_equiv_flow}{\clmsolutionequivflow*}{
        We prove the two statements separately.

        \begin{claim}
        \label{clm:solution_gives_flow}
            Suppose there exists an IR coalition structure~$\pi$ that, for each $i \in [\ell]$, allocates all agents in~$S(\wt{C}_i)$ to coalition index~$\ind(i)$, and the algorithm made correct guesses with respect to~$\pi$. Then 
            there exists a flow of size~$|A \setminus M|$ in the network~$N$.    
        \end{claim}
        \begin{claimproof}
        Let $\mu(a)$ denote for each agent~$a \in \agents \setminus M$ the coalition index to which $\pi(a)$ is allocated. Due to the correctness of our guesses, we know that for each agent $a' \in A' \setminus M$ there is an arc from~$u_{a'}$ to the corresponding guard vertex~$g_{a'}$; through this arc, we can define a flow of value~$1$ from~$s$ to~$t$.
        Also, each agent~$a \in \agents \setminus (M \cup A')$ must be allocated to some coalition index~$j$ for which $(u_a,v_j)$ is an arc in the network, because $\pi(a)$ contains no lonely  agent~$c \in M$ for which $a \in \enemies_c$, due to the individual rationality of~$\pi$. We can thus define a flow of value~$1$ flowing through this arc from~$s$ to~$t$. Note that this way, the flow value going through some coalition vertex~$v_j$ is exactly its demand, and hence does not violate the capacity bound on the arc~$(v_j,t)$. Since each agent in~$\agents \setminus M$ adds value~$1$ to the constructed flow, we know that its total value is~$|\agents \setminus M|$.
        \end{claimproof}

        \begin{claim}
        \label{clm:flow_gives_solution}
            Suppose that the network~$N$ admits a flow of size~$|A \setminus M|$. Then there exists an IR coalition structure~$\pi$.
        \end{claim}
        \begin{claimproof}
        Clearly, we may assume that the flow is integral.
        First, observe that the total capacity of all arcs entering~$t$ is the total demand of all coalition indices plus~$|A' \setminus M|$, which equals  $|\agents \setminus \bigcup_{i \in [\ell]}S(\wt{C}_i)|+|A' \setminus M|=|\agents \setminus M|$.
        Therefore, a flow of size~$|\agents \setminus M|$ in~$N$ must saturate all arcs entering~$t$ (or leaving~$s$). 
        
        We construct a coalition structure as follows. 
        For each~$i \in [\ell]$, we put $\wt{C}_i$ into the coalition with index~$\ind(i)$. Then, for each $c \in M'$, we let $b(c)$ denote the agent for which the arc~$(u_{b(c)},g_{a_c})$ carries a flow of value~$1$; such an arc exists, because the arc $(g_{a_c},t)$ is saturated. 
        We add $b(c)$ to the coalition containing~$c$. Since $b(c)$ is suitable for $a_c$, we know that $c$ considers~$b(c)$ a friend. Hence, this ensures that each agent in~$M'$ has a friend in their coalition. 
        Finally, for each coalition index~$j \in [k]$ and for each arc~$(u_a,v_j)$ that is used by the flow, we add  agent~$a$ to the coalition with index~$j$. This ensures that all coalitions satisfy the size constraints, and moreover, by the definition of the network ensures that no agent in $M \setminus M'$  shares their coalition with an enemy, guaranteeing the individual rationality of the obtained coalition structure. 
        \end{claimproof}
        
        \Cref{clm:solution_gives_flow,clm:flow_gives_solution} prove the statement.
        }

    \Cref{clm:locate_coalition_index,clm:solution_equiv_flow} show the correctness of our algorithm.%

    \proofsubparagraph{Running time} 
    Computing~$M$ is in \FPT.
    Guessing the partitioning of~$M$ yields $\vc(G)^{\vc(G)}$ possibilities, while guessing the lonely agents yields $2^{\vc(G)}$ possibilities.
    Guessing the auxiliary graph~$H$ results in at most $\vc(G)^{\vc(G)}$ possibilities, since there are at most~$|M|$ agents in~$A'$, and each agent in~$M'$ is connected to exactly one agent in~$A'$ in~$H$. 
    Finally, guessing the ordering of the sets in~$\pi'$ yields at most $\vc(G)!=\Oh{\vc(G)^{\vc(G)}}$ possibilities.
    Hence, the number of possible guesses is $2^{\vc(G) \log (\vc(G))}$. Since for each fixed set of guesses, the bottleneck is to compute a maximum flow with lower bounds, which can be done in polynomial time with standard techniques,  the running time of the presented algorithm is indeed fixed-parameter tractable with parameter~$\vc(G)$.
\end{proof}

\section{Conclusion}

We studied individual rationality in friend-oriented and enemy-oriented hedonic games under four variants of size constraints, charting the boundary between tractable and intractable cases across a range of natural parameters.

Two messages emerge from our results. First, under friend-oriented preferences, it is the enmity structure alone that governs the complexity: whenever enmities are scarce or well-structured, tractability can often be recovered, whereas rich friendship structures never cause hardness on their own. Second, symmetry acts as a genuine computational resource: several polynomial-time solvable cases for symmetric relations become \NPh as soon as asymmetry is allowed.

Beyond resolving the remaining open cases in \Cref{tab:results}---the most intriguing one, in our view, being whether the size-constrained (SCC) variant is fixed-parameter tractable with respect to the vertex cover number of the relationship graph---we see two natural directions for future research. One is to combine the size constraints studied here with more demanding stability notions, such as Nash or core stability, for which even the existence of unconstrained stable outcomes is non-trivial in our preference domain. Another is optimization: among all feasible individually rational coalition structures, find one that maximizes social welfare.

\begin{credits}
\subsubsection{\ackname} This research was supported in part by the National Science Centre, Poland, grant number UMO-2025/58/A/ST6/00371, and co-funded by the European Union under the project Roboprox (reg. no. CZ.02.01.01/00/22\_008/0004590). Ildikó Schlotter is supported by the Hungarian Academy of Sciences under its Momentum Programme (LP2021-2) and its J\'anos Bolyai Research Scholarship.

\end{credits}

\bibliographystyle{splncs04}
\bibliography{references}

\begin{subappendices}
\renewcommand{\thesection}{\Alph{section}}%
\crefalias{section}{appendix}
\crefalias{subsection}{appendix}
\crefalias{subsubsection}{appendix}

\clearpage
\appendixtext

\bibliographystyleapp{splncs04}
\bibliographyapp{references}

\end{subappendices}

\end{document}